\newcommand{\tikzmark}[2]{\tikz[remember picture,baseline,inner sep=0pt,outer sep=0pt,anchor=base] \node (#1) {\ensuremath{#2}};}
\definecolor{MPP_green}{RGB}{0, 108, 102} 
\definecolor{MPP_green_dark}{cmyk}{0.85, 0.17, 0.45, 0.40} 
\definecolor{MPP_green_light}{cmyk}{0.25, 0.0, 1.0, 0.0} 
\definecolor{MPP_blue_dark}{cmyk}{0.89, 0.47, 0.25, 0.51} 
\definecolor{MPP_blue_light}{RGB}{0, 177, 234} 
\definecolor{MPP_orange}{cmyk}{0.0, 0.60, 1.0, 0.0} 
 \newcommand{\nc}{\newcommand}
 \nc{\lb}{\llbracket}
 \nc{\rb}{\rrbracket}
 \nc{\gl}{\llbracket}
 \nc{\gr}{\rrbracket}
 \nc{\del}{\partial}
 \nc{\tri}{\hspace{-3.5pt}\vartriangle\hspace{-3.5pt}}
 \nc{\blacktri}{\blacktriangle}
 \nc{\eq}[1]{\begin{equation}
                      \begin{split} #1 \end{split}
                      \end{equation}}
 \nc{\ov}{\overline}
 \nc{\fa}{\hat}
 \nc{\fb}{\MakeUppercase}
 \nc{\fc}{\tilde }
 \nc{\Lie}{{\cal L}} 
 \nc{\lambdabar}{{\mkern0.75mu\mathchar '26\mkern -9.75mu\lambda}}
\newtheorem{theorem}{Theorem}[section]
\newtheorem{lemma}[theorem]{Lemma}
\numberwithin{equation}{section}
\begin{document}

\vspace*{-1.5cm}
\begin{flushright}
  {\small
  MPP-2024-159\\
  }
\end{flushright}

\vspace{1.5cm}
\begin{center}
  {\Large 
     Spin cobordism and the gauge group of type I/heterotic string theory} 
\vspace{0.4cm}

\end{center}

\vspace{0.35cm}
\begin{center}
{Christian Knei\ss l}
\end{center}

\

\vspace{0.1cm}
\begin{center} 
\emph{
Max-Planck-Institut f\"ur Physik (Werner-Heisenberg-Institut), \\[.1cm] 
   Boltzmannstr. 8, 85748 Garching, Germany 
   \\ [.1cm] \href{mailto:ckneissl@mpp.mpg.de}{ckneissl@mpp.mpg.de} } 
   \\[0.1cm] 
 \vspace{0.3cm} 
\end{center} 

\vspace{0.5cm}

\begin{abstract}
Cobordism offers a unique perspective into the non-perturbative sector of string theory by demanding the absence of higher form global symmetries for quantum gravitational consistency.
In this work we compute the spin cobordism groups of the classifying space of $Spin(32)/\mathbb{Z}_2$ relevant to describing type I/heterotic string theory and explore their (shared) non-perturbative sector. To facilitate this we leverage our knowledge of type I D-brane physics behind the related ko-homology.
The computation utilizes several established tools from algebraic topology, the focus here is on two spectral sequences. 
First, the Eilenberg-Moore spectral sequence is used to obtain the cohomology of the classifying space of the $Spin(32)/\mathbb{Z}_2$ with $\mathbb{Z}_2$ coefficients. This will enable us to start the Adams spectral sequence for finally obtaining our result, the spin cobordism groups. 
We conclude by providing a string theoretic interpretation to the cobordism groups.
\end{abstract}

\newpage

\tableofcontents

\setcounter{tocdepth}{2}

\newpage

\section{Introduction}

By now it is a well-established assertion in the string literature that both the type I and one of the two supersymmetric heterotic string theories in ten dimensions actually share the same gauge group, denoted in the string theory literature as $Spin(32)/\mathbb{Z}_2$, out of the various Lie groups with a  $\mathfrak{so}(32)$ Lie algebra.
While consistency of the heterotic string worldsheet theory immediately singles out this Lie group~\cite{Gross:1984dd},
for type I string theory the story is a bit more intricate as perturbatively the symmetry group is $O(32)/\mathbb{Z}_2$~\cite{Witten:1998cd}.
The story changes when accounting for the D9-branes cancelling the Ramond-Ramond tadpole (as well as the NS-NS tadpole) leading to a SO(32) gauge symmetry group. 
However, this is still not the full picture.

The discovery of S-duality between the $Spin(32)/\mathbb{Z}_2$ heterotic string (HO string) and type I string theory~\cite{Witten:1995ex, Polchinski:1995df}, i.e.\ they provide the weak/strong coupling description of the same complete theory, suggests that both theories have to be built upon the same gauge group~\cite{Witten:1998cd}\footnote{It's quite insightful to contrast S-duality between the two $Spin(32)/\mathbb{Z}_2$ string theories with Montonen-Olive duality \cite{Goddard:1976qe, Montonen:1977sn}, since the strong coupling dual gauge group of the latter duality is the Langlands-dual of the gauge group one started with. Therefore, our physical expectation that this also holds true in string theory matches up nicely with the mathematical fact that $Spin(32)/\mathbb{Z}_2$ is in fact Langlands-self-dual, see for example appendix D of~\cite{Goddard:1976qe}.}.
Subsequent analysis of nonperturbative non-BPS objects in type I showed that they are responsible for amending the perturbative gauge group $O(32)/\mathbb{Z}_2$ to $Spin(32)/\mathbb{Z}_2$ on the type I side~\cite{Witten:1998cd}.

The notation of the gauge group as $Spin(32)/\mathbb{Z}_2$ can be quite confusing as for $n=4k$ (with $k \in \mathbb{N}$) the center $Z$ of the Lie group $Spin(n)$ is $\mathbb{Z}_2 \oplus \mathbb{Z}_2$ and thereby there are three possible quotients. One of them is $SO(n)$, which is sometimes erroneously identified with the gauge group of type I/HO string theory.
The other two quotient groups are in fact isomorphic and are commonly referred to as the $SemiSpin(n)$ group in the mathematical literature, usually abbreviated as $Ss(n)$, which is the notation we will stick to throughout the paper.
Importantly, this group has fundamentally different properties than $SO(n)$. For example, the two groups are of different homotopy type~\cite{BAUM1965305}. A lot of other subtle differences, especially in the context of string dualities, were highlighted in~\cite{McInnes:1999va, McInnes:1999pt}.
In the following we want to actually calculate the cobordism groups relevant to non-perturbative objects in the type I/HO string theory considering the correct gauge group.

\subsection{Cobordism and quantum gravity}

To outline why we are interested in certain cobordism groups let's look at the basics of cobordism theory through the lens of quantum gravity.
The underlying equivalence relation placing two manifolds in the same cobordism class is from a physical point of view nothing else than 
the ability of quantum gravity to change topology\footnote{An excellent summary of the necessity to consider topology change to be a fundamental property of any complete theory of quantum gravity was given in this recent talk by McNamara~\cite{McNamara@Swamplandia}.}.
Additionally, we have the disjoint union adding two different manifolds 
\begin{equation}
    [M] + [N] = [M \cup N]\,,
\end{equation}
which entails that there exists a topology changing process joining two compact manifolds on which our quantum gravity theory is defined. 

Furthermore, we can impose physical restrictions through the structure $\xi$ of
our respective cobordism group $\Omega^{\xi}$, which means that we consider equivalence classes of manifolds which are connected through a cascade of topology changes preserving this structure.
An example, which will play an important role in the following, is spin structure,
which from a physical standpoint originates from requiring fermionic fields (without coupling to an $U(1)$-field or considering time-reversal symmetry) to be well-defined throughout every topology changing operation. 

Finally, we can consider cobordism groups $\Omega^{\xi}(X)$ of a fixed manifold $X$, where we are now only considering the pairs $[M, f]$ and $[N, g]$ with continuous maps $f: M \to X$ and $g: N \to X$ to be cobordant if the property of having a continuous map to $X$ is preserved by the cobordism $W$ between $M$ and $N$ as well. 

Together with the Cobordism Conjecture, which we explain right after, this restriction to fix a kind of ``background" manifold $X$ results in the same pattern one would have expected from conventional dimensional reduction on this ``background" manifold~\cite{Blumenhagen:2022bvh}. In the next section we will take a look at a different class of ``background" manifolds with another distinct physical interpretation. 

Now, suppose one has calculated all of these specific cobordism groups modelling our theory what is the interpretation of these groups?
As part of the Swampland Program~\cite{Vafa:2005ui} (see~\cite{Palti:2019pca, Grana:2021zvf, vanBeest:2021lhn, Agmon:2022thq} for reviews) nontrivial cobordism groups have been linked to  higher-form global symmetries~\cite{McNamara:2019rup}. The idea that global symmetries are incompatible with quantum gravity is a conjecture with a long history~\cite{Misner:1957mt, Banks:1988yz} and has been formalized, e.g.~\cite{Banks:2010zn, Gaiotto:2014kfa}, and strengthened recently via a proof in Anti-de-Sitter space through the AdS/CFT correspondence~\cite{Harlow:2018jwu, Harlow:2018tng}.
The consequence of non-vanishing cobordism groups representing higher form global symmetries
is that the cobordism groups endowed with the full set of restrictions of some complete theory of quantum gravity have to be trivial
\begin{equation}
    \Omega_n^{QG} = 0\,.
\end{equation}
While it might be impractical to determine these final cobordism groups,
it can be very fruitful to compartmentalize this task. To this end we can focus on specific aspects of the theory we are working with thereby exploring how the full theory takes care of these obstacles.
For example one might want to specify certain duality groups as explored in~\cite{Debray:2021vob, Dierigl:2022reg, Debray:2023yrs} or the gauge group of the respective string theory~\cite{Debray:2023rlx, Basile:2023knk} combined with a tangential spin or string structure. 

Moreover, there seems to be a general organisational principle at play here. Namely, the physically sensible structures in string theory seemingly all branching off from the structures organized in the Whitehead tower of the classifying space of the orthogonal group $BO(n)$, which has been discussed a lot more in depth in~\cite{Andriot:2022mri} and also features prominently in the understanding of D-brane and O-plane Wess-Zumino topological couplings from the cobordism perspective of quantum gravity~\cite{Basile:2024}. 

Now, there are two options to get rid off the global symmetry indicated by a non-trivial cobordism group: breaking or gauging.
Breaking it entails including defects such that the modified cobordism group is mapped to the trivialized version:
\begin{equation}
    \Omega_n^{\widetilde{QG}} \to \Omega_n^{\widetilde{QG} + defects} = 0\,,
\end{equation}
where we use the same notation as in~\cite{McNamara:2019rup} to label any incomplete quantum gravity structure leading to a nontrivial cobordism group as $\widetilde{QG}$.
Gauging on the other hand means turning on gauge fields corresponding to the symmetry causing total charge cancellation, i.e.\ only the class with vanishing cobordism invariant(s) $[M] = 0 \in \Omega_n^{\widetilde{QG}}$ is a consistent quantum gravity configuration.
Moreover, triviality of cobordism groups can be utilized to check global anomaly cancellation for the theory at hand, see for example~\cite{Garcia-Etxebarria:2018ajm} for a nice review on this subject, and provide another major motivation to study cobordism groups, also divorced from quantum gravity.  

\subsection{Incorporating gauge groups into cobordism considerations}

As already touched upon in the previous section the concept we want to introduce is classifying spaces $BG$ as ``background gauge" manifolds for the cobordism groups, i.e.\ we would like to study $\Omega_n^{\xi}(X = BG)$.
This is quite distinct from the case of a finite dimensional background manifold $X$, which models dimensional reduction~\cite{Blumenhagen:2022bvh}, as classifying spaces are commonly infinite-dimensional.

More specifically, the map we are fixing to be preserved is a classifying map $M \to BG$,
which assigns a G-principal bundle to the manifolds in our equivalence classes within each cobordism group due to the bijection between the homotopy class of the classifying map and the isomorphism class of numerable G-principal bundles~\cite{dieck2008algebraic}.
From a physics point of view the G-principal bundle is the topological backbone 
of the gauge theory we want to fix. In our case this is the unique gauge group arising in type I/HO string theory.

It should be emphasized that the cobordism groups of some classifying space should be considered a particular, intermediate step in the approximation of the ``final" quantum gravity structure, each step revealing a different aspect of quantum gravity. In the case at hand we gain insight into the D-brane bound states in correspondence with the presence of a Yang-Mills theory, introduced from the type I perspective by the tadpole cancelling stack of 32 D9-branes and one O9-plane.

In what follows we will utilize the Adams spectral sequence to calculate the spin cobordism groups below dimension 13 of the classifying space of $Ss(32)$.
Before we start with the actual spectral sequences let us introduce the mathematical concept of spectral sequences and our main actors, the Adams and Eilenberg-Moore spectral sequence, in section \ref{sec:math_background}, again from the viewpoint of a physicist.
Afterwards we dive into the computation of $\Omega_{n \leq 12}^{Spin}(BSs(32))$ in section \ref{sec:calculation_spin_bordism}. Finally, we interpret the result from a string theoretic point of view, especially in the context of the Cobordism Conjecture, in section \ref{sec:physics_interpretation}. 

\section[Mathematical background on spectral sequences]{\texorpdfstring{Mathematical background on spectral\\ sequences}{Mathematical background on spectral sequences}}
\label{sec:math_background}
Spectral sequences come in a wide variety of variants and applications. So let's start by recollecting the fundamental, shared properties loosely following the nice presentation in~\cite{McCleary_2000}. Since spectral sequences take such an important place in algebraic topology, there are many more excellent textbook accounts, see for example~\cite{hatcher2004spectral, mosher2008cohomology}. For the Adams spectral sequence specifically~\cite{Beaudry:2018ifm} provides a brilliant introduction. For the ample applications of spectral sequences in physics, in particular high energy physics, please take a look at for example~\cite{Witten:1985bt, Stong:1985vj, Freed:2016rqq, Garcia-Etxebarria:2018ajm, Wan:2018bns, Davighi:2020bvi, Davighi:2020uab, Lee:2020ewl, Davighi:2020kok, Debray:2021rik, Lee:2022spd, Blumenhagen:2022bvh, Davighi:2022icj, McNamara:2022lrw, Debray:2023yrs, Debray:2023rlx, Davighi:2023luh, Debray:2023tdd}.

The aim of deploying a spectral sequence is usually to calculate some object $G_{*}$, which in our case will be a (generalized) (co-)homology satisfying all Eilenberg-Steenrod axioms \cite{eilenberg1945axiomatic}  (see e.g.~\cite{may1999concise} for a nice textbook account) with the exception of the dimensionality axiom in the generalized case. Additionally, $G_{*}$ is filtered, meaning that we can define a series of sub-objects organized in the following way:
\begin{equation}
    G_{*} = F_{0} G_{*} \supset \dots \supset F_{n} G_{*} \supset \dots \supset \{0\}\,.
\end{equation}
This filtration can be used to define the so called associated graded vector space as an approximation to $G_{*}$:
\begin{equation}
    E_{p,q} = F_{p} G_{p+q}/\, F_{p+1} G_{p+q}\,.
\end{equation}
Then we can get $G_{r}$ by summing over $p$ and $q$:
\begin{equation}
    G_{r} = \bigoplus_{p+q \,= \,r} E_{p,\,q}\,.
\end{equation}
Now, a spectral sequence consists of a sequence of differential bigraded vector spaces, which means that we have $r \in \mathbb{N}$ bigraded vector spaces, called pages, $E^{r}_{p, q}$. We will even specialize to first quadrant spectral sequences, i.e.\ $p, q \geq 0$, which will be sufficient for our computations. On top these vector spaces have a differential, i.e.\ a linear mapping within a spectral sequence page $d_r: E^{r}_{p,\,q} \to E^{r}_{p-r,\,q+r-1}$ with $d_r \circ d_r = 0$. 
Then we can calculate the next page in the following way:
\begin{equation}
    E^{r+1}_{p,q} = \frac{\text{ker} \,d_r: E^{r}_{p,\, q} \to E^{r}_{p-r,\,q+r-1}}{\text{im} \,d_r: E^{r}_{p+r,\,q-r+1} \to E^{r}_{p,\,q}}\,.
\end{equation}
In certain instances, e.g.\ the Adams spectral sequence for unoriented cobordism~\cite{McNamara:2022lrw} or the Atiyah-Hirzebruch spectral sequence for complex K-theory~\cite{Maldacena:2001xj},
the authors were able to assign a clear physical interpretation to the differentials. 

In the case of the Atiyah-Hirzebruch spectral sequence for complex K-theory they enforce some physical condition, for example the first differential $d_3$ is eliminating groups, which classify charges of type II D-brane configurations not satisfying Freed-Witten anomaly cancellation\cite{Freed:1999vc, Maldacena:2001xj}. 

Coming back to our introduction to spectral sequences, suppose $G_{*}$ has a filtration and we converge to a ``infinity page", which is our approximation to $G_{*}$:
\begin{equation}
    E^{\infty}_{p,\,q} = F_{p} G_{p+q}/\, F_{p+1} G_{p+q}\,.
\end{equation}
Finally, since going through infinitely many pages isn't tractable, we want to focus on spectral sequences collapsing at some page $r=N$, such that $d_r = 0$ for $r \geq N$. Then 
\begin{equation}
    E^{N}_{*,\,*} \cong E^{N+1}_{*,\,*} \cong \dots \cong E^{\infty}_{*,\,*}\,.
\end{equation}
Given, of course, the fixed input by choosing a certain $G_{*}$ we have now reached a stable configuration for the objects classified by $G_{*}$.
From a physics point of view we have taken all possible decay channels (given the input) into account. 
With this short introduction to the concept of a spectral sequence we will now move on to the Adams Spectral sequence, which is the main tool we will be working with.

\subsection{The Adams spectral sequence}

The tool of choice for us to determine $\Omega_n^{Spin}(BSs(32))$ will be the Adams spectral sequence. While this spectral sequence was invented to compute the stable homotopy groups of the spheres, it is also particularly useful for the calculation of certain generalized homology theories of classifying spaces, like connective K- or KO-homology. The connective versions, i.e.\ only non-vanishing for $n \geq 0$, will be denoted by lowercase k- or ko-homology.

In general terms  Adams~\cite{Adams1957/58} devised the following spectral sequence:
\begin{equation}
    E_{2}^{s,t} = Ext^{s,t}_{\mathcal{A}}(H^*(X, \mathbb{Z}_p), \mathbb{Z}_p) \Rightarrow \pi_{t-s}(X)_{\widehat{p}}\,,
\end{equation}
where $\mathcal{A}$ denotes the Steenrod algebra. While we will refrain from giving an extensive introduction to the Steenrod algebra (and its subalgebras), we will collect some useful information about the Steenrod squares and the algebra they generate in the appendix.
Now, it turns out we can just focus just on the 2-torsion part as we will show later on. Here, by utilizing the Anderson-Brown-Peterson decomposition~\cite{ABP} we can express spin cobordism in terms of connective ko-homology at the prime $p = 2$:
\begin{equation}
\label{ABP}
    \Omega_n^{Spin}(X)_{\widehat{2}} = ko_n(X)_{\widehat{2}} \, \oplus \,  ko_{n-8}(X)_{\widehat{2}} \, \oplus \,  ko_{n-10}\langle 2 \rangle(X)_{\widehat{2}} \oplus \,\dots\,.
\end{equation}
This means that as a first step we need to calculate the connective ko-homology of $BSs(32)$.
In particular, the calculation of real connective k-homology $ko_{*}(pt) = \pi_{*}(ko)$ or more generally $ko_{*}(X) = \pi_{*}(ko \wedge X)$ simplifies due to work of Stong~\cite{Stong63}:
\begin{equation}
    H^{*}(ko, \mathbb{Z}_2) \cong \mathcal{A} \otimes_{\mathcal{A}_1} \mathbb{Z}_2\,.
\end{equation}
We get by change of rings, see for example~\cite{Beaudry:2018ifm}:
\begin{equation}
    E_{2}^{s,t} = \text{Ext}^{s,t}_{\mathcal{A}_1}(H^*(X, \mathbb{Z}_2), \mathbb{Z}_2) \Rightarrow ko_{t-s}(X)_{\widehat{2}}\,.
\end{equation}

Unfortunately, the mod 2 cohomology of the classifying space of the 
$SemiSpin$ group $H^*(BSs(4n),\mathbb{Z}_2)$ (specifically for $BSs(32)$) is not fully known. However, since we are only interested in the cobordism groups up to dimension $n=12$, it will suffice to calculate $H^*(BSs(4n),\mathbb{Z}_2)$ up to degree $* \leq 13$. In particular, to set up the second page of the Adams spectral sequence we will need to determine the Steenrod operations within $H^*(BSs(4n), \mathbb{Z}_2)$ in order to get the structure of $H^*(BSs(4n), \mathbb{Z}_2)$ in terms of $\mathcal{A}_{1}$-modules. Before we tackle this question let us introduce a couple more useful concepts related to the Adams spectral sequence, which will be of importance in the following.

\subsubsection*{From \texorpdfstring{$H^*(X,\mathbb{Z}_2)$}{H*(X,Z/2)} to \texorpdfstring{$\mathrm{Ext}^{s,t}_{\mathcal{A}_1}(H^*(X, \mathbb{Z}_2), \mathbb{Z}_2)$}{Ext(H*(X, Z/2), Z/2)} - The second page of the Adams spectral sequence}
Suppose that we have determined the Steenrod algebra for $H^*(X,\mathbb{Z}_2)$. In our case the subalgebra $\mathcal{A}_1$ is sufficient.
Then we can fill the second page of the Adams spectral sequence, which means that we want to determine
$\mathrm{Ext}_{\mathcal{A}_1}^{s,t}(H^*(X,\mathbb{Z}_2), \mathbb{Z}_2))$. Hence, we need to recollect some facts about $\mathrm{Ext}^{s,t}_{\mathcal{R}}(M,N)$ and especially how to get 
$\mathrm{Ext}^{s,t}_{\mathcal{A}_1}(H^{*}(X, \mathbb{Z}_2),\mathbb{Z}_2)$
from the $\mathcal{A}_1$-module structure of $H^{*}(X, \mathbb{Z}_2)$.
Again,~\cite{Beaudry:2018ifm} contains a more comprehensive discussion of this topic as we will keep this part concise.
$\mathrm{Ext}^{s,t}_{\mathcal{R}}(M,N)$ can be understood as equivalence classes of extensions with $s \geq 1$. An element of $\mathrm{Ext}^{s,t}_{\mathcal{R}}(M,N)$ then represents the following extension:
\begin{equation}
\label{extension_Ext}
    0 \to \Sigma^t N \to P_1 \to \dots \to P_s \to M \to 0\,,
\end{equation}
where $\Sigma^t N$ denotes the $t$-th reduced suspension of $N$ and both $M$ and $N$ are $\mathcal{R}$-modules.
Let us first introduce the two most important classes in $\mathrm{Ext}_{\mathcal{A}_1}(\mathbb{Z}_2, \mathbb{Z}_2)$ that are conventionally used to depict the second page $\mathrm{Ext}_{\mathcal{A}_1}^{s,t}(H^*(X,\mathbb{Z}_2), \mathbb{Z}_2))$.
These are $h_0 = \mathrm{Ext}^{1,1}_{\mathcal{A}_1}(\mathbb{Z}_2,\mathbb{Z}_2)$
\begin{equation}
\label{h0}
    0 \to \Sigma \mathbb{Z}_2 \to \Sigma^{-1} H^*(\mathbb{RP}^2, \mathbb{Z}_2) \to \mathbb{Z}_2 \to 0
\end{equation}
and $h_1 = \mathrm{Ext}^{1,2}_{\mathcal{A}_1}(\mathbb{Z}_2,\mathbb{Z}_2)$
\begin{equation}
\label{h1}
    0 \to \Sigma^2 \mathbb{Z}_2 \to \Sigma^{-2} H^*(\mathbb{CP}^2, \mathbb{Z}_2) \to \mathbb{Z}_2 \to 0\,.
\end{equation}
The usual convention is to use coordinates $(t-s, s)$ for all of the pages of the Adams spectral sequence, such that each connective ko-homology group $ko_{t-s}(X)$ can be read off as a column on the infinity page.
Each $\mathbb{Z}_2$ summand within $\mathrm{Ext}_{\mathcal{A}_1}^{s,t}(H^*(X,\mathbb{Z}_2), \mathbb{Z}_2))$ amounts to a dot on the second page,
while $h_0$'s are vertical lines raising $s$ and $t$ by 1 and $h_1$'s are diagonal lines raising $s$ by 1, while $t$ gets raised by 2.
There are two more classes of $\mathrm{Ext}_{\mathcal{A}_1}(\mathbb{Z}_2, \mathbb{Z}_2)$, which are customarily not depicted to avoid cluttering the Adams pages, namely $v$ of degree $(7, 3)$, whose action raises $s$ by 3 and $t-s$ by 4, and $w$ of degree $(12, 4)$ raising $s$ by 4 and $t-s$ by 8.
All of these actions have interpretations, e.g.\ as multiplication by 2 in the case of $h_0$ or multiplication by certain manifolds of appropriate degree in $t-s$. 
We will go into more detail once we actually utilize these properties.

There are now two convenient pathways to determine $\mathrm{Ext}^{s,t}_{\mathcal{R}}(M,N)$: Minimal resolutions and long exact sequences. Since for most of the $\mathcal{A}_1$-modules $\mathcal{M}$ we have encountered in $H^n(BSs(32), \mathbb{Z}_2)$ their $\mathrm{Ext}_{\mathcal{A}_1}^{s,t}(\mathcal{M}, \mathbb{Z}_2)$ can be found explicitly in the literature, we will only explain the method via long exact sequences for the concrete example of $\tilde{R}_2$, which we are going to come across later. A lot more information on minimal resolutions can be found for example in Beaudry-Campbell~\cite{Beaudry:2018ifm}, which we will follow in the presentation of the long exact sequence method.

Below we include a depiction of $\tilde{R}_2$. Since $\mathcal{A}_1$ is only generated by $Sq^1$ and $Sq^2$ the different nodes are connected by just two types of lines: straight ones raising degree by 1 corresponding to $Sq^1$ and curved lines representing $Sq^2$s. \\
\begin{sseqdata}[name= tildeR2, classes = {fill}, no axes,
class pattern = linearnew]


\class[black](0,0)
\class[black](0,1)
\class[black](0,2)
\class[black](0,3)
\class[black](2,3)
\class[black](2,4)
\class[black](2,5)

\structline[black](0,0)(0,1)
\structline[black](0,2)(0,3)
\structline[black, bend left = 30](0,0)(0,2)
\structline[black, in = -160, out = 20](0,1)(2,3)
\structline[black, in = -160, out = 20](0,2)(2,4)
\structline[black](2,3)(2,4)
\structline[black, in = -160, out = 20](0,3)(2,5)

\end{sseqdata}

\begin{figure}[H]
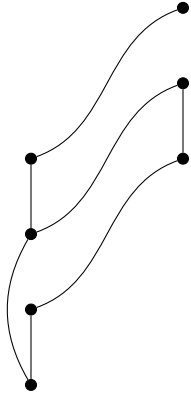

    \centering
    \printpage[ name = tildeR2, page = 2] 
    \caption{The $\mathcal{A}_1$-module - $\tilde{R}_2$}
    \label{fig:A1_module_R2}
\end{figure}

Consider the exact sequence of (in our specific case) $\mathcal{A}_1$-modules:
\begin{equation}
\label{exactseqtildeR2}
   0 \to \Sigma^3 C\eta \to \tilde{R}_2 \to J \to 0\,,
\end{equation}
which we can depict as \newline
\begin{sseqdata}[name= extensiontildeR2, classes = {fill}, no axes,
class pattern = linearnew]


\class[black](0,3)
\class[black](0,5)
\structline[black, bend left = 30](0,3)(0,5)


\class[black](2,3)
\class[black](2,5)

\class[black](3,0)
\class[black](3,1)
\class[black](3,2)
\class[black](3,3)
\class[black](3,4)
\structline[black, bend left = 30](2,3)(2,5)
\structline[black](3,0)(3,1)
\structline[black](2,3)(3,2)
\structline[black, bend left = 30](3,0)(3,2)
\structline[black, bend left = 30](3,2)(3,4)
\structline[black, bend right = 30](3,1)(3,3)
\structline[black](3,3)(3,4)


\class[black](5,0)
\class[black](5,1)
\class[black](5,2)
\class[black](5,3)
\class[black](5,4)
\structline[black](5,0)(5,1)
\structline[black, bend left = 30](5,0)(5,2)
\structline[black, bend left = 30](5,2)(5,4)
\structline[black, bend right = 30](5,1)(5,3)
\structline[black](5,3)(5,4)

\draw [-To, MPP_orange, line width=0.75pt](0,3) -- (2,3);
\draw [-To, MPP_orange, line width=0.75pt](0,5) -- (2,5);
\end{sseqdata}

\begin{figure}[H]
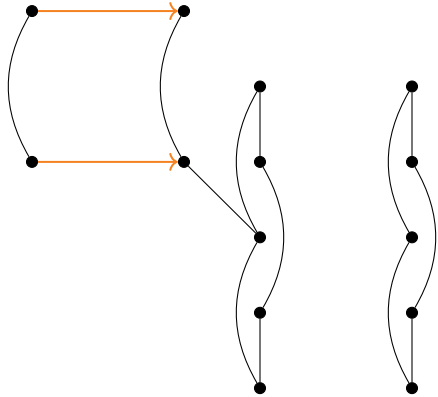

    \centering
    \printpage[ name = extensiontildeR2, page = 2] 
    \caption{Exact sequence for $\tilde{R}_2$}
    \label{fig:exseqR2}
\end{figure}

Now, this exact sequence leads to this dual long exact sequence 
\newpage
\begin{align}
    \dots \longrightarrow &\mathrm{Ext}_{\mathcal{A}_{1}}^{s,t}(J, \mathbb{Z}_2)) \longrightarrow \mathrm{Ext}_{\mathcal{A}_{1}}^{s,t}(\tilde{R}_2, \mathbb{Z}_2)) \longrightarrow \tikzmark{Pos1}{\mathrm{Ext}_{\mathcal{A}_{1}}^{s,t}(\Sigma^3 C\eta, \mathbb{Z}_2))} \\
    & \nonumber \\
    & \nonumber \\
    &\tikzmark{Pos2}{\mathrm{Ext}_{\mathcal{A}_{1}}^{s+1,t}(J, \mathbb{Z}_2))} \longrightarrow \mathrm{Ext}_{\mathcal{A}_{1}}^{s+1,t}(\tilde{R}_2, \mathbb{Z}_2))  \longrightarrow \dots \nonumber
     \begin{tikzpicture}[overlay, remember picture, shorten < = 2.5pt, shorten > = 2.5pt]
    \draw[-to, line width=0.75pt] (Pos1) -- (Pos2) node [pos=.64, above, sloped] (TextNode1) {$\delta$};
     \end{tikzpicture}
\end{align}
By splicing our exact sequence \eqref{exactseqtildeR2} into the extension corresponding to \\
$\mathrm{Ext}_{\mathcal{A}_{1}}^{s,t}(\mathcal{A}_1/\mathcal{E}_1, \mathbb{Z}_2)$ we extend it to
\begin{align}
 0 \to \Sigma^t \mathbb{Z}_2 \to P_1 \to \dots \to &\tikzmark{Pos1}{P_s} \hspace{2.4 cm} \tikzmark{Pos3}{\tilde{R}_2} = P_{s+1} \to J \to 0\,. \\
 & \nonumber \\
 &\qquad \tikzmark{Pos2}{\Sigma^3 C\eta}\nonumber 
 \begin{tikzpicture}[overlay, remember picture, shorten < = 2.5pt, shorten > = 2.5pt]
     \draw[-to, line width=0.75pt] (Pos1) -- (Pos2);
     \draw[-to, line width=0.75pt] (Pos1) -- (Pos3);
     \draw[-to, line width=0.75pt] (Pos2) -- (Pos3);
 \end{tikzpicture}
\end{align}
Therefore, we can understand  $\delta$ as mapping elements in $\mathrm{Ext}_{\mathcal{A}_{1}}^{s,t}(\Sigma^3 C\eta, \mathbb{Z}_2)$ to their boundary representative in $\mathrm{Ext}_{\mathcal{A}_{1}}^{s+1,t}(J, \mathbb{Z}_2)$.
Since we know $\mathrm{Ext}_{\mathcal{A}_{1}}^{s,t}(\Sigma^3 C\eta, \mathbb{Z}_2)$ and $\mathrm{Ext}_{\mathcal{A}_{1}}^{s,t}(J, \mathbb{Z}_2)$, see for example~\cite{Beaudry:2018ifm},
we can get $\mathrm{Ext}_{\mathcal{A}_{1}}^{s,t}(\tilde{R}_2, \mathbb{Z}_2)$ as well. To represent the $\mathrm{Ext}$-functors we transition to the so called Adams charts, which use the same conventions as the second page of the Adams spectral sequence. A key difference is that we want to track the 
effect of our boundary map $\delta$, which can be understood as a differential $d_{1}$. This is, of course, by definition never part of a second page of a spectral sequence. Differentials $d_r$ in an Adams chart (on a page $E_r$ in the spectral sequence) are understood to go from a node $(s, t-s)$ to another one at $(s+r, t-s-1)$. So in our example possible differentials
go from nodes coming from $\Sigma^3 C \eta$ to nodes coming from $J$.
They eliminate all of the nodes connected via a differential. 
Keeping this in mind we get the following Adams chart, where we encircled all of the nodes, which are not affected by a differential: \\
\\
\begin{sseqdata}[
name = extensionR2,
Adams grading, classes = fill,
x range = {0}{13}, y range = {0}{7},
x tick step = 1,
run off differentials = {->},
xscale = 0.75,
class pattern = linearnew
]

\class[MPP_blue_dark, circlen = 2](0,0)
\class[MPP_blue_dark](2,1)
\DoUntilOutOfBoundsThenNMore{2}{
\class[MPP_blue_dark](\lastx,\lasty+1)
\structline[MPP_blue_dark]
}
\class[MPP_blue_dark, circlen = 2](6,2)
\DoUntilOutOfBoundsThenNMore{2}{
\class[MPP_blue_dark](\lastx,\lasty+1)
\structline[MPP_blue_dark]
}
\class[MPP_blue_dark, circlen = 2](7,3)
\structline[MPP_blue_dark](6,2)(7,3)
\class[MPP_blue_dark](8,4)
\structline[MPP_blue_dark](7,3)(8,4)

\class[MPP_blue_dark](10,5)
\DoUntilOutOfBoundsThenNMore{2}{
\class[MPP_blue_dark](\lastx,\lasty+1)
\structline[MPP_blue_dark]
}

\class[MPP_blue_light](3,0)
\d1
\DoUntilOutOfBounds{
\class[MPP_blue_light](\lastx,\lasty+1)
\structline[MPP_blue_light]
\d1
}
\class[MPP_blue_light, circlen = 2](5,1)
\extension[dashed](5,1)(6,2)
\DoUntilOutOfBounds{
\class[MPP_blue_light, circlen = 2](\lastx,\lasty+1)
\structline[MPP_blue_light]
}

\class[MPP_blue_light](7,2)
\d1
\DoUntilOutOfBounds{
\class[MPP_blue_light](\lastx,\lasty+1)
\structline[MPP_blue_light]
\d1
}

\class[MPP_blue_light](9,3)
\d1
\DoUntilOutOfBounds{
\class[MPP_blue_light, circlen = 2](\lastx,\lasty+1)
\structline[MPP_blue_light]
}

\class[MPP_blue_light](11,4)
\d1
\DoUntilOutOfBounds{
\class[MPP_blue_light](\lastx,\lasty+1)
\structline[MPP_blue_light]
\d1
}

\class[MPP_blue_light, circlen = 2](13,5)
\DoUntilOutOfBounds{
\class[MPP_blue_light, circlen = 2](\lastx,\lasty+1)
\structline[MPP_blue_light]
}

\end{sseqdata}

\begin{figure}[H]
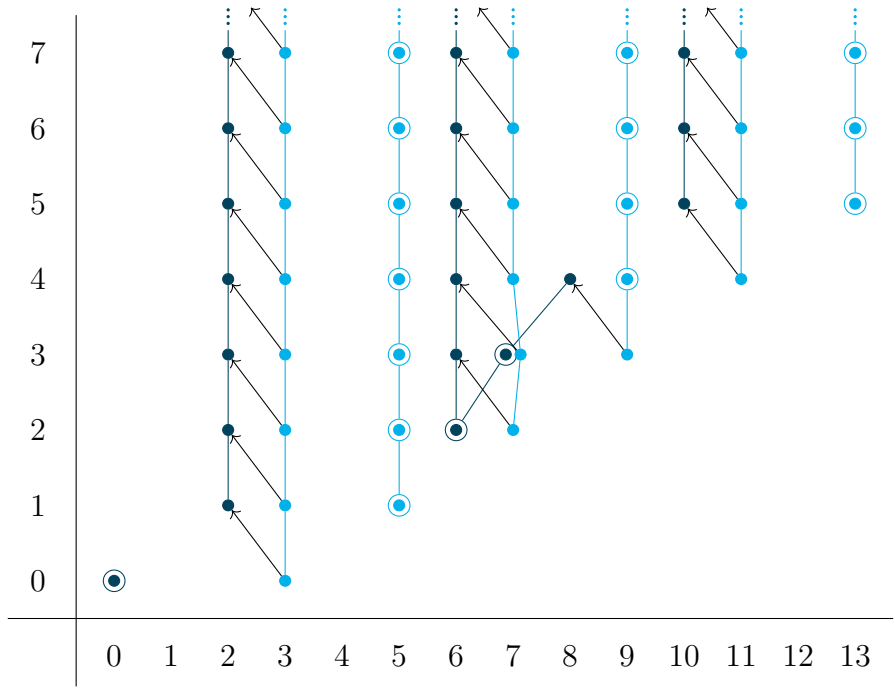

    \centering
    \printpage[ name = extensionR2, page = 1 ]
    \caption{Adams chart for $\tilde{R}_2$ extension including $\delta$}
    \label{fig:R2_Adamschart1}
\end{figure}

For the final result below we added an extension (dashed) not seen through this specific exact sequence. There are two different methods to see this extension. Either by working with minimal resolutions, which can be a bit tedious, or alternatively, we can look at the following exact sequence involving $\tilde{R}_2$: 
\begin{equation}
    0 \to \Sigma^6 \mathbb{Z}_2 \to \mathcal{A}_1 \to \tilde{R}_2 \to 0 \,.
\end{equation}
Since we know the Adams charts for $\mathbb{Z}_2$ and $\mathcal{A}_1$, one can get $\tilde{R}_2$ as well, albeit without the extension problem. 

\begin{figure}[H]
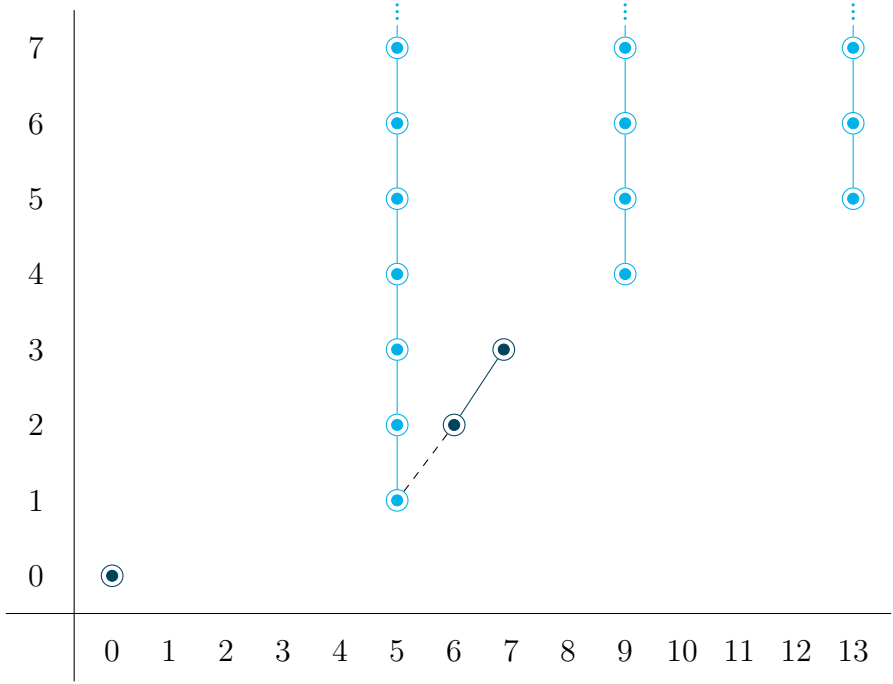

    \centering
    \printpage[ name = extensionR2, page = \infty]  
    \caption{Final Adams chart for $\tilde{R}_2$ extension}
    \label{fig:R2_Adamschart2}
\end{figure}

\subsection{The Eilenberg-Moore spectral sequence}
\label{EilenbergMoore}

Since we need the mod 2 cohomology of the classifying space of the $SemiSpin(32)$ group $BSs(32)$ to fill out the second page of the Adams spectral sequence, which has not been fully determined in the literature yet, we need a tool to compute this mod 2 cohomology from something we already know. Luckily, the so called Eilenberg-Moore spectral sequence~\cite{moore1959algebre, rothenberg1965cohomology, eilenberg1966homology} provides a way to compute the mod 2 cohomology of the classifying space of a compact Lie group $BG$ from the mod 2 cohomology of the compact Lie group:
\begin{equation}
\label{EMSS}
    E_2 = \mathrm{Cotor}_{H^*(Ss(32),\mathbb{Z}_2)}(\mathbb{Z}_2, \mathbb{Z}_2) \Longrightarrow H^*(BSs(n),\mathbb{Z}_2)\,.
\end{equation}
First, let's quickly introduce the $\mathrm{Cotor}$ functor $\mathrm{Cotor}_{R}(M, N)$ following~\cite{may1999concise}.
The construction of the $\mathrm{Cotor}$ functor is similar to the $\mathrm{Ext}$ functor, which we have just encountered as a primary ingredient of the Adams spectral sequence.
We start with the cotensor product for a right comodule A and a left comodule B over C. With a coalgebra $C$ we define the cotensor product between $A$ and $B$ as:
\begin{equation}
    A \,\square_C \, B \coloneqq ker(\phi_{A} \otimes id_{B} - id_{A} \otimes \phi_{B})\,,
\end{equation}
where $\phi_{A}: A \to A \otimes C$ and $\phi_{B}: B \to C \otimes B$ are the respective comultiplications.
Then considering the following injective resolution for A with comodules $A_{*}$ over C
\begin{equation}
    A \to A_0 \to A_1 \to \dots A_{n} \to \dots
\end{equation}
we can introduce the following series of cotensor products:
\begin{equation}
    A_{0} \, \square_C \, B \to A_{1} \, \square_C \, B \to \dots \to A_{n} \, \square_C \, B \to \dots
\end{equation}
Finally, the $\mathrm{Cotor}$ functor is now the cohomology of this resolution
\begin{equation}
    \mathrm{Cotor}^{n}_{C}(A, B) = H^{n}(A_{0} \, \square_C \, B \to A_{1} \, \square_C \, B \to \dots \to A_{n} \, \square_C \, B \to \dots)\,.
\end{equation}
For the spectral sequence itself there arises another index $q$ from the fact that $A$, $B$ and $C$ can be graded, such that $\bigoplus_q \mathrm{Cotor}_{C}^{p, q}(A, B) = \mathrm{Cotor}_{C}^{p}(A, B)$. In our specific spectral sequence \eqref{EMSS} $C = H^*(Ss(32),\mathbb{Z}_2)$ introduces the grading.
While the Eilenberg-Moore spectral sequence is a fair bit more general, we will focus on a particular application, which was expounded
in~\cite{rothenberg1965cohomology}. Rothenberg and Steenrod proved that there is a convergent spectral sequence:
\begin{equation}
\label{EMSS2}
    E_2 = \mathrm{Cotor}_{H^*(G,\mathbb{Z}_2)}(\mathbb{Z}_2, \mathbb{Z}_2) \Longrightarrow H^*(BG,\mathbb{Z}_2)\,.
\end{equation}
The underlying idea here is that we can build an injective resolution for a classifying space $BG$ through the group $G$ itself, by constructing $G$-invariant closed subspaces of $BG$.
We will close our short introduction to the Eilenberg-Moore spectral sequence here and point to chapter 7 and 8 in~\cite{McCleary_2000} for a much deeper discussion of the topic.

\section{The calculation of \texorpdfstring{$\Omega_n^{Spin}(BSs(32))$}{spin cobordism of BSs(32)}}
\label{sec:calculation_spin_bordism}
After our short introduction to the main mathematical tools we are going to use we will now work through the computation of $\Omega_n^{Spin}(BSs(32))$. 
This section is divided into the two main steps of the calculation: Determining $H^n(BSs(4n),\mathbb{Z}_2)$ via the Eilenberg-Moore spectral sequence and then followed up by computing $\Omega_n^{Spin}(BSs(32))$.
\subsection{Determining \texorpdfstring{$H^n(BSs(4n),\mathbb{Z}_2)$}{Hn(BSs(4n),Z/2)}}

To access the second page of the Adams spectral sequence we will now partially follow and extend the calculation of $H^{* \leq 11}(BSs(n),\mathbb{Z}_2)$ demonstrated by~\cite{TachikawaMO}. In particular the computation exploited the fact that while we do not know $H^n(BSs(n),\mathbb{Z}_2)$, we actually do know $H^n(Ss(n),\mathbb{Z}_2)$~\cite{Ishitoya:1976pf}.  
The aforementioned calculation consists of two subsequent spectral sequences. First, the May spectral sequence~\cite{may1964cohomology} sets up the next spectral sequence
\begin{equation}
    \mathrm{Cotor}_{A'}(\mathbb{Z}_2, \mathbb{Z}_2) \Longrightarrow \mathrm{Cotor}_{H^*(Ss(32), \mathbb{Z}_2)}(\mathbb{Z}_2, \mathbb{Z}_2)\,,
\end{equation}
where $A'$ is is a Hopf algebra such that it is isomorphic as an algebra with $A'$ such that every generator is primitive.
Then, the result is fed as the second page into the Eilenberg-Moore spectral sequence \eqref{EilenbergMoore}~\cite{moore1959algebre, rothenberg1965cohomology, eilenberg1966homology}
\begin{equation}
    E_2 = \mathrm{Cotor}_{H^*(Ss(32), \mathbb{Z}_2)}(\mathbb{Z}_2, \mathbb{Z}_2) \Longrightarrow H^*(BSs(n),\mathbb{Z}_2)\,.
\end{equation}
However, we will follow a different route to determine $\mathrm{Cotor}_{H^*(Ss(32),\mathbb{Z}_2)}(\mathbb{Z}_2, \mathbb{Z}_2)$, namely the ``twisted tensor product" method of~\cite{Kono75}.

We start with $H^*(Ss(32), \mathbb{Z}_2)$ as a Hopf algebra as determined in~\cite{Ishitoya:1976pf}.
Up to degree $n = 15$ $H^n(Ss(n),\mathbb{Z}_2)$ the authors showed that as an algebra it is isomorphic to
\begin{equation}
    \Delta(w_{3}, w_{5}, w_{6}, w_{7}, w_{9}, w_{10}, w_{11}, w_{12}, w_{13}, w_{14}) \otimes \mathbb{Z}_2[\bar{v}]\,.
\end{equation}
Importantly, there are a couple nontrivial coproducts in the degree range we are interested in:
\begin{equation}
    \bar{\psi}(w_7) = \bar{v} \otimes w_6 + \bar{v}^2 \otimes w_5 + \bar{v}^4 \otimes w_3\,,
\end{equation}
\begin{equation}
    \bar{\psi}(w_{11}) = \bar{v} \otimes w_{10} + \bar{v}^2 \otimes w_9 + \bar{v}^8 \otimes w_3\,,
\end{equation}
\begin{equation}
    \bar{\psi}(w_{13}) = \bar{v} \otimes w_{12} + \bar{v}^4 \otimes w_9 + \bar{v}^8 \otimes w_5\,,
\end{equation}
\begin{equation}
    \bar{\psi}(w_{14}) = \bar{v}^2 \otimes w_{12} + \bar{v}^4 \otimes w_{10} + \bar{v}^8 \otimes w_6\,.
\end{equation}

As our first step of determining $\mathrm{Cotor}_{H^*(Ss(32), \mathbb{Z}_2)}(\mathbb{Z}_2, \mathbb{Z}_2)$, we define the $\mathbb{Z}_2$-submodule $L$ of $H^*(Ss(n),\mathbb{Z}_2)$ generated by 
\begin{equation}
    \{\bar{v}, \bar{v}^2, \bar{v}^4, \bar{v}^8, w_{3}, w_{5}, w_{6}, w_{7}, w_{9}, w_{10}, w_{11}, w_{12}, w_{13}, w_{14}\}\,.
\end{equation}
Then we have the projection $\theta: H^*(Ss(32),\mathbb{Z}_2) \to L$, the inclusion $\iota: L \to H^*(Ss(32),\mathbb{Z}_2)$ and the suspension $s$ uplifting the elements of $L$ to:
\begin{equation}
    sL = \{a_2, a_3, a_5, a_9, b_4, b_5, b_{7}, c_{8}, b_{10}, b_{11}, c_{12}, b_{13}, c_{14}, c_{15}\}\,.
\end{equation}
Now we extend the maps $\theta$ and $\iota$ to $\bar{\theta} = s \circ \theta: H^*(Ss(32),\mathbb{Z}_2) \to sL$ and $\bar{\iota} = \iota \circ s^{-1}$.
Next, we construct $\bar{X}$ as $\bar{X} \coloneqq T(sL)/I$, where $T(sL)$ is the tensor algebra with the natural product $\psi$ and $I$ is the two-sided ideal of $T(sL)$ generated by $Im(\psi \circ (\bar{\theta} \otimes \bar{\theta}) \circ \phi) \circ Ker(\bar{\theta})$.
Consequently, $\bar{X}$ is given by
\begin{equation}
    \bar{X} = \mathbb{Z}_2[a_{i}, b_{j}, c_{k}]/I\,,
\end{equation}
where $I$  is generated by
\begin{align}
    &[a_i, a_j], \,[b_i, b_j], \,[a_i, b_j], \,[b_i, c_j], \,[a_i,c_j] \text{ for } (i,j) \in \{(5,12),(3,14),(2,15)\} \nonumber\\
    &[a_2,c_8] + a_3 b_7, \,[a_3,c_8] + a_5 b_6, \, [a_5,c_8] + a_9 b_4, \nonumber\\
    &[a_2,c_{12}] + a_3 b_{11}, \,[a_3,c_{12}] + a_5 b_{10}, \\
    &[a_2,c_{14}] + a_3 b_{13}, \,[a_5,c_{14}] + a_9 b_{10}, \nonumber\\
    &[a_3,c_{15}] + a_5 b_{13}, \,[a_5,c_{15}] + a_9 b_{11},\nonumber
\end{align}
[,] denotes the commutator.
From here we construct our twisted tensor product. Now, we define a differential
$\bar{d}$ as a map $\bar{d} = \psi \circ (\bar{\theta} \otimes \bar{\theta}) \circ \phi\, \bar{\iota}: sL \to T(sL)$ that is uniquely extended to $\bar{d}: T(sL) \to T(sL)$ with $\bar{d}(I) \subset I$, such that $\bar{X}$ becomes a differential algebra. 

Consequently, following~\cite{Kono76} we construct through the triviality of $\bar{d} \circ \bar{\theta} + \psi \circ (\bar{\theta} \otimes \bar{\theta}) \circ \phi = 0$ a twisted tensor product $W = H^*(Ss(32),\mathbb{Z}_2) \otimes \bar{X}$ with respect to $\bar{\theta}$. $W$ is a differential $H^*(Ss(32),\mathbb{Z}_2)$-comodule with the differential:
\begin{equation}
    d = 1 \otimes \bar{d} + (1 \otimes \psi) \circ (1 \otimes \bar{\theta} \otimes 1) \circ (\phi \otimes 1)
\end{equation}
with the product $\psi$ over $T(sL)$ and $\phi$ the product over $H^*(Ss(32),\mathbb{Z}_2)$.
The differential acts in the following way on the elements:
\begin{align}
    dw_i &= a_{i+1} \text{ for all } i \text{ except for }  i = \{7, 11, 13, 14\}\,,\nonumber\\ 
    d\bar{v}^{j} &= b_{j+1} \text{ for } j = \{1, 2, 4, 8\}\,, \nonumber\\
    dw_7 &= c_8 + \bar{v} \otimes a_7 + \bar{v}^2 \otimes a_6 + \bar{v}^4 \otimes a_4\,, \nonumber\\
    dw_{11} &= c_{12} + \bar{v} \otimes a_{11} + \bar{v}^2 \otimes a_{10} + \bar{v}^8 \otimes a_4\,, \\
    dw_{13} &= c_{14} + \bar{v} \otimes a_{13} + \bar{v}^4 \otimes a_{10} + \bar{v}^8 \otimes a_6\,, \nonumber\\
    dw_{14} &= c_{15} + \bar{v}^2 \otimes a_{13} + \bar{v}^4 \otimes a_{11} + \bar{v}^8 \otimes a_7\,. \nonumber
\end{align}
As a result, we get the following action on $\{a_i, b_j, c_k\}$:
\begin{align}
    da_i = \bar{d}a_i &= 0\,,\nonumber\\ 
    db_j = \bar{d}b_j &= 0\,,\nonumber\\
    dc_8 = \bar{d}c_8 &= b_2 a_7 + b_3 a_6 + b_5 a_4\,, \nonumber\\
    dc_{12} = \bar{d}c_{12} &= b_2 a_{11} + b_3 a_{10} + b_9 a_4\,, \\
    dc_{14} = \bar{d}c_{14} &= b_2 a_{13} + b_5 a_{10} + b_9 a_6\,, \nonumber\\
    dc_{15} = \bar{d}c_{15} &= b_3 a_{13} + b_5 a_{11} + b_9 a_7\,. \nonumber
\end{align}
Consequently, in accordance to the procedure outlined by~\cite{Kono75, Kono76} we define weights in $W$, which in our case is 1 for the ``pairs" with respect to the suspension $(w_7,\,c_8),\,(w_{11},\, c_{12}),\, (w_{13},\, c_{14}),\,(w_{14},\, c_{15})$ and zero for all the other pairs.

This allows us to define a filtration with respect to these weights: 
\begin{equation}
   F_r = \{x \in W \text{ with weight } \leq r\}\,. 
\end{equation}
Then we have essentially achieved the same as a spectral sequence and we define $E_{\infty}(W) = \sum_r F_r/F_{r-1}$. The point is that since $\bar{d}(F_r) \subset F_{r}$ the homology groups of $E_{\infty}(W)$ vanish, i.e.\ $E_{\infty}(W)$ is acyclic. Thus, $W$ itself is acyclic, as well. Then $W = H^*(Ss(32),\mathbb{Z}_2) \otimes \bar{X}$ is an acyclic injective comodule resolution of $\mathbb{Z}_2$ over $H^*(Ss(32),\mathbb{Z}_2)$, which is reminiscent of the definition we gave for the $\mathrm{Cotor}$ functor.
The point is that the cohomology of $\bar{X}$ together with the map $\bar{d}$ gives us the $\mathrm{Cotor}$ functor:
\begin{equation}
    \mathrm{Cotor}^{H^*(Ss(32),\mathbb{Z}_2)}(\mathbb{Z}_2, \mathbb{Z}_2) \cong H(\bar{X}:\,\bar{d}) = ker(\bar{d})/im(\bar{d})
\end{equation}
in the notation of\cite{Kono76}.
Therefore, we now get:
\begin{align}
\label{Cotor_HSs(n)}
   \mathrm{Cotor}^{H^*(Ss(32),\mathbb{Z}_2)}(\mathbb{Z}_2, \mathbb{Z}_2) \cong \,&\mathbb{Z}_2\left[\bar{x}_2, \bar{x}_3, \bar{x}_5, \bar{x}_9, \bar{y}_4, \bar{y}_6, \bar{y}_7, \bar{y}_{10}, \bar{y}_{11}, \bar{y}_{13}\right]/ \\
    &\big(\bar{x}_2 \bar{y}_7 + \bar{x}_3 \bar{y}_6 + \bar{x}_5 \bar{y}_4, \,\bar{x}_2 \bar{y}_{11} + \bar{x}_3 \bar{y}_{10} + \bar{x}_9 \bar{y}_4, \,\nonumber \\
    &\bar{x}_2 \bar{y}_{13} + \bar{x}_5 \bar{y}_{10} + \bar{x}_9 \bar{y}_6,\,
     \bar{x}_3 \bar{y}_{13} + \bar{x}_5 \bar{y}_{11} + \bar{x}_9 \bar{y}_7\big)\,. \nonumber
\end{align}
With $\mathrm{Cotor}^{H^*(Ss(32),\mathbb{Z}_2)}(\mathbb{Z}_2, \mathbb{Z}_2)$ determined we can proceed with the main goal of this section resolving the Eilenberg-Moore spectral sequence \eqref{EilenbergMoore}.
Actually, we can make the task a lot easier as all the generators can be written as Steenrod squares acting on either $\bar{x}_2$ or $\bar{y}_4$:
\begin{align}
    &\bar{x}_3 = Sq^1 \bar{x}_2, \,\bar{x}_5 = Sq^2 Sq^1 \bar{x}_2, \,\bar{x}_9 = Sq^4 Sq^2 Sq^1 \bar{x}_2, \nonumber \\
    &\bar{y}_6 = Sq^2 \bar{y}_4, \, \bar{y}_7 = Sq^3 \bar{y}_4, \, \bar{y}_{10} = Sq^4 Sq^2 \bar{y}_4, \\ 
    &\bar{y}_{11} = Sq^5 Sq^2 \bar{y}_4, \, \bar{y}_{13} = Sq^6 Sq^3 \bar{y}_4. \nonumber
\end{align}
Since $\bar{x}_i \in E_2^{1, i-1}$ and $y_j \in E_2^{1, j-1}$ we can see that there are no nontrivial differentials acting on $\bar{x}_2$ and $\bar{y}_4$:
\begin{align}
    d_r(\bar{x}_2): E_{r}^{1,1} \to E_{r}^{1+r,1-(r-1)} = 0\,, \\
    d_r(\bar{y}_4): E_{r}^{1,3} \to E_{r}^{1+r,3-(r-1)} = 0\,.
\end{align}
As a consequence all of the elements up to the degree we are studying are actually permanent cycles as both $\bar{x}_2$ and $\bar{y}_4$ are and subsequently every element related by a cohomology operation (in this case Steenrod squares) to a permanent cycle. Therefore, our spectral sequence collapses.
Now, proceeding to $H^*(BSs(n),\mathbb{Z}_2)$, let's define the elements. Analogous to~\cite{TachikawaMO} we choose $y_4$ instead of $y_4 + x_2^2$ as the representative of $\bar{y}_4$ besides $x_2$ as the representative of $\bar{x}_2$, the other representatives
in the same order as in \eqref{Cotor_HSs(n)}:
\begin{align}
    &x_3 = Sq^1 x_2, \,x_5 = Sq^2 x_3, \,x_9 = Sq^4 x_5, \nonumber \\
    &y_6 = Sq^2 y_4, \, y_7 = Sq^1 y_6, \, y_{10} = Sq^4 y_6, \\ 
    &y_{11} = Sq^1 y_{10}, \, y_{13} = Sq^2 y_{11}. \nonumber
\end{align}
At this point let us denote the action of the Steenrod squares, which is going to be very important for the Adams spectral sequence: \\
\begin{center}
\label{HBSs(4n)_Sqkaction}
\captionsetup{type=table}
\begin{tabular}{c|c|c|c|c}\toprule
    &$Sq^1$ & $Sq^2$ & $Sq^3$ & $Sq^4$ \\\bottomrule 
    $x_2$ &$x_3$ & $x_2^2$ &  &  \\
    $x_3$ & / &$x_5$ & $x_3^2$ &  \\
    $x_5$ & $x_3^2$ & / & / & $x_9$ \\
    $x_9$ & $x_5^2$ & / & / & /  \\\bottomrule
    $y_4$ & / & $y_6$ & $y_7$ & $y_4^2$ \\
    $y_6$ & $y_7$ & / & / & $y_{10}$ \\
    $y_7$ & / & / & / & $y_{11}$ \\
    $y_{10}$ & $y_{11}$ & / & / & / \\
    $y_{11}$ & / & $y_{13}$ & / & / \\
    $y_{13}$ & / & / &  &  \\\bottomrule
\end{tabular}
\captionof{table}{Elements of $H^n(BSs(4n), \mathbb{Z}_2)$ and their transformation under Steenrod squares.}
\end{center}
As alluded to before obtaining the higher relations $r_i$ is a lot simpler as they are related to $r_1$ by Steenrod operations. The first relation is given by:
\begin{equation}
\label{first_relation}
    r_1 = x_2 y_7 + x_3 y_6 + \tilde{x}_5 y_4\,,
\end{equation}
where we defined $\tilde{x}_5 = x_5 + x_2 x_3$.
Now, we can bootstrap the next relations, while crosschecking that Steenrod squares acting on the relations actually vanish. The first few Steenrod operations acting on $r_1$ are pretty simple:
\begin{align}
    &Sq^1(r_1) = x_3 y_7 + x_3 y_7 = 0\,, \nonumber \\
    &Sq^2(r_1) = x_2^2 y_7 + x_5 y_6 + x_2 \tilde{x}_5 y_4 + \tilde{x}_5 y_6 = x_2 r_1 = 0\,, \\
    &Sq^3(r_1) = x_5 y_7 + x_3^2 y_6 + \tilde{x}_5 y_7 + (x_3 x_5 + x_2 x_3^2) y_4 = x_3 r_1 = 0\,. \nonumber
\end{align}
Finally with $Sq^4$ we reach the next relation $r_2$:
\begin{equation}
\label{second_relation}
    r_2 = Sq^4(r_1) = x_2 y_{11} + x_3 y_{10} + \tilde{x}_5 y_4^2 + x_3^2 y_7 + x_2 \tilde{x}_5 y_6 + \tilde{x}_9 y_4\,,
\end{equation}
where we defined $\tilde{x}_9 = x_9 + x_2^2 x_5 + x_3^3$. This definition is very convenient, since we have $Sq^4(\tilde{x}_5) = \tilde{x}_9$ and $Sq^1(\tilde{x}_9) = \tilde{x}_5^2$.
So let's look at the next couple of Steenrod squares building onto $r_2$, as well:
\begin{align}
\label{third_relation}
    &Sq^1(r_2) = x_3 y_{11} + x_3 y_{11} + \tilde{x}_5 r_1 = 0\,, \nonumber \\
    &r_3 = Sq^2(r_2) = x_2 y_{13} + \tilde{x}_5 y_{10} + x_3 x_5 y_7 + x_2^2 \tilde{x}_5 y_6 + \tilde{x}_9 y_6 + x_2 \tilde{x}_9 y_4\,.
\end{align}
Subsequently, we get $r_4$ from $r_3$ (or as $Sq^3(r_2)$ from $r_2$):
\begin{align}
\label{fourth_relation}
    r_4 = Sq^1(r_3) = x_3 y_{13} &+ \tilde{x}_5 y_{11} + x_2^2 \tilde{x}_5 y_7 + x_3^3 y_7 + \tilde{x}_9 y_7  \\
    &+ \tilde{x}_5^2 y_6 + x_3 \tilde{x}_9 y_4 + x_2 \tilde{x}_5^2 y_4 \,. \nonumber
\end{align}
As a final consistency check we calculate $Sq^1$ and $Sq^2$ of $r_4$ to check that both actually vanish\footnote{There is an additional relation $r_5$ in degree 17, but we don't hit it with $Sq^1(r_4)$.}.
\begin{align}
    &Sq^1(r_4) = \tilde{x}_5^2 y_7 + \tilde{x}_5^2 y_7 + x_3 \tilde{x}_5^2 y_4 + x_3 \tilde{x}_5^2 y_4 = 0\,, \nonumber \\
    &Sq^2(r_4) = \tilde{x}_9 r_1 + \tilde{x}_5^2(x_2 y_6 + x_2^2 y_4 + x_2 y_6 + x_2^2 y_4) = 0\,.
\end{align}
Now, with the action of the Steenrod squares set up and the relations determined we can go ahead and identify the $\mathcal{A}_{1}$-module structure of $H^*(BSs(4n), \mathbb{Z}_2)$. First, from \eqref{HBSs(4n)_Sqkaction} we recognize that the elements of $H^*(BSs(4n), \mathbb{Z}_2)$ split into two different parts 
a ``x-part" consisting of $x_2$, \dots, $x_9$ and a ``y-part" composed of $y_4$, \dots, $y_{13}$ as the Steenrod squares never transforms them into each other.
The third part, a mixed part, of course is comprised of the combination of x- and y-elements. This is where the relations come into play.

Notice that the ``x-part" is up to the degrees we are working with isomorphic to the algebra of $H^*(B^2\mathbb{Z}_2, \mathbb{Z}_2)$ and the ``y-part" can be identified with $H^*(BE_8, \mathbb{Z}_2)$.
The ``x-part" can be understood as coming from the fibration
\begin{equation}
    BSpin(4n) \to BSs(4n) \to B^2\mathbb{Z}_2\,.
\end{equation}
From a string theory perspective the ``y-part" has a very natural interpretation as the lead actor of the T-Duality between the two supersymmetric heterotic string theories with gauge groups $(E_8 \times E_8) \rtimes \mathbb{Z}_2$ and $Ss(32)$.

Interestingly, the coproducts in $H^*(Ss(4n), \mathbb{Z}_2)$ precisely cause the ``y-part" in the range relevant to both string theories to change from being identical to $H^*(BSpin(4n), \mathbb{Z}_2)$ to being identical to $H^*(BE_8, \mathbb{Z}_2)$. 

Let us also point out the close relation to $H^*(BSO(4n), \mathbb{Z}_2)$ as sometimes the gauge group of type I/HO string theory is wrongfully identified as SO(32). Here, additionally to removing the coproducts we would also need to couple the ``x-" and the ``y-part", such that after renaming the ``x-elements": $x_i \to y_i$ ($i \in 2,3,5,9$) we would have the following action of $Sq^{i}$, $i \in 1,2$:
\begin{align}
    Sq^1(y_i) &= (i-1)\, y_{i+1}\,, \\
    Sq^2(y_i) &= \binom{i-1}{2}\, y_{i+2} + y_2 y_i\,.
\end{align}
We leave a string theoretic interpretation of the coproducts and the consequent relations in $H^*(BSs(4n), \mathbb{Z}_2)$ to future work.

Let's start with looking at the $\mathcal{A}_1$-module structure of the ``x-part". Since $\mathbb{Z}_2$ is a discrete group, $B^2\mathbb{Z}_2$ is nothing else than the Eilenberg-Maclane space $K(\mathbb{Z}_2, 2)$.
Up to degree $n=40$ the ko-homology $ko_n(K(\mathbb{Z}_2, 2))$, which is more than sufficient for the string theoretic applications we have in mind, has been calculated in~\cite{wilson1973new}. As already hinted upon before we will use some particular combinations of $x_2$, $x_3$, $x_5$ and $x_9$. Namely, we will use 
\begin{align*}
    &\tilde{x}_5 = x_5 + x_2 x_3\,, \\
    &\tilde{x}_9 = x_9 + x_2^2 x_5 + x_3^3\,, \\
    &\tilde{x}'_9 = x_2^3 x_3 + x_2^2 x_5 + x_3^3\,, \\
    &\tilde{x}_{11} = x_2 x_9 + x_3^2 x_5 + x_2 x_3^3
\end{align*}
matching the definitions of~\cite{wilson1973new}. With the action of $Sq^1$ and $Sq^2$ on the elements of the ``x-part" we get the following $\mathcal{A}_1$-module structure: \\

\begin{sseqdata}[name=Sqxpart, Adams grading, classes = fill, no axes, xscale = 0.6, yscale = 0.6]
\class[MPP_blue_light, "x_2" { right, xshift = -0.16cm, black }, font = \footnotesize](0,0)
\class[MPP_blue_light](0,1)
\class[MPP_blue_light](0,2)
\class[MPP_blue_light](0,3)
\class[MPP_blue_light](0,4)
\structline[MPP_blue_light](0,0)(0,1)
\structline[MPP_blue_light, bend left = 30](0,0)(0,2)
\structline[MPP_blue_light, bend left = 30](0,2)(0,4)
\structline[MPP_blue_light, bend right = 30](0,1)(0,3)
\structline[MPP_blue_light](0,3)(0,4)

\class[MPP_blue_light, "\tilde{x}_5" { right, xshift = -0.16cm, black }, font = \footnotesize](1,3)
\class[MPP_blue_light](1,5)
\class[MPP_blue_light](1,6)
\class[MPP_blue_light](1,8)
\structline[MPP_blue_light, bend left = 30](1,3)(1,5)
\structline[MPP_blue_light](1,5)(1,6)
\structline[ MPP_blue_light, bend left = 30](1,6)(1,8)


\foreach \n in {4} \foreach \m in {3}{
\class[MPP_blue_light, "x_2^3" { right, xshift = -0.16cm, black }, font = \footnotesize](\m,\n)
\class[MPP_blue_light](\m,\n + 1)
\class[MPP_blue_light](\m,\n + 2)
\class[MPP_blue_light](\m,\n + 3)
\class[MPP_blue_light](\m + 2,\n + 3)
\class[MPP_blue_light](\m + 2,\n + 4)
\class[MPP_blue_light](\m + 2,\n + 5)
\class[MPP_blue_light](\m + 2,\n + 6)

\structline[MPP_blue_light](\m,\n)(\m,\n + 1)
\structline[MPP_blue_light, bend left = 30](\m,\n)(\m,\n + 2)
\structline[MPP_blue_light](\m,\n + 2)(\m,\n + 3)
\structline[MPP_blue_light, in = -160, out = 20](\m,\n + 1)(\m + 2,\n + 3)
\structline[MPP_blue_light, in = -160, out = 20](\m,\n + 2)(\m + 2,\n + 4)
\structline[MPP_blue_light](\m + 2,\n + 3)(\m + 2,\n + 4)
\structline[MPP_blue_light, in = -160, out = 20](\m,\n + 3)(\m + 2,\n + 5)
\structline[MPP_blue_light](\m + 2,\n + 5)(\m + 2,\n + 6)
\structline[MPP_blue_light, bend right = 30](\m + 2,\n + 4)(\m + 2,\n + 6)
}

\foreach \n in {8} \foreach \m in {9}{
\class[MPP_blue_light, "x_2 x_3 x_5" { right, xshift = -0.16cm, black }, font = \footnotesize](\m,\n)
\class[MPP_blue_light](\m,\n + 1)
\class[MPP_blue_light](\m,\n + 2)
\class[MPP_blue_light](\m,\n + 3)
\class[MPP_blue_light](\m + 2,\n + 3)
\class[MPP_blue_light](\m + 2,\n + 4)
\class[MPP_blue_light](\m + 2,\n + 5)
\class[MPP_blue_light](\m + 2,\n + 6)

\structline[MPP_blue_light](\m,\n)(\m,\n + 1)
\structline[MPP_blue_light, bend left = 30](\m,\n)(\m,\n + 2)
\structline[MPP_blue_light](\m,\n + 2)(\m,\n + 3)
\structline[MPP_blue_light, in = -160, out = 20](\m,\n + 1)(\m + 2,\n + 3)
\structline[MPP_blue_light, in = -160, out = 20](\m,\n + 2)(\m + 2,\n + 4)
\structline[MPP_blue_light](\m + 2,\n + 3)(\m + 2,\n + 4)
\structline[MPP_blue_light, in = -160, out = 20](\m,\n + 3)(\m + 2,\n + 5)
\structline[MPP_blue_light](\m + 2,\n + 5)(\m + 2,\n + 6)
\structline[MPP_blue_light, bend right = 30](\m + 2,\n + 4)(\m + 2,\n + 6)
}

\foreach \n in {10} \foreach \m in {13}{
\class[MPP_blue_light, "x_2^3 x_3^2" { right, xshift = -0.16cm, black }, font = \footnotesize](\m,\n)
\class[MPP_blue_light](\m,\n + 1)
\class[MPP_blue_light](\m,\n + 2)
\class[MPP_blue_light](\m,\n + 3)
\class[MPP_blue_light](\m + 2,\n + 3)
\class[MPP_blue_light](\m + 2,\n + 4)
\class[MPP_blue_light](\m + 2,\n + 5)
\class[MPP_blue_light](\m + 2,\n + 6)

\structline[MPP_blue_light](\m,\n)(\m,\n + 1)
\structline[MPP_blue_light, bend left = 30](\m,\n)(\m,\n + 2)
\structline[MPP_blue_light](\m,\n + 2)(\m,\n + 3)
\structline[MPP_blue_light, in = -160, out = 20](\m,\n + 1)(\m + 2,\n + 3)
\structline[MPP_blue_light, in = -160, out = 20](\m,\n + 2)(\m + 2,\n + 4)
\structline[MPP_blue_light](\m + 2,\n + 3)(\m + 2,\n + 4)
\structline[MPP_blue_light, in = -160, out = 20](\m,\n + 3)(\m + 2,\n + 5)
\structline[MPP_blue_light](\m + 2,\n + 5)(\m + 2,\n + 6)
\structline[MPP_blue_light, bend right = 30](\m + 2,\n + 4)(\m + 2,\n + 6)
}

\class[MPP_blue_light, "x_2^4" { below, yshift = 0.1cm, black }, font = \footnotesize](6,6)

\class[MPP_blue_light, "\tilde{x}_9" { below, yshift = 0.15cm, xshift = 0.25cm, black }, font = \footnotesize](1,7)
\structline[MPP_blue_light, dashed](1,7)(1,8)

\class[MPP_blue_light, "\tilde{x}'_9" { below, yshift = 0.1cm, black }, font = \footnotesize](7,7)
\class[MPP_blue_light](7,9)
\class[MPP_blue_light](7,10)
\class[MPP_blue_light](7,12)
\structline[MPP_blue_light, bend left = 30](7,7)(7,9)
\structline[MPP_blue_light](7,9)(7,10)
\structline[MPP_blue_light, bend left = 30](7,10)(7,12)

\class[MPP_blue_light, "x_2 \cdot x_2^4 " { below, yshift = 0.1cm, black }, font = \footnotesize](12,8)
\class[MPP_blue_light](12,9)
\class[MPP_blue_light](12,10)
\class[MPP_blue_light](12,11)
\class[MPP_blue_light](12,12)
\structline[MPP_blue_light](12,8)(12,9)
\structline[MPP_blue_light, bend left = 30](12,8)(12,10)
\structline[MPP_blue_light, bend left = 30](12,10)(12,12)
\structline[MPP_blue_light, bend right = 30](12,9)(12,11)
\structline[MPP_blue_light](12,11)(12,12)

\class[MPP_blue_light, "\tilde{x}_{11}" { below, yshift = 0.1cm, black }, font = \footnotesize](8,9)
\class[MPP_blue_light](8,10)
\class[MPP_blue_light](8,11)
\class[MPP_blue_light](8,12)
\class[MPP_blue_light](8,13)
\structline[MPP_blue_light](8,9)(8,10)
\structline[MPP_blue_light, bend left = 30](8,9)(8,11)
\structline[MPP_blue_light, bend left = 30](8,11)(8,13)
\structline[MPP_blue_light, bend right = 30](8,10)(8,12)
\structline[MPP_blue_light](8,12)(8,13)
\structline[MPP_blue_light, dashed](8,11)(7,12)

\class[MPP_blue_light, "\tilde{x}_5 \cdot x_2^4" { below, yshift = 0.1cm, black }, font = \footnotesize](16,11)
\class[MPP_blue_light](16,13)
\class[MPP_blue_light](16,14)
\class[MPP_blue_light](16,16)
\structline[MPP_blue_light, bend left = 30](16,11)(16,13)
\structline[MPP_blue_light](16,13)(16,14)
\structline[MPP_blue_light, bend left = 30](16,14)(16,16)

\class[MPP_blue_light, "\tilde{x}_9 \cdot x_2^4" { below, yshift = 0.15cm, xshift = 0.55cm, black }, font = \footnotesize](16,15)
\structline[MPP_blue_light, dashed](16,15)(16,16)

\class[white, font = \footnotesize](17,15)

\end{sseqdata}

\begin{figure}[H]
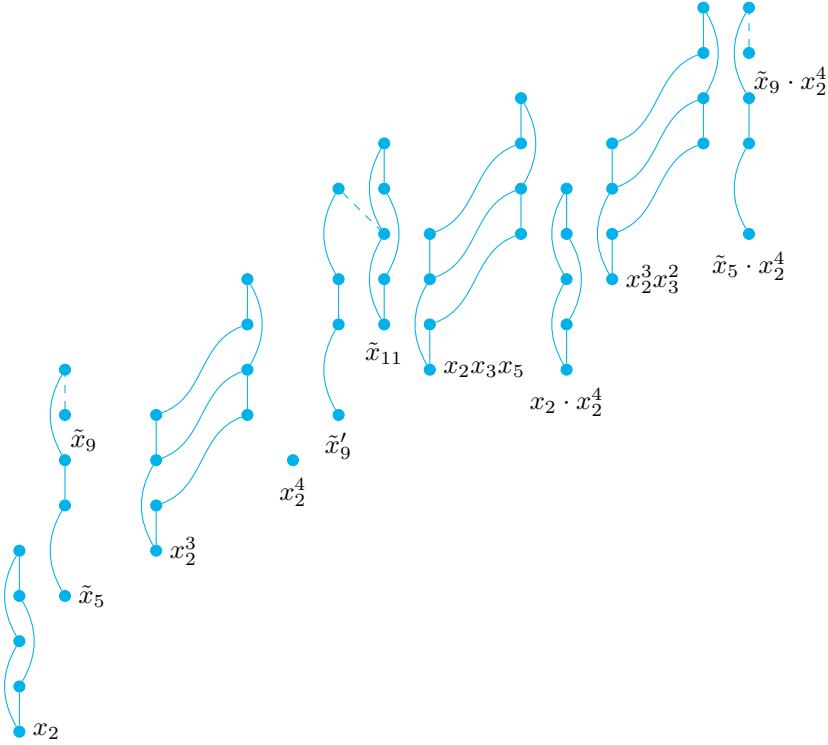

 \centering
 \printpage[ name = Sqxpart]
 \caption{$\mathcal{A}_1$-module structure - x-part}
 \label{fig:Sqxpart}
\end{figure}
To match the notation of \cite{wilson1973new} we included the $Sq^1$ connecting $\tilde{x}_9$ with \\
$Sq^2\,Sq^1\,Sq^2(\tilde{x}_5)$ as a dashed line in the figure above. For the subsequent Adams charts this $Sq^1$ does not makes a difference and we can effectively treat $\mathbb{Z}_2[\tilde{x}_9]$ and $M_0[\tilde{x}_5]$ as separate modules following \cite{wilson1973new}.
In our degree range there is another non-trivial extension, namely a $Sq^3 = Sq^1 Sq^2$ between $\tilde{x}_{11}$ and $Sq^2\,Sq^1\,Sq^2(\tilde{x}_9')$ \cite{wilson1973new}, which we indicate with a dashed line. Once more, we can treat $M_0[\tilde{x}_9']$ and $J[\tilde{x}_{11}]$ as different modules in the Adams charts.

Before we discuss setting up the Adams spectral sequence by filling the second page, we'll look at the $\mathcal{A}_1$-module structure of the ``y-part" and the mixed ``x-y-part".
Whereas the structure of the ``y-part" is simple compared to the other parts as can be seen below, the mixed ``x-y-part" has a highly involved $\mathcal{A}_1$-module structure. \\
\begin{sseqdata}[name = Sqypart, classes = fill, no axes , xscale = 0.6, yscale = 0.6]

\class[ MPP_orange, "y_4" { right, xshift = -0.16cm, black }, font = \footnotesize](0,0)
\class[MPP_orange](0,2)
\class[MPP_orange](0,3)
\structline[MPP_orange, bend left = 30](0,0)(0,2)
\structline[MPP_orange](0,2)(0,3)

\class[ MPP_orange, "y_4^2" { right, xshift = -0.16cm, black }, font = \footnotesize](1,4)

\class[ MPP_orange, "y_4 y_6" { below, yshift = 0.1cm, black }, font = \footnotesize](2,6)
\class[MPP_orange](2,7)
\class[MPP_orange](2,8)
\class[MPP_orange](2,9)
\class[MPP_orange](2,10)
\structline[MPP_orange](2,6)(2,7)
\structline[MPP_orange, bend left = 30](2,6)(2,8)
\structline[MPP_orange, bend left = 30](2,8)(2,10)
\structline[MPP_orange, bend right = 30](2,7)(2,9)
\structline[MPP_orange](2,9)(2,10)

\class[ MPP_orange, "y_{10}" { right, xshift = -0.16cm, black }, font = \footnotesize](3,6)
\class[MPP_orange](3,7)
\class[MPP_orange](3,9)
\structline[MPP_orange, bend left = 30](3,7)(3,9)
\structline[MPP_orange](3,6)(3,7)

\class[ MPP_orange, "y_4^3" { right, xshift = -0.16cm, black }, font = \footnotesize](4,8)
\class[MPP_orange](4,10)
\class[MPP_orange](4,11)
\structline[MPP_orange, bend left = 30](4,8)(4,10)
\structline[MPP_orange](4,10)(4,11)

\end{sseqdata}

\begin{figure}[H]
    \centering
    \printpage[ name = Sqypart, page = 2]
    \caption{$\mathcal{A}_{1}$-module structure - y-part}
    \label{fig:Sqypart}
\end{figure}

After carefully incorporating the relations $r_i$, $i \in 1, \dots, 4$ \eqref{first_relation}-\eqref{fourth_relation} we finally arrive at the following structure: \\
\begin{sseqdata}[name=Sqxypart, classes = {fill}, no axes , xscale = 0.42, yscale = 0.42]


\class[ MPP_green, "x_2^2 y_4" { below, yshift = 0.1cm, black }, font = \scriptsize](3,2)
\class[MPP_green](3,3)
\class[MPP_green](3,4)
\class[MPP_green](5,4)
\class[MPP_green](5,5)
\class[MPP_green](5,6)
\class[MPP_green](5,7)

\structline[MPP_green](3,3)(3,4)
\structline[MPP_green, in = -160, out = 20](3,2)(5,4)
\structline[MPP_green, in = -160, out = 20](3,3)(5,5)
\structline[MPP_green](5,4)(5,5)
\structline[MPP_green, in = -160, out = 20](3,4)(5,6)
\structline[MPP_green](5,6)(5,7)
\structline[MPP_green, bend right = 30](5,5)(5,7)


\class[ MPP_green, "x_2 y_4^2" { below, yshift = 0.1cm, black }, font = \scriptsize](9,4)
\class[MPP_green](9,5)
\class[MPP_green](9,6)
\class[MPP_green](9,7)
\class[MPP_green](9,8)
\structline[MPP_green](9,4)(9,5)
\structline[MPP_green, bend left = 30](9,4)(9,6)
\structline[MPP_green, bend left = 30](9,6)(9,8)
\structline[MPP_green, bend right = 30](9,5)(9,7)
\structline[MPP_green](9,7)(9,8)

\class[ MPP_green, "x_2^4 y_4" { below, yshift = 0.04cm, black }, font = \scriptsize](16,6)
\class[MPP_green](16,8)
\class[MPP_green](16,9)
\structline[MPP_green, bend left = 30](16,6)(16,8)
\structline[MPP_green](16,8)(16,9)


\class[MPP_green, "x_2 y_{10}" { right, xshift = -0.16cm, black }, font = \scriptsize](17,6)
\class[MPP_green](17,7)
\class[MPP_green](17,8)
\class[MPP_green](17,9)
\class[MPP_green](19,9)
\class[MPP_green](19,10)
\class[MPP_green](19,11)
\structline[MPP_green](17,6)(17,7)
\structline[MPP_green, bend left = 30](17,6)(17,8)
\structline[MPP_green](17,8)(17,9)
\structline[MPP_green](19,9)(19,10)
\structline[MPP_green, in = -160, out = 20](17,7)(19,9)
\structline[MPP_green, in = -160, out = 20](17,8)(19,10)
\structline[MPP_green, in = -160, out = 20](17,9)(19,11)

\class[ MPP_green, "\tilde{x}_9' y_4" { below, yshift = 0.1cm, black }, font = \scriptsize](29,7)
\class[MPP_green](29,9)
\class[MPP_green](29,10)
\structline[MPP_green, bend left = 30](29,7)(29,9)
\structline[MPP_green](29,9)(29,10)




\foreach \n in {0} \foreach \m in {0}{
\class[MPP_green, "x_2 y_4" { right, xshift = -0.16cm, black }, font = \scriptsize](\m,\n)
\class[MPP_green](\m,\n + 1)
\class[MPP_green](\m,\n + 2)
\class[MPP_green](\m,\n + 3)
\class[MPP_green](\m + 2,\n + 3)
\class[MPP_green](\m + 2,\n + 4)
\class[MPP_green](\m + 2,\n + 5)
\class[MPP_green](\m + 2,\n + 6)

\structline[MPP_green](\m,\n)(\m,\n + 1)
\structline[MPP_green,bend left = 30](\m,\n)(\m,\n + 2)
\structline[MPP_green](\m,\n + 2)(\m,\n + 3)
\structline[MPP_green, in = -160, out = 20](\m,\n + 1)(\m + 2,\n + 3)
\structline[MPP_green, in = -160, out = 20](\m,\n + 2)(\m + 2,\n + 4)
\structline[MPP_green](\m + 2,\n + 3)(\m + 2,\n + 4)
\structline[MPP_green, in = -160, out = 20](\m,\n + 3)(\m + 2,\n + 5)
\structline[MPP_green](\m + 2,\n + 5)(\m + 2,\n + 6)
\structline[MPP_green, bend right = 30](\m + 2,\n + 4)(\m + 2,\n + 6)
}

\foreach \n in {4} \foreach \m in {6}{
\class[MPP_green, "x_2^3 y_4" { right, xshift = -0.16cm, black }, font = \scriptsize](\m,\n)
\class[MPP_green](\m,\n + 1)
\class[MPP_green](\m,\n + 2)
\class[MPP_green](\m,\n + 3)
\class[MPP_green](\m + 2,\n + 3)
\class[MPP_green](\m + 2,\n + 4)
\class[MPP_green](\m + 2,\n + 5)
\class[MPP_green](\m + 2,\n + 6)

\structline[MPP_green](\m,\n)(\m,\n + 1)
\structline[MPP_green, bend left = 30](\m,\n)(\m,\n + 2)
\structline[MPP_green](\m,\n + 2)(\m,\n + 3)
\structline[MPP_green, in = -160, out = 20](\m,\n + 1)(\m + 2,\n + 3)
\structline[MPP_green, in = -160, out = 20](\m,\n + 2)(\m + 2,\n + 4)
\structline[MPP_green](\m + 2,\n + 3)(\m + 2,\n + 4)
\structline[MPP_green, in = -160, out = 20](\m,\n + 3)(\m + 2,\n + 5)
\structline[MPP_green](\m + 2,\n + 5)(\m + 2,\n + 6)
\structline[MPP_green, bend right = 30](\m + 2,\n + 4)(\m + 2,\n + 6)
}

\foreach \n in {5} \foreach \m in {10}{
\class[MPP_green, "x_2 x_5 y_4" { right, xshift = -0.16cm, yshift = -0.16cm, black }, font = \scriptsize](\m,\n)
\class[MPP_green](\m,\n + 1)
\class[MPP_green](\m,\n + 2)
\class[MPP_green](\m,\n + 3)
\class[MPP_green](\m + 2,\n + 3)
\class[MPP_green](\m + 2,\n + 4)
\class[MPP_green](\m + 2,\n + 5)
\class[MPP_green](\m + 2,\n + 6)

\structline[MPP_green](\m,\n)(\m,\n + 1)
\structline[MPP_green, bend left = 30](\m,\n)(\m,\n + 2)
\structline[MPP_green](\m,\n + 2)(\m,\n + 3)
\structline[MPP_green, in = -160, out = 20](\m,\n + 1)(\m + 2,\n + 3)
\structline[MPP_green, in = -160, out = 20](\m,\n + 2)(\m + 2,\n + 4)
\structline[MPP_green](\m + 2,\n + 3)(\m + 2,\n + 4)
\structline[MPP_green, in = -160, out = 20](\m,\n + 3)(\m + 2,\n + 5)
\structline[MPP_green](\m + 2,\n + 5)(\m + 2,\n + 6)
\structline[MPP_green, bend right = 30](\m + 2,\n + 4)(\m + 2,\n + 6)
}

\foreach \n in {5} \foreach \m in {13}{
\class[MPP_green, "x_2 x_3 y_6" { right, xshift = -0.16cm, yshift = -0.32cm, black }, font = \scriptsize](\m,\n)
\class[MPP_green](\m,\n + 1)
\class[MPP_green](\m,\n + 2)
\class[MPP_green](\m,\n + 3)
\class[MPP_green](\m + 2,\n + 3)
\class[MPP_green](\m + 2,\n + 4)
\class[MPP_green](\m + 2,\n + 5)
\class[MPP_green](\m + 2,\n + 6)

\structline[MPP_green](\m,\n)(\m,\n + 1)
\structline[MPP_green, bend left = 30](\m,\n)(\m,\n + 2)
\structline[MPP_green](\m,\n + 2)(\m,\n + 3)
\structline[MPP_green, in = -160, out = 20](\m,\n + 1)(\m + 2,\n + 3)
\structline[MPP_green, in = -160, out = 20](\m,\n + 2)(\m + 2,\n + 4)
\structline[MPP_green](\m + 2,\n + 3)(\m + 2,\n + 4)
\structline[MPP_green, in = -160, out = 20](\m,\n + 3)(\m + 2,\n + 5)
\structline[MPP_green](\m + 2,\n + 5)(\m + 2,\n + 6)
\structline[MPP_green, bend right = 30](\m + 2,\n + 4)(\m + 2,\n + 6)
}

\foreach \n in {6} \foreach \m in {20}{
\class[MPP_green, "x_2^3 y_6" { right, xshift = -0.16cm, black }, font = \scriptsize](\m,\n)
\class[MPP_green](\m,\n + 1)
\class[MPP_green](\m,\n + 2)
\class[MPP_green](\m,\n + 3)
\class[MPP_green](\m + 2,\n + 3)
\class[MPP_green](\m + 2,\n + 4)
\class[MPP_green](\m + 2,\n + 5)
\class[MPP_green](\m + 2,\n + 6)

\structline[MPP_green](\m,\n)(\m,\n + 1)
\structline[MPP_green, bend left = 30](\m,\n)(\m,\n + 2)
\structline[MPP_green](\m,\n + 2)(\m,\n + 3)
\structline[MPP_green, in = -160, out = 20](\m,\n + 1)(\m + 2,\n + 3)
\structline[MPP_green, in = -160, out = 20](\m,\n + 2)(\m + 2,\n + 4)
\structline[MPP_green](\m + 2,\n + 3)(\m + 2,\n + 4)
\structline[MPP_green, in = -160, out = 20](\m,\n + 3)(\m + 2,\n + 5)
\structline[MPP_green](\m + 2,\n + 5)(\m + 2,\n + 6)
\structline[MPP_green, bend right = 30](\m + 2,\n + 4)(\m + 2,\n + 6)
}
\foreach \n in {6} \foreach \m in {23}{
\class[MPP_green, "x_2 x_3^2 y_4" { right, xshift = -0.16cm, yshift = -0.16cm, black }, font = \scriptsize](\m,\n)
\class[MPP_green](\m,\n + 1)
\class[MPP_green](\m,\n + 2)
\class[MPP_green](\m,\n + 3)
\class[MPP_green](\m + 2,\n + 3)
\class[MPP_green](\m + 2,\n + 4)
\class[MPP_green](\m + 2,\n + 5)
\class[MPP_green](\m + 2,\n + 6)

\structline[MPP_green](\m,\n)(\m,\n + 1)
\structline[MPP_green, bend left = 30](\m,\n)(\m,\n + 2)
\structline[MPP_green](\m,\n + 2)(\m,\n + 3)
\structline[MPP_green, in = -160, out = 20](\m,\n + 1)(\m + 2,\n + 3)
\structline[MPP_green, in = -160, out = 20](\m,\n + 2)(\m + 2,\n + 4)
\structline[MPP_green](\m + 2,\n + 3)(\m + 2,\n + 4)
\structline[MPP_green, in = -160, out = 20](\m,\n + 3)(\m + 2,\n + 5)
\structline[MPP_green](\m + 2,\n + 5)(\m + 2,\n + 6)
\structline[MPP_green, bend right = 30](\m + 2,\n + 4)(\m + 2,\n + 6)
}

\foreach \n in {6} \foreach \m in {26}{
\class[MPP_green, "x_2 y_4 y_6" { right, xshift = -0.16cm, yshift = -0.16cm, black }, font = \scriptsize](\m,\n)
\class[MPP_green](\m,\n + 1)
\class[MPP_green](\m,\n + 2)
\class[MPP_green](\m,\n + 3)
\class[MPP_green](\m + 2,\n + 3)
\class[MPP_green](\m + 2,\n + 4)
\class[MPP_green](\m + 2,\n + 5)
\class[MPP_green](\m + 2,\n + 6)

\structline[MPP_green](\m,\n)(\m,\n + 1)
\structline[MPP_green, bend left = 30](\m,\n)(\m,\n + 2)
\structline[MPP_green](\m,\n + 2)(\m,\n + 3)
\structline[MPP_green, in = -160, out = 20](\m,\n + 1)(\m + 2,\n + 3)
\structline[MPP_green, in = -160, out = 20](\m,\n + 2)(\m + 2,\n + 4)
\structline[MPP_green](\m + 2,\n + 3)(\m + 2,\n + 4)
\structline[MPP_green, in = -160, out = 20](\m,\n + 3)(\m + 2,\n + 5)
\structline[MPP_green](\m + 2,\n + 5)(\m + 2,\n + 6)
\structline[MPP_green, bend right = 30](\m + 2,\n + 4)(\m + 2,\n + 6)
}


\end{sseqdata}

\begin{figure}[H]
    \centering
    \printpage[ name = Sqxypart, page = 2]   
    \caption{$\mathcal{A}_{1}$-module structure - mixed x-y-part}
    \label{fig:Sqxypart}
\end{figure}

\subsection{Determining the spin cobordism groups of \texorpdfstring{$BSs(32)$}{BSs(32)} up to degree 12}
\begin{theorem}
The 2-completed ko-homology of $BSs(32)$ up to degree 12 takes the following form:
\begin{table}[h!]
\begin{tabular}{ c | c c c c c c c c c }
n & 0 & 1 & 2 &  3 & 4 & 5 & 6 & 7 & 8 \\
\midrule
 $ko_n (BSs(32))$ &$\mathbb{Z}$ & $\mathbb{Z}_2$  & $2\, \mathbb{Z}_2$  & 0 & $2\,\mathbb{Z}\oplus \mathbb{Z}_2$ & 0& $2 \,\mathbb{Z}_2$  &0 & $4\,\mathbb{Z} \oplus \mathbb{Z}_8$ \\
\end{tabular}
\vspace{1cm}
\begin{tabular}{ c | c c c c }
n  & 9 & 10 & 11 & 12 \\
\midrule
 $ko_n (BSs(32))$ &$4\,\mathbb{Z}_2$ & $8\,\mathbb{Z}_2$  & $3 \,\mathbb{Z}_2$  & $6 \,\mathbb{Z} \oplus 7\,\mathbb{Z}_2 \oplus \mathbb{Z}_8$
\end{tabular}
\captionof{table}{ko-homology groups $ko_{n} (BSs(32))_{\widehat{2}}$.} 
\end{table}
\end{theorem}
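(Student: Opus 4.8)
The plan is to run the Adams spectral sequence
\begin{equation}
    E_2^{s,t} = \mathrm{Ext}^{s,t}_{\mathcal{A}_1}\big(H^*(BSs(32),\mathbb{Z}_2),\mathbb{Z}_2\big) \Longrightarrow ko_{t-s}(BSs(32))_{\widehat{2}}\,,
\end{equation}
built from the $\mathcal{A}_1$-module decomposition of $H^{*\leq 13}(BSs(32),\mathbb{Z}_2)$ set up above. The first task is to combine the ``x-part'' of Figure~\ref{fig:Sqxpart}, the ``y-part'' of Figure~\ref{fig:Sqypart} and the mixed ``x-y-part'' of Figure~\ref{fig:Sqxypart} into a single direct-sum decomposition of $H^*(BSs(32),\mathbb{Z}_2)$ into indecomposable $\mathcal{A}_1$-modules, valid through internal degree $13$. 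Since $\mathrm{Ext}_{\mathcal{A}_1}(-,\mathbb{Z}_2)$ sends direct sums to direct sums, one may then treat the summands one at a time; the unit class in degree $0$ contributes the factor $ko_*(pt)_{\widehat{2}} = \mathrm{Ext}_{\mathcal{A}_1}(\mathbb{Z}_2,\mathbb{Z}_2)$, which splits off via the basepoint and furnishes both a consistency check and the $v$-action exploited later.

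Next I would identify each indecomposable summand with a standard $\mathcal{A}_1$-module whose $\mathrm{Ext}$-chart is available: free modules $\Sigma^n\mathcal{A}_1$ (a lone class on the line $s=0$), trivial modules $\Sigma^n\mathbb{Z}_2$ (an $h_0$-tower, hence eventually a copy of $\mathbb{Z}$), and modules such as $C\eta$, $\mathcal{A}_1/\mathcal{E}_1$, $J$ and $\tilde{R}_2$, the last of which has its $\mathrm{Ext}$-chart produced above from the long exact sequence attached to \eqref{exactseqtildeR2}; the remaining charts may be read off from Beaudry--Campbell~\cite{Beaudry:2018ifm}. Suspending each chart by the appropriate amount and superposing them fills the $E_2$-page for $t-s\leq 12$; knowing $H^*$ only through degree $13$ suffices here because the contributions of these standard pieces are determined in all filtrations.

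With the $E_2$-page in place I would argue that there is essentially no room for differentials in this range: the only candidate is $d_2$, which between distinct standard pieces is obstructed for bidegree reasons and within each piece is already accounted for, so the spectral sequence collapses at $E_2$ through $t-s\leq 12$ (any surviving $d_2$ being forced by the $ko_*(pt)$-module structure). It then remains to resolve the extension problems concealed in the $h_0$- and $h_1$-towers. The decisive input is the identification of the ``x-part'' with a truncation of $\tilde H^*(K(\mathbb{Z}_2,2),\mathbb{Z}_2)$ and of the ``y-part'' with $\tilde H^*(BE_8,\mathbb{Z}_2)$: comparison with Wilson's computation of $ko_*(K(\mathbb{Z}_2,2))$~\cite{wilson1973new} and with $ko_*(BE_8)$ pins down the lengths of the $h_0$-towers and, in particular, the two $\mathbb{Z}_8$ summands in degrees $8$ and $12$ (three stacked classes linked by an $h_0$-multiplication that, whether visible in $\mathrm{Ext}$ or hidden, assembles them into a $\mathbb{Z}_8$). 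Multiplicativity of the Adams spectral sequence over $ko_*(pt)$, together with the map $pt\to BSs(32)$, then settles the remaining hidden $h_0$- and $h_1$-extensions, and reading off the columns of the $E_\infty$-page yields exactly the groups in the table.

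I expect the difficulty to be concentrated at the two ends of the argument. The laborious end is the first step: assembling and correctly identifying the mixed ``x-y-part'' as an $\mathcal{A}_1$-module, since the relations $r_1,\dots,r_4$ of \eqref{first_relation}--\eqref{fourth_relation} must be incorporated with care and a single misplaced $Sq^1$ or $Sq^2$ would corrupt the whole chart. The delicate end is the last step: establishing or ruling out the hidden extensions and confirming the $\mathbb{Z}_8$'s, which is exactly why the auxiliary identifications with $K(\mathbb{Z}_2,2)$ and $BE_8$, and the external $ko_*(pt)$-module structure, are indispensable rather than merely convenient.
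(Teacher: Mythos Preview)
Your overall architecture matches the paper's: same Adams spectral sequence, same three-part $\mathcal{A}_1$-module decomposition, and the same use of the maps to $B^2\mathbb{Z}_2$ and $BE_8$ (via naturality) to import known information. But there is a genuine gap in the middle of your argument.

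You assert that ``there is essentially no room for differentials in this range'' and that the spectral sequence collapses at $E_2$. This is false, and it is precisely where the work lies. The $\mathcal{A}_1$-module decomposition contains several trivial summands $\Sigma^n\mathbb{Z}_2$ (for instance $x_2^4$ in degree $8$, $\tilde{x}_9$ in degree $9$, $x_2^4 y_4$ in degree $12$), each of which contributes an \emph{infinite} $h_0$-tower to $E_2$. These towers must be cut down by differentials, not by extension arguments. Concretely, the paper (following Wilson for the $x$-part and Francis for the $y$-part) exhibits nontrivial $d_2$'s and $d_3$'s: a $d_3$ from the tower on $\tilde{x}'_9$ kills the tower on $x_2^4$ down to three classes, producing the $\mathbb{Z}_8$ in degree $8$; a $d_2$ kills part of the $y$-part in degree $10$; and in the mixed part a $d_3$ originating on $\tilde{x}'_9\,y_4$ (determined via the Leibniz rule from the known $d_3(\tilde{x}'_9)$) cuts the tower on $x_2^4 y_4$ to the $\mathbb{Z}_8$ in degree $12$. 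Your description of the $\mathbb{Z}_8$'s as ``three stacked classes \ldots\ whether visible in $\mathrm{Ext}$ or hidden'' gets the mechanism backwards: at $E_2$ these are genuine $\mathbb{Z}$-towers, and they become finite only after differentials.

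A secondary point: once the correct $E_\infty$-page is in hand, the extension problems in degrees $9$, $10$, $12$ are not all settled by ``multiplicativity over $ko_*(pt)$'' alone. The paper uses Margolis' theorem for classes coming from free $\mathcal{A}_1$-summands, the relation $2\eta=0$ to split degree $9$, and the Bott-periodicity $\omega$-action (tracking classes to degree $20$) to rule out hidden $h_0$-extensions in degree $12$. Your proposal gestures at this but would need these specific arguments to close. You also omit the check that there is no odd torsion, which the paper handles via an Atiyah--Hirzebruch spectral sequence for the fibration $BSpin(4n)\to BSs(4n)\to B^2\mathbb{Z}_2$.
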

\begin{proof}
    As outlined earlier the tool of our choice to calculate $ko_{n} (BSs(32))_{\widehat{2}}$ is the Adams spectral sequence. With the $\mathcal{A}_1$-module structure of $H^*(BSs(32), \mathbb{Z}_2)$ we can easily write down the second page $E_2$ of the spectral sequence. 
    Essential to this proof are the two maps we already alluded to in the discussion of the $\mathcal{A}_1$-module structure, namely $ko_{*}(BSs(32)) \to ko_{*}(B^2\mathbb{Z}_2)$ from the fibration $BSpin(n) \to BSs(n) \to B^2\mathbb{Z}_2$ and $ko_{n < 16}(BSs(32)) \to ko_{n < 16}(BE_8)$ from the inclusion $Ss(16) \xhookrightarrow{} E_8$. Again, the first one is highlighted by blue color and the second one by orange color in the following.
    The Adams spectral sequences for both ko-homologies are well studied, i.e.\ differentials and extensions are solved (or solvable) up to at least dimension 12, and due to the naturality of the Adams spectral sequence we can connect differentials that arise in our spectral sequence to known ones via the aforementioned maps.
    To make the spectral sequence easier to follow we discuss the spectral sequences for $ko_{*}(B^2\mathbb{Z}_2)_{\widehat{2}}$ and $ko_{*}(BE_8)_{\widehat{2}}$ separately.

\subsubsection{The Adams spectral sequence for \texorpdfstring{$ko_{*}(B^2\mathbb{Z}_2)_{\widehat{2}}$}{ko*(BB Z/2) at prime 2}}
The spectral sequence for $ko_{*}(B^2\mathbb{Z}_2)_{\widehat{2}} = ko_{*}(K(\mathbb{Z}_2,2))_{\widehat{2}}$, where $K(\mathbb{Z}_2,2)$ is the Eilenberg-Maclane space of the pair $(\mathbb{Z}_2,2)$\footnote{The equivalence between classifying spaces $B^n G$ of a discrete group $G$ and Eilenberg-Maclane spaces $K(G, n)$ is for example discussed in chapter 16.5. of~\cite{may1999concise}.}, was studied in~\cite{wilson1973new} and we will briefly recount their results for this spectral sequence.
We start by transferring the $\mathcal{A}_1$-modules, which are well known in the literature, see for example~\cite{Beaudry:2018ifm}, into an Adams chart. We get the following second page, where we have encircled nodes stemming from a full $\mathcal{A}_1$-module, which will not partake in neither a non-trivial differential nor an extension due to Margolis' theorem~\cite{Margolis74}:
\\
\begin{sseqdata}[
name = xpart,
Adams grading, classes = fill,
x range = {0}{13}, y range = {0}{8},
x tick step = 1,
run off differentials = {->},
xscale = 0.75,
yscale = 0.80,
class pattern = linearnew
]
\class[MPP_blue_light](2,0)
\class[MPP_blue_light](4,1)
\DoUntilOutOfBoundsThenNMore{2}{
\class[MPP_blue_light](\lastx,\lasty+1)
\structline[MPP_blue_light]
}
\class[MPP_blue_light](5,0)
\d2
\DoUntilOutOfBounds{
\class[MPP_blue_light](\lastx,\lasty+1)
\structline[MPP_blue_light]
\d2
}
\class[MPP_blue_light](8,2)
\DoUntilOutOfBoundsThenNMore{2}{
\class[MPP_blue_light](\lastx,\lasty+1)
\structline[MPP_blue_light]
}

\class[MPP_blue_light](9,3)
\structline[MPP_blue_light](8,2)(9,3)
\class[MPP_blue_light](10,4)
\structline[MPP_blue_light](9,3)(10,4)

\class[MPP_blue_light](9,0)
\d2
\DoUntilOutOfBounds{
\class[MPP_blue_light](\lastx,\lasty+1)
\structline[MPP_blue_light]
\d2
}
\class[MPP_blue_light](10,1)
\structline[MPP_blue_light](9,0)(10,1)
\d2
\class[MPP_blue_light](11,2)
\structline[MPP_blue_light](10,1)(11,2)
\d2

\class[MPP_blue_light](12,5)
\DoUntilOutOfBoundsThenNMore{2}{
\class[MPP_blue_light](\lastx,\lasty+1)
\structline[MPP_blue_light]
}

\class[MPP_blue_light](13,3)
\d2
\DoUntilOutOfBounds{
\class[MPP_blue_light](\lastx,\lasty+1)
\structline[MPP_blue_light]
\d2
}

\class[MPP_blue_light](8,0)
\class[MPP_blue_light](8,1)
\structline[MPP_blue_light](8,0,-1)(8,1,-1)
\class[MPP_blue_light](8,2)
\structline[MPP_blue_light](8,1,-1)(8,2,-1)
\DoUntilOutOfBoundsThenNMore{2}{
\class[MPP_blue_light](\lastx,\lasty+1)
\structline[MPP_blue_light]
}

\class[MPP_blue_light](9,1)
\structline[MPP_blue_light](8,0)(9,1,2)
\class[MPP_blue_light](10,2)
\structline[MPP_blue_light](9,1,2)(10,2)

\class[MPP_blue_light](9,0)
\DoUntilOutOfBounds{
\class[MPP_blue_light](\lastx,\lasty+1)
\structline[MPP_blue_light]
}
\foreach \n in {0,...,8}{
 \d3(9,\n,-1)(8,\n+3,-1)
}

\class[MPP_blue_light](12,3)
\DoUntilOutOfBoundsThenNMore{2}{
\class[MPP_blue_light](\lastx,\lasty+1)
\structline[MPP_blue_light]
}

\class[MPP_blue_light](11,0)
\class[MPP_blue_light](13,1)
\DoUntilOutOfBounds{
\class[MPP_blue_light](\lastx,\lasty+1)
\structline[MPP_blue_light]
}
\foreach \n in {1,...,8}{
 \d3(13,\n,-1)(12,\n+3,-1)
}

\class[MPP_blue_light](10,0)
\class[MPP_blue_light](12,1)  
\DoUntilOutOfBoundsThenNMore{2}{
\class[MPP_blue_light](\lastx,\lasty+1)
\structline[MPP_blue_light]
}

\class[MPP_blue_light](13,0)
\DoUntilOutOfBounds{
\class[MPP_blue_light](\lastx,\lasty+1)
\structline[MPP_blue_light]
}
\foreach \m in {0,...,8}{
 \d2(13,\m,-1)(12,\m+2,-1)
}

\class[MPP_blue_light, circlen = 2](6,0)
\class[MPP_blue_light, circlen = 2](10,0)
\class[MPP_blue_light, circlen = 2](12,0)

\end{sseqdata}

\begin{figure}[H]
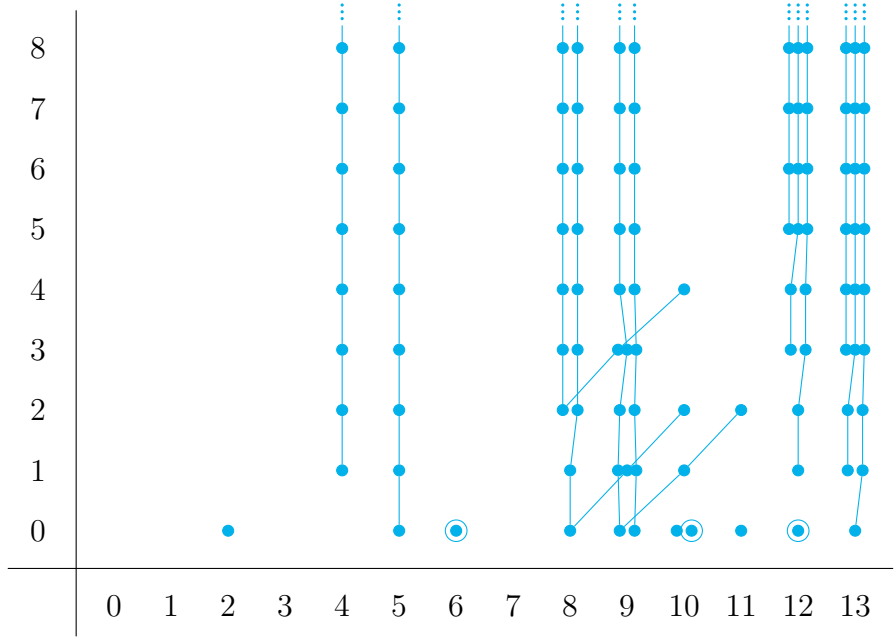

    \centering
    \printpage[ name = xpart, page = 1]    
    \caption{$E_2$ for $ko_{*}(B^2 \mathbb{Z}_2)_{\widehat{2}}$ without differentials}
    \label{fig:xpart}
\end{figure}

Now we can start looking at the differentials. As mentioned before a differential $d_r$ on the $r$-th page goes from a node $(s, t-s)$ to another one at $(s+r, t-s-1)$.
Interestingly, differentials in the Adams spectral sequence have a very peculiar property, namely they are equivariant under elements of $\mathrm{Ext}^{s, t}(\mathbb{Z}_2, \mathbb{Z}_2)$ meaning that in our case acting with the $h_i$'s with $i=0,1$ from before results in:
\begin{equation}
 d_r(h_i x) = h_i d_r(x)\,.
\end{equation}
With the properties of differentials in the Adams spectral sequence taken into account we see that only ``tower killing" differentials can be present going from one tower of nodes to another one.
\begin{lemma}{(Wilson~\cite{wilson1973new})}
    The tower killing differentials up to degree 40 can be identified with Bocksteins. Up to degree 13 we have the following differentials:
    \begin{itemize}
        \item There is a $d_2$ starting from towers in degree $4k+5$ coming from $\Sigma M_0$ and $\Sigma^9 \mathbb{Z}_2[\tilde{x}_9]$ killing the towers in degree $4k+4$ from $\Sigma^2 J[x_2]$. 
        \item There is another $d_2$ between towers generated by the same $\mathcal{A}_1$-modules as above multiplied by $x_2^4$.
        \item A $d_3$ kills the towers in degree $4k+8$ associated to $\Sigma^8 \mathbb{Z}_2[x_2^4]$ and starts from the towers coming from $\Sigma^9 M_0[\tilde{x}'_9]$ and $\Sigma^{11} J[\tilde{x}_{11}]$.
    \end{itemize} 
\end{lemma}

With these results on the differentials we are able to reach the infinity page.
Since there are no higher differentials than $d_3$ up to degree 13, page 4 is already equivalent to the infinity page. The second, third and final fourth page look as follows: 
\begin{figure}[H]
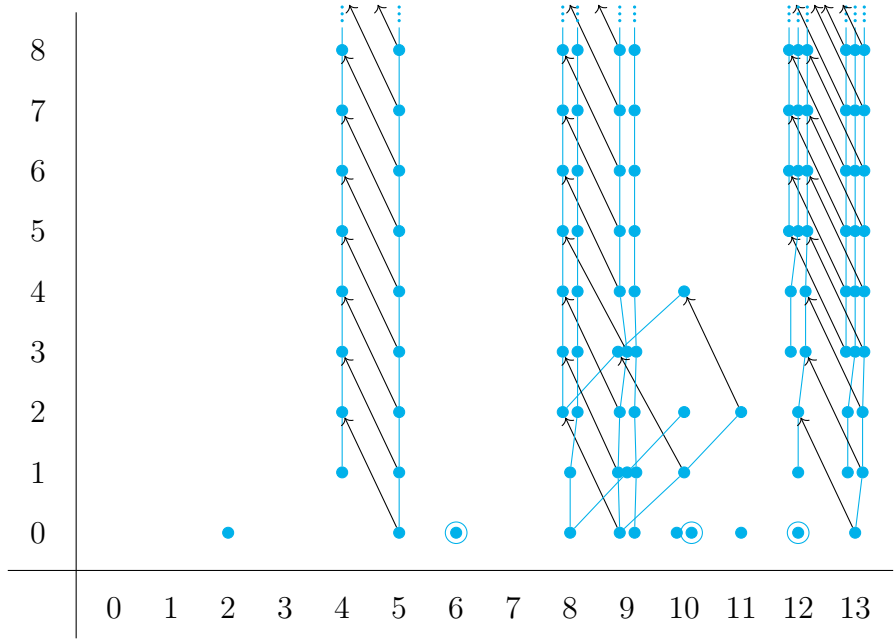

    \centering
    \printpage[ name = xpart, page = 2]   
    \caption{Second page $E_2$ for $ko_{*}(B^2 \mathbb{Z}_2)_{\widehat{2}}$ including $d_2$ differentials}
    \label{fig:xpart2}
\end{figure}
    
\begin{figure}[H]
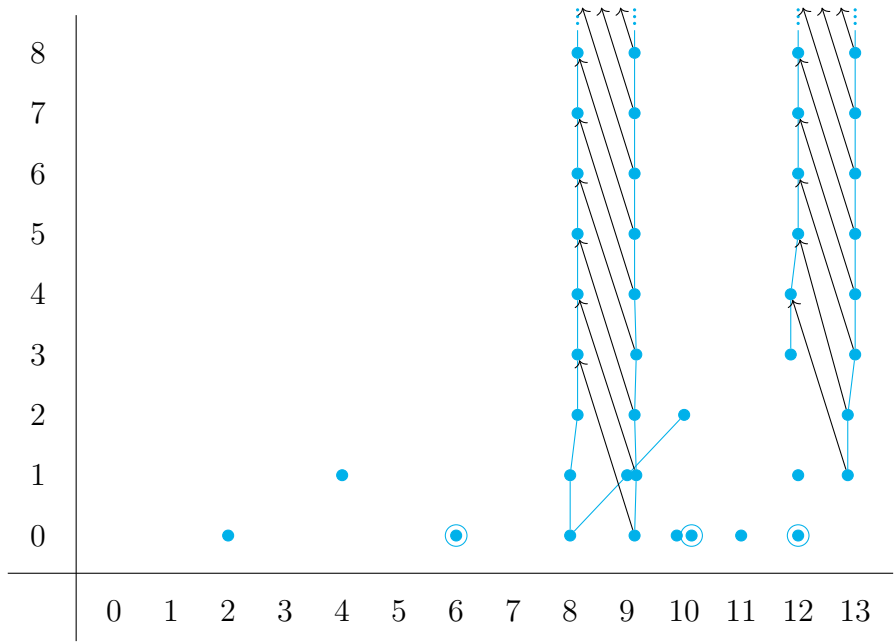

    \centering
    \printpage[ name = xpart, page = 3]   
    \caption{Third page $E_3$ for $ko_{*}(B^2 \mathbb{Z}_2)_{\widehat{2}}$}
    \label{fig:xpart3}
\end{figure}

\begin{figure}[H]
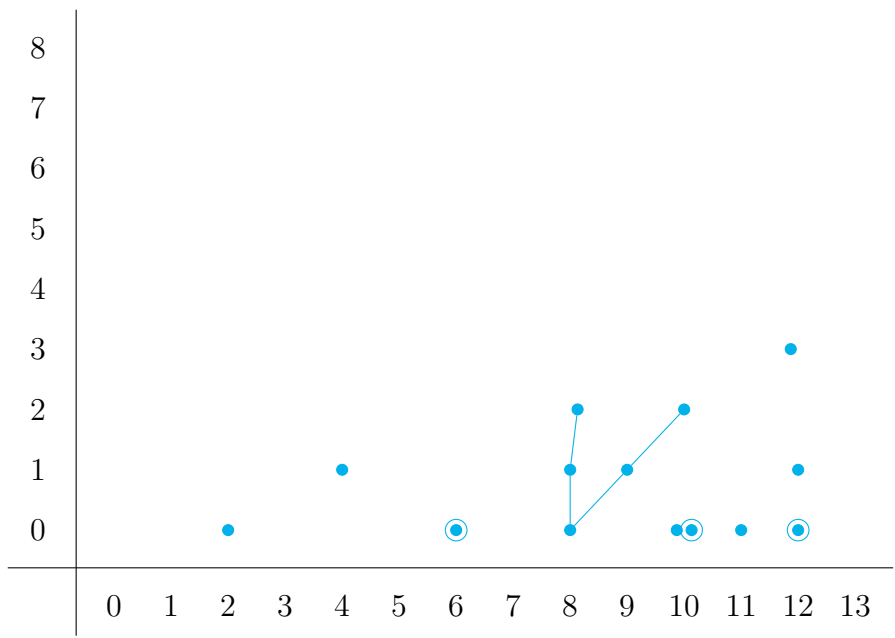

    \centering
    \printpage[ name = xpart, page = 4]   
    \caption{Final page $E_{\infty}$ for $ko_{*}(B^2 \mathbb{Z}_2)_{\widehat{2}}$}
    \label{fig:xpart_final}
\end{figure}

At this point we will not discuss any extension problems, we will do so once we assembled the final page of the full Adams spectral sequence for $ko_{*}(BSs(32))_{\widehat{2}}$.

\subsubsection{The Adams spectral sequence for \texorpdfstring{$ko_{*}(BE_8)_{\widehat{2}}$}{ko*(BE_8) at prime 2}}

\begin{sseqdata}[
name = ypart,
Adams grading, classes = fill,
x range = {0}{13}, y range = {0}{8},
x tick step = 1,
run off differentials = {->},
xscale = 0.95,
yscale = 0.8,
class pattern = linearnew
]

\class[MPP_orange](4,0)
\DoUntilOutOfBoundsThenNMore{2}{
\class[MPP_orange](\lastx,\lasty+1)
\structline[MPP_orange]
}

\class[MPP_orange](8,1)
\DoUntilOutOfBoundsThenNMore{2}{
\class[MPP_orange](\lastx,\lasty+1)
\structline[MPP_orange]
}
\class[MPP_orange](9,2)
\structline[MPP_orange](8,1)(9,2)
\class[MPP_orange](10,3)
\structline[MPP_orange](9,2)(10,3)

\class[MPP_orange](12,4)
\DoUntilOutOfBoundsThenNMore{2}{
\class[MPP_orange](\lastx,\lasty+1)
\structline[MPP_orange]
}

\class[MPP_orange](8,0)
\DoUntilOutOfBoundsThenNMore{2}{
\class[MPP_orange](\lastx,\lasty+1)
\structline[MPP_orange]
}
\class[MPP_orange](9,1)
\structline[MPP_orange](8,0)(9,1)
\class[MPP_orange](10,2)
\structline[MPP_orange](9,1)(10,2)

\class[MPP_orange](12,3)
\DoUntilOutOfBoundsThenNMore{2}{
\class[MPP_orange](\lastx,\lasty+1)
\structline[MPP_orange]
}

\class[MPP_orange](10,0)
\class[MPP_orange](12,1)
\DoUntilOutOfBoundsThenNMore{2}{
\class[MPP_orange](\lastx,\lasty+1)
\structline[MPP_orange]
}

\class[MPP_orange](10,0)
\d2(10,0)(9,2)
\class[MPP_orange](11,1)
\d2(11,1)(10,3)
\structline[MPP_orange](10,0)(11,1)
\class[MPP_orange](13,2)
\DoUntilOutOfBounds{
\class[MPP_orange](\lastx,\lasty+1)
\structline[MPP_orange]
}
\foreach \m in {2,...,7}{
 \d2(13,\m,-1)(12,\m+2,-3)
}
 \d2(13,8,-1)(12,10,-3)

\class[MPP_orange](12,0)
\DoUntilOutOfBoundsThenNMore{2}{
\class[MPP_orange](\lastx,\lasty+1)
\structline[MPP_orange]
}

\end{sseqdata}
As we stated before, the $\mathcal{A}_1$-structure of the ``y-part" is completely equivalent to the one for $BE_8$ in the degrees we are interested in.
Since we can also map the differentials by naturality of the Adams spectral sequence, we again first want to study the isolated case of $ko_{*}(BE_8)_{\widehat{2}}$ to infer a lot about the actual spectral sequence we care about.
To compute the low dimensional spin cobordism or ko-homology groups usually the isomorphism between $BE_8$ and the Eilenberg-Maclane space $K(\mathbb{Z},4)$ in degrees $\leq 15$ is exploited.
The associated Adams spectral sequence for $ko_{*}(K(\mathbb{Z},4))$ was studied in detail in~\cite{francisintegrals}.
Firstly, there are only a few $\mathcal{A}_1$-modules, which are quickly translated into an Adams chart, which looks as follows: 
\begin{figure}[H]
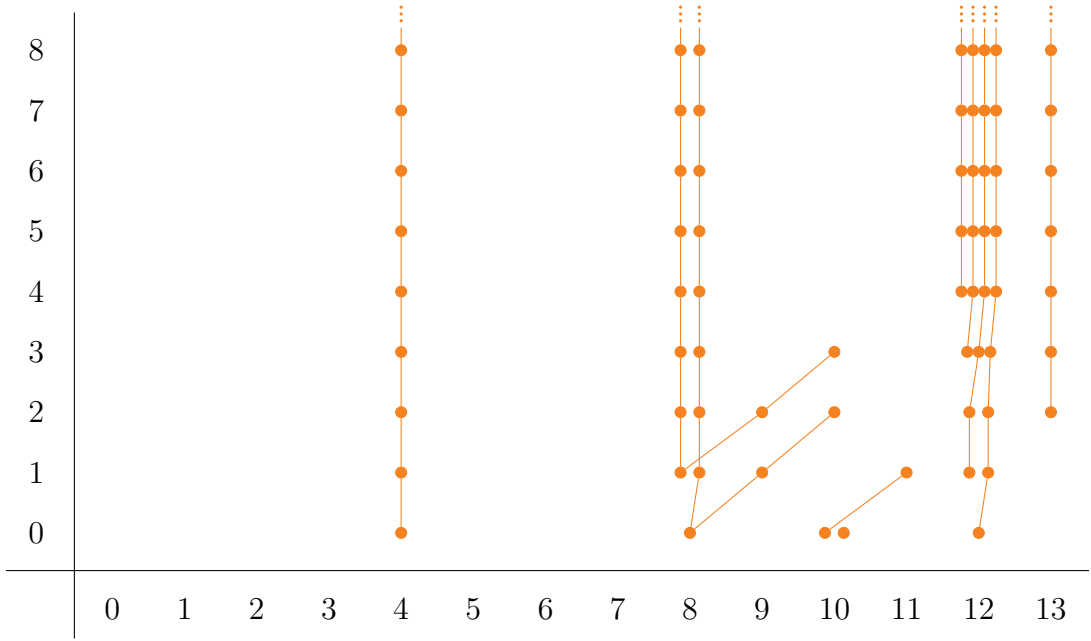

    \centering
    \printpage[ name = ypart, page = 1]
    \caption{Second page $E_2$ for $ko_{*}(BE_8)_{\widehat{2}}$ without differentials}
    \label{fig:ypart1}
\end{figure}
    
As we can see from the second page (without differentials) the $\mathcal{A}_1$-module structure of the ``y-part" doesn't leave much room for differentials. Still, there are two distinct differentials possible, which are in fact realized.
\begin{lemma}{(Francis~\cite{francisintegrals})}
    There is a $d_2$ in degree 10 (and 11 linked by $h_1$ action) and a tower killing differential in degree 13 such that there is no torsion in degree 12.
    Both differentials are linked by a higher cohomology operation, namely a Massey product.
\end{lemma}

Since there are no more possible differentials in this range, we are done. 
\begin{figure}[H]
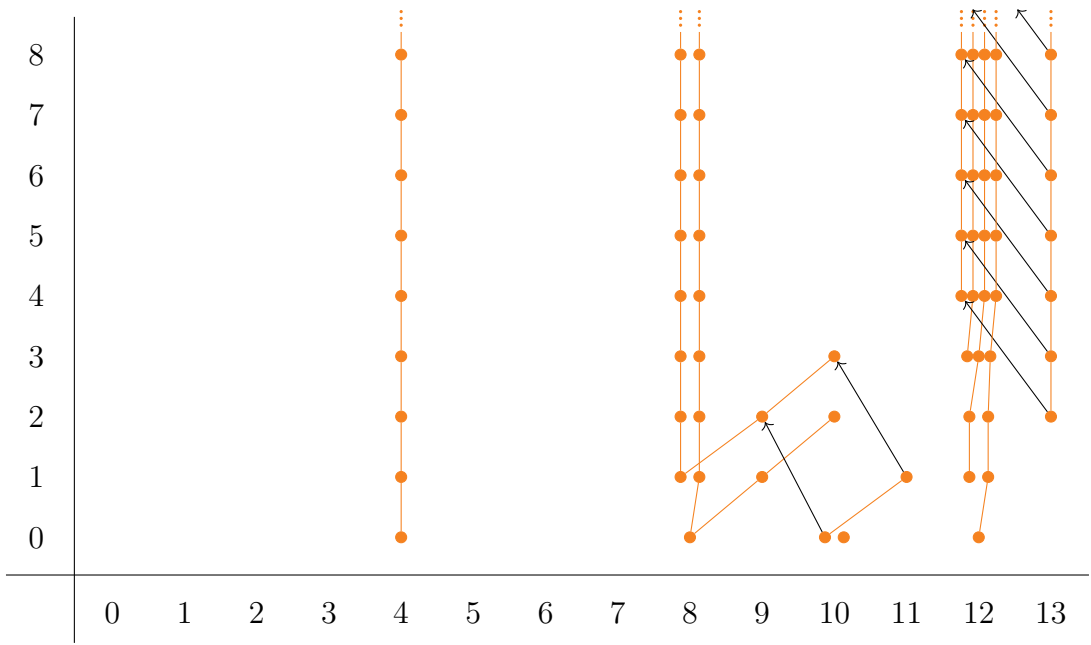

    \centering
    \printpage[ name = ypart, page = 2] 
    \caption{Second page $E_2$ for $ko_{*}(BE_8)_{\widehat{2}}$ with differentials}
    \label{fig:ypart2}
\end{figure}

\begin{figure}[H]
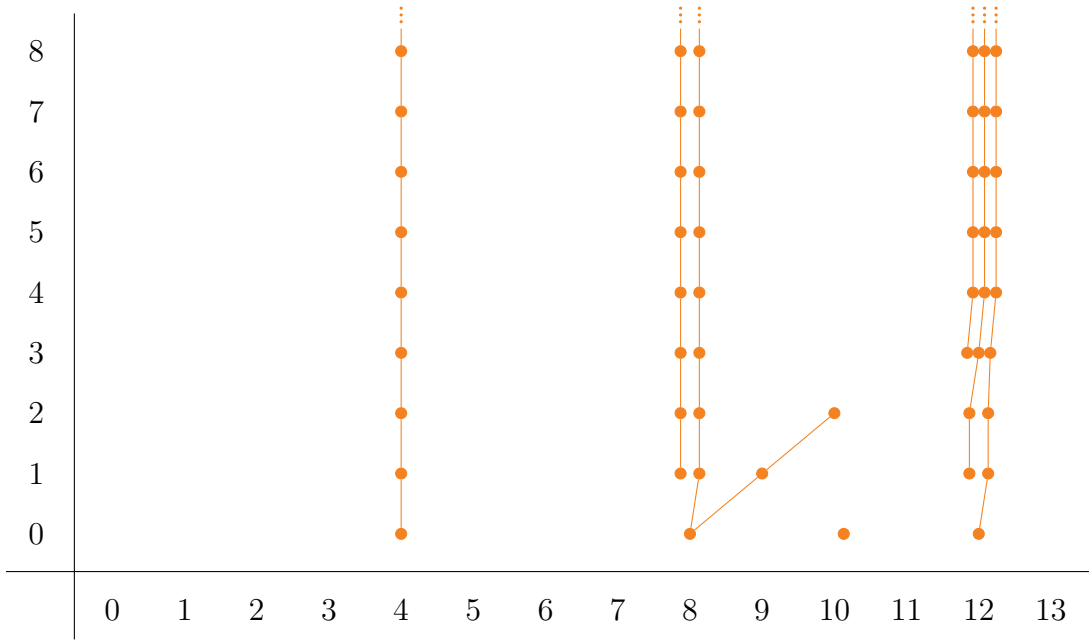

    \centering
    \printpage[ name = ypart, page = 3]
    \caption{Final page $E_{\infty}$ for $ko_{*}(BE_8)_{\widehat{2}}$}
    \label{fig:ypart3}
\end{figure}

\begin{sseqdata}[
name = xypart,
Adams grading, classes = fill,
x range = {0}{13}, y range = {0}{8},
x tick step = 1,
run off differentials = {->},
xscale = 0.95,
yscale = 0.8,
class pattern = linearnew
]


\class[MPP_green, circlen = 2](6,0)
\class[MPP_green, circlen = 2](10,0)
\class[MPP_green, circlen = 2](11,0)
\class[MPP_green, circlen = 2](11,0)
\class[MPP_green, circlen = 2](12,0)
\class[MPP_green, circlen = 2](12,0)
\class[MPP_green, circlen = 2](12,0)

\class[MPP_green](8,0)
\DoUntilOutOfBoundsThenNMore{2}{
\class[MPP_green](\lastx,\lasty+1)
\structline[MPP_green]
}
\class[MPP_green](9,0)
\class[MPP_green](10,1)
\structline[MPP_green](9,0)(10,1)

\class[MPP_green](12,2)
\DoUntilOutOfBoundsThenNMore{2}{
\class[MPP_green](\lastx,\lasty+1)
\structline[MPP_green]
}

\class[MPP_green](10,0)
\class[MPP_green](12,1)
\DoUntilOutOfBoundsThenNMore{2}{
\class[MPP_green](\lastx,\lasty+1)
\structline[MPP_green]
}

\class[MPP_green](12,0)
\DoUntilOutOfBoundsThenNMore{2}{
\class[MPP_green](\lastx,\lasty+1)
\structline[MPP_green]
}

\class[MPP_green](13,0)
\DoUntilOutOfBoundsThenNMore{2}{
\class[MPP_green](\lastx,\lasty+1)
\structline[MPP_green]
}
\foreach \n in {0,...,8}{
 \d3(13,\n,-1)(12,\n+3,-1)
}

\class[MPP_green](12,0)

\end{sseqdata}
\subsubsection{Completing the Adams spectral sequence for \texorpdfstring{$ko_{*}(BSs(32))_{\widehat{2}}$}{ko*(BSs(32)) at prime 2}}
For the final ``x-y-part" the $\mathcal{A}_1$-modules are again well known, see for example~\cite{Beaudry:2018ifm}. Additionally, we demonstrated the computation of the Adams chart for the $\tilde{R}_2$-module in the introduction.
So, we get the following second page without differentials. \\
\begin{figure}[H]
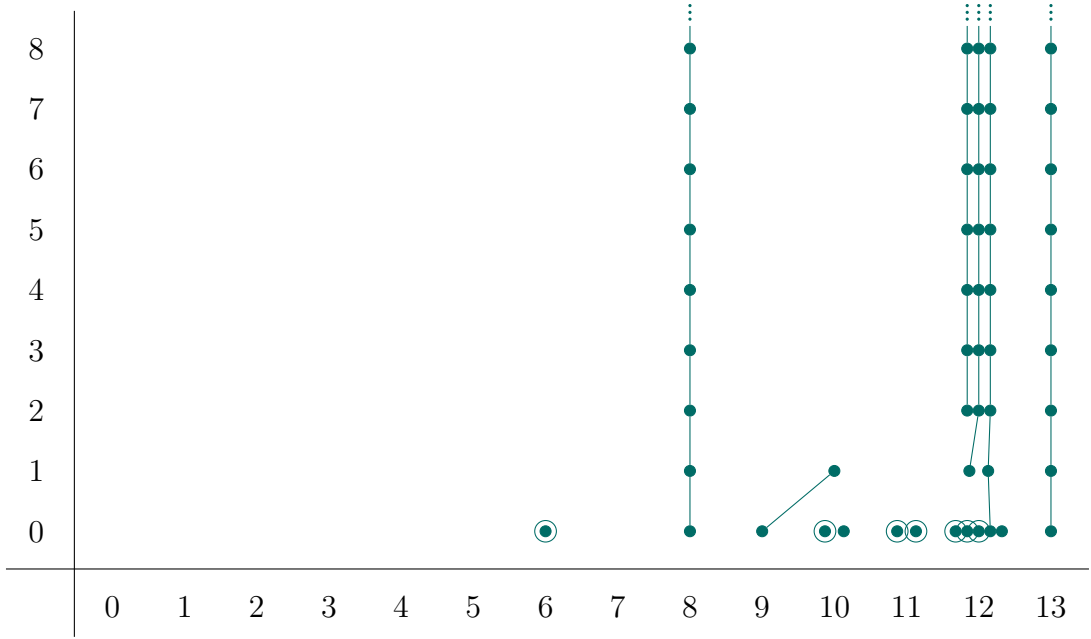

    \centering
    \printpage[ name = xypart,page = 2 ]   
    \caption{Second page $E_2$ for the ``x-y-part" without differentials}
    \label{fig:xypart1}
\end{figure}

Again, the encircled nodes denote $\mathbb{Z}_2$'s stemming from full $\mathcal{A}_1$, which do not participate in any differentials (or non-trivial extensions).
\begin{lemma}
    Up to degree 13 there is only one differential $d_3$ coming from the $\widetilde{Q}[\tilde{x}'_9 y_4]$ 
    \begin{equation}
        d_3(\tilde{x}'_9 \, y_4) = d_3(\tilde{x}'_9) \, y_4\,.
    \end{equation} reducing the $\widetilde{Q}[x_2^4 y_4]$ tower to a $\mathbb{Z}_8$-torsion piece.
\end{lemma}
\begin{proof}
    From the Adams chart and equivariance of differentials under $h_0$- and $h_1$-actions in particular it is an immediate consequence that the only possible differential would come from the $\widetilde{Q}[\tilde{x}'_9 y_4]$ tower in degree 13. 
    Then we can use the fact that the cup product of the cohomology ring of $H^{*}(BSs(32), \mathbb{Z}_2)$ induces a multiplicative structure on the Adams Spectral sequence~\cite{douady11suite}. This entails that we can determine the differential from the $\widetilde{Q}[\tilde{x}'_9 y_4]$ tower via the Leibniz rule induced by the multiplicative structure.
    Of course we have to be careful with the relations \eqref{first_relation}-\eqref{fourth_relation}.
    With this taken into account we get that the differential in question is in fact a $d_3$ as it is the only non-trivial possibility respecting the Leibniz rule
    \begin{equation}
        d_3(\tilde{x}'_9 \, y_4) = d_3(\tilde{x}'_9) \, y_4\,.
    \end{equation}
    Consequently this cuts the $\widetilde{Q}[x_2^4 \,y_4]$ tower in degree 12 to just a $\mathbb{Z}_8$ torsion piece. 
\end{proof}
The resulting pages of the spectral sequence look as follows:
\begin{figure}[H]
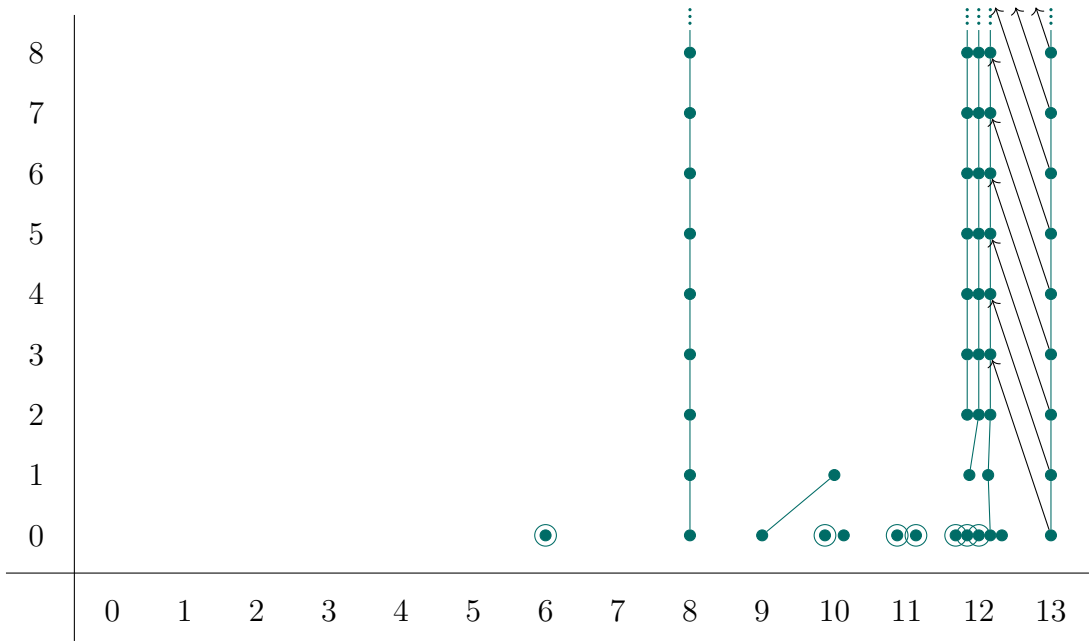

    \centering
    \printpage[ name = xypart, page = 3]   
    \caption{Third page $E_3$ for the ``x-y-part" with differentials}
    \label{fig:xypart2}
\end{figure}
   
\begin{figure}[H]
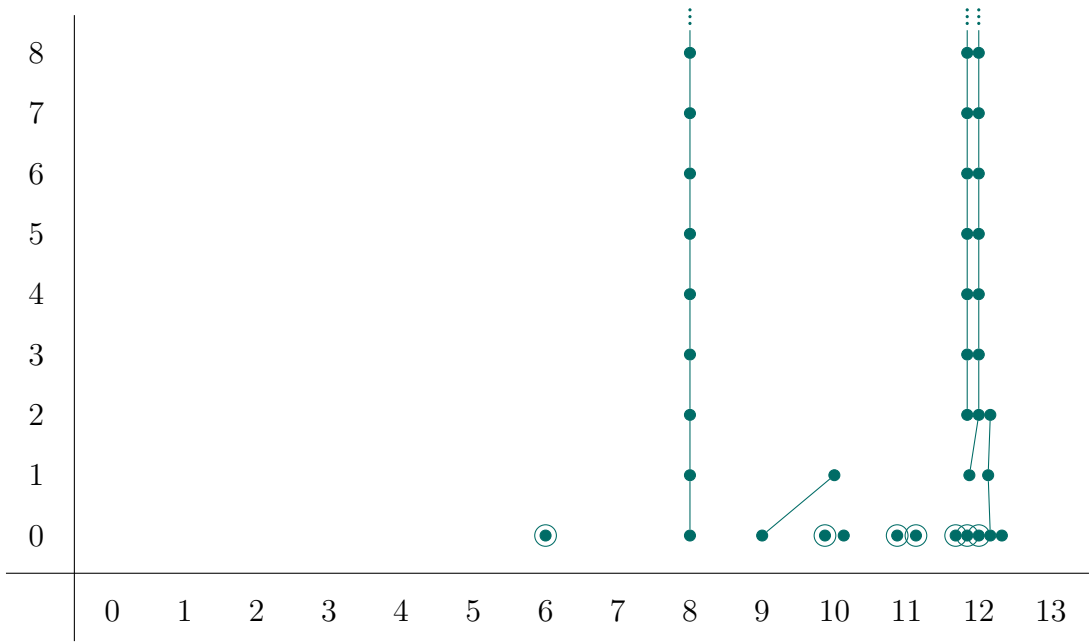

 \centering
 \printpage[ name = xypart, page = 4]    
 \caption{Final page $E_{\infty}$ for the ``x-y-part"}
 \label{fig:xypart3}
\end{figure}

\begin{sseqdata}[
name = finalpart,
Adams grading, classes = fill,
x range = {0}{12}, y range = {0}{8},
x tick step = 1,
run off differentials = {->},
xscale = 1.03972,
yscale = 0.8,
class pattern = linearnew
]
\class[MPP_blue_light](2,0)
\class[MPP_blue_light](4,1)

\class[MPP_blue_light](8,0)
\class[MPP_blue_light](8,1)
\structline[MPP_blue_light](8,0)(8,1)
\class[MPP_blue_light](8,2)
\structline[MPP_blue_light](8,1)(8,2)

\class[MPP_blue_light](9,1)
\structline[MPP_blue_light](8,0)(9,1)
\class[MPP_blue_light](10,2)
\structline[MPP_blue_light](9,1)(10,2)

\class[MPP_blue_light](12,3)

\class[MPP_blue_light](11,0)

\class[MPP_blue_light](10,0)
\class[MPP_blue_light](12,1)  

\class[MPP_blue_light, circlen = 2](6,0)
\class[MPP_blue_light, circlen = 2](10,0)
\class[MPP_blue_light, circlen = 2](12,0)


\class[MPP_orange](4,0)
\DoUntilOutOfBoundsThenNMore{2}{
\class[MPP_orange](\lastx,\lasty+1)
\structline[MPP_orange]
}

\class[MPP_orange](8,1)
\DoUntilOutOfBoundsThenNMore{2}{
\class[MPP_orange](\lastx,\lasty+1)
\structline[MPP_orange]
}

\class[MPP_orange](8,0)
\DoUntilOutOfBoundsThenNMore{2}{
\class[MPP_orange](\lastx,\lasty+1)
\structline[MPP_orange]
}
\class[MPP_orange](9,1)
\structline[MPP_orange](8,0,-1)(9,1,-1)
\class[MPP_orange](10,2)
\structline[MPP_orange](9,1,-1)(10,2,-1)

\class[MPP_orange](12,3)
\DoUntilOutOfBoundsThenNMore{2}{
\class[MPP_orange](\lastx,\lasty+1)
\structline[MPP_orange]
}

\class[MPP_orange](10,0)
\class[MPP_orange](12,1)
\DoUntilOutOfBoundsThenNMore{2}{
\class[MPP_orange](\lastx,\lasty+1)
\structline[MPP_orange]
}

\class[MPP_orange](12,0)
\DoUntilOutOfBoundsThenNMore{2}{
\class[MPP_orange](\lastx,\lasty+1)
\structline[MPP_orange]
}



\class[MPP_green, circlen = 2](6,0)
\class[MPP_green, circlen = 2](10,0)
\class[MPP_green, circlen = 2](11,0)
\class[MPP_green, circlen = 2](11,0)
\class[MPP_green, circlen = 2](12,0)
\class[MPP_green, circlen = 2](12,0)
\class[MPP_green, circlen = 2](12,0)

\class[MPP_green](8,0)
\DoUntilOutOfBoundsThenNMore{2}{
\class[MPP_green](\lastx,\lasty+1)
\structline[MPP_green]
}
\class[MPP_green](9,0)
\class[MPP_green](10,1)
\structline[MPP_green](9,0)(10,1)

\class[MPP_green](12,2)
\DoUntilOutOfBoundsThenNMore{2}{
\class[MPP_green](\lastx,\lasty+1)
\structline[MPP_green]
}

\class[MPP_green](10,0)
\class[MPP_green](12,1)
\DoUntilOutOfBoundsThenNMore{2}{
\class[MPP_green](\lastx,\lasty+1)
\structline[MPP_green]
}

\class[MPP_green](12,0)
\class[MPP_green](12,1)
\structline[MPP_green](12,0,-1)(12,1,-1)
\class[MPP_green](12,2)
\structline[MPP_green](12,1,-1)(12,2,-1)

\class[MPP_green](12,0)

\end{sseqdata}
Eventually, putting all parts together again we obtain the final page for the Adams spectral sequence for $ko_{*}(BSs(32))_{\widehat{2}}$: 
\begin{figure}[H]
    \centering
    \printpage[ name = finalpart, page = 2]    
    \caption{Final page $E_{\infty}$ for $ko_{*}(BSs(32))_{\widehat{2}}$}
    \label{fig:finalpart}
\end{figure}

Reading off the ko-homology of $BSs(32)$ is straightforward at this point:
each node corresponds to a $\mathbb{Z}_2$ and each vertical line, i.e.\ $h_0$, to multiplication by 2. Consequently, the towers are depicting a
$\mathbb{Z}$-summand.
Unfortunately, we are not quite done yet with the determination of the ko-homology groups as there are potentially non-trivial so called hidden extensions.
This means that $\mathbb{Z}_2$-nodes in appropriate degree could either split into a direct sum of each other or be connected by a so far undetected $h_0$.
In the case at hand the extensions all split and we do not uncover any hidden extensions.
Non-trivial extensions are possible in degrees 9, 10 and 12. So let's tackle them case by case.
\begin{lemma}
$ko_{9}(BSs(32)) \cong \textcolor{MPP_blue_light}{\mathbb{Z}_2} \oplus \textcolor{MPP_orange}{\mathbb{Z}_2} \oplus \textcolor{MPP_green}{\mathbb{Z}_2}$\,.
\end{lemma}
\begin{proof}
    All of the nodes in degree 9 are connected via a non-trivial $h_1$-action to nodes in degree 10. Now suppose there is a hidden extension between the green node and either the blue or orange node. Then these two nodes combine to a $\mathbb{Z}_4$. Let's call the generator of this $\mathbb{Z}_4$ $x$ and the image of the $h_1$ action on 2$x$ in degree 10 $y$, such that $h_1\,2x \neq 0$. Moreover, the $h_1$ action lifts to an $\eta$-action, corresponding to multiplication by the ``anti-periodic" circle in ko-homology. However, since $2 \, \eta = 0$, this is contradicting $h_1\,2x = \eta\,2x \neq 0$. Therefore, all $\mathbb{Z}_2$'s must split in $ko_{9}(BSs(32))$.
\end{proof}
\begin{lemma}
$ko_{10}(BSs(32)) \cong \textcolor{MPP_green}{3\,\mathbb{Z}_2} \oplus \textcolor{MPP_orange}{2\,\mathbb{Z}_2} \oplus \textcolor{MPP_blue_light}{3\,\mathbb{Z}_2}$\,.
\end{lemma}
\begin{proof}
    First of all, the Margolis theorem tells us that the two circled nodes do not participate in any non-trivial extensions and just split off.
    Furthermore, since all extensions in degree 9 are split, $\eta$ carries this splitting into degree 10: $\textcolor{MPP_green}{\mathbb{Z}_2} \oplus \textcolor{MPP_orange}{\mathbb{Z}_2} \oplus \textcolor{MPP_blue_light}{\mathbb{Z}_2} \subset ko_{10}(BSs(32))$. Since the remaining nodes can not form non-trivial extension amongst each other, we're done.
\end{proof}
\begin{lemma}
There are no hidden extensions in $ko_{12}(BSs(32))$.
\end{lemma}
\begin{proof}
We start by looking at the three blue nodes coming from $H^*(BSs(32), \mathbb{Z}_2)$. Of course, the circled node splits off completely due to Margolis theorem. To see the splitting between the other two nodes we need to put a bit more work in. In our proof before we have exploited a non-trivial $\eta$-action. 

Here, we will use that the top blue node in degree 12 coming from the $\Sigma^8 \, \mathbb{Z}_2 [x_2^4]$-module is connected to a node stemming from the same module in degree 20 by a $\omega$-action, corresponding to multiplication by a so-called Bott-manifold $B_8$. Just like the node in degree 12 the one in degree 20 is the only surviving after taking the tower-killing differential $d_3$ into account. The same can not be said about the other blue node, which leads to the following situation:

Let's denote the generator of the bottom $\mathbb{Z}_2$ $x$ and the generator of the top blue node $y$. Assuming the extension doesn't split we have $2x = y$. Now, since $\omega \cdot y \neq 0$, we get $\omega \cdot 2x \neq 0$. However, this is a contradiction, because the $\omega$-action on $x$ vanishes. Therefore, the extension must split.

Since we can map the blue nodes isomorphically into the Adams spectral sequence for $ko_{*}(B^2\mathbb{Z}_2)$, the blue nodes have to split in the case of $BSs(32)$ as well.
The remaining possible hidden extension is between the bottom single green $\mathbb{Z}_2$ node and one of the two $\mathbb{Z}$-towers.

Again, we will make use out of non-trivial $\omega$-actions. Namely, both towers are linked by $\omega$-actions to corresponding towers in degree 20.
Even though we have not determined the differentials in that degree we know that there cannot be any non-trivial differential acting on these towers as $d_r(\omega \,x) = \omega \,d_r(x)$ and there is no differential acting on our towers in degree 12. Moreover, the $\omega$-action doesn't just imply the existence of towers in degree 20, but since there is no non-trivial $\omega$-action on the green bottom node (not circled), which can be determined from the short exact sequence used to calculate the Adams chart for the $\tilde{R}_2$,
we know that both possible extensions do in fact split. 
\end{proof}

Now that we have covered the 2-torsion part for $ko_{*}(BSs(32))$ the natural next step is to calculate the odd-torsion Adams spectral sequence. However, as we show next this not necessary, in fact $ko_{*}(BSs(32))_{\widehat{2}} = ko_{*}(BSs(32))$ for $* \leq 12$.
\begin{lemma}
    Neither $ko_{*}(BSs(32))$ nor $\Omega^{Spin}_{*}(BSs(32))$ contains any odd torsion up to at least degree 12 and therefore 
    $ko_{*}(BSs(32))_{\widehat{2}} = ko_{*}(BSs(32))$ \\ as well as $\Omega^{Spin}_{*}(BSs(32))_{\widehat{2}} = \Omega^{Spin}_{*}(BSs(32))$ below at least degree 13.
\end{lemma}
\begin{proof}
    One way to see this is to deploy another type of spectral sequence capable of computing groups of generalized homology theories like connective ko-homology, namely the Atiyah-Hirzebruch spectral sequence (AHSS)~\cite{atiyah1961vector}
    \begin{equation}
        E^2_{p,q} = H_{p}(B,G_{q}(F)) \Rightarrow G_{p+q}(X)\,.
    \end{equation}
    for a Serre fibration $X \to B$ with fiber $F$ and a generalized homology theory like cobordism or K-homology, i.e.\ satisfying all the Eilenberg-Steenrod axioms except for the dimension axiom.
    
    In particular we will use the following Serre fibration
    \begin{equation}
    \label{doublecoverBss}
          BSpin(4n) \to BSs(4n) \to B^2\mathbb{Z}_2 \,,
    \end{equation}
    which can be seen as a consequence of the Puppe sequence for the fibration $B\mathbb{Z}_2 \to BSpin(4n) \to BSs(4n)$.
    Now, since the differentials are group homomorphisms and since $E^{r+1} = \frac{Ker(d^r)}{Im(d^r)}$, they can not generate odd torsion groups from 2-torsion groups on subsequent pages except for differentials between different $\mathbb{Z}$ entries.
    
    Therefore, if there is no odd torsion on the second page of the aforementioned spectral sequence and we can exclude differentials between free Abelian entries, the infinity page isn't going to contain odd torsion either.
    
    Both $ko_{*}(BSpin(n))$ and $\Omega^{Spin}_{*}(BSpin(n))$ do not contain any odd torsion, see e.g.~\cite{Lee:2022spd} for this statement.
    Therefore, as long as the integral homology of $B^2\mathbb{Z}_2$ does not contain odd torsion groups the second page of the AHSS is free of odd torsion.
    Indeed, the integral homology groups for $B^2\mathbb{Z}_2 = K(\mathbb{Z}_2, 2)$ have been computed up to degree 200 in~\cite{clement2002integral}, showing absence of odd torsion groups below at least degree 13.
    
    Additionally, the integral homology groups of $K(\mathbb{Z}_2,2)$ do not contain any free Abelian groups in degree $n \geq 1$, thereby excluding any differentials between different $\mathbb{Z}$ on any page of the spectral sequence.
\end{proof}
\end{proof}
\subsubsection{From \texorpdfstring{$ko_{*}(BSs(32))$}{ko*(BSs(32))} to \texorpdfstring{$\Omega_{*}^{Spin}(BSs(32))$}{spin cobordism of BSs(32)}}
As the final step we can now complete our calculation of $\Omega^{Spin}_{*}(BSs(32))$
\begin{theorem}
As the result of the ABP-splitting \eqref{ABP} we get the following \\cobordism groups
    \begin{table}[h!]
    \begin{tabular}{ c | c c c c c c c c c }
    n & 0 & 1 & 2 &  3 & 4 & 5 & 6 & 7 & 8 \\
    \midrule
     $\Omega^{Spin}_n (BSs(32))$ &$\mathbb{Z}$ & $\mathbb{Z}_2$  & $2\mathbb{Z}_2$  & 0 & $2\mathbb{Z}\oplus \mathbb{Z}_2$ & 0& $2 \mathbb{Z}_2$  &0 & 5$\mathbb{Z} \oplus \mathbb{Z}_8$ \\
    \end{tabular}
    \vspace{1cm}
    \begin{tabular}{ c | c c c c }
    n  & 9 & 10 & 11 & 12 \\
    \midrule
     $\Omega^{Spin}_n (BSs(32))$ &$5\mathbb{Z}_2$ & $10\mathbb{Z}_2$  & $3 \mathbb{Z}_2$  & $8 \mathbb{Z} \oplus 9\mathbb{Z}_2 \oplus \mathbb{Z}_8$
    \end{tabular}
    \captionof{table}{Spin cobordism groups $\Omega^{Spin}_n (BSs(32))$.} 
     \label{table: Finalspinbordism}
    \end{table}
\end{theorem}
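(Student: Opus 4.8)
The strategy is to apply the Anderson--Brown--Peterson splitting \eqref{ABP} to the connective ko-homology groups $ko_n(BSs(32))$ established above; from there the proof is careful bookkeeping rather than new input. First I would recall that, as shown above, neither $ko_*(BSs(32))$ nor $\Omega^{Spin}_*(BSs(32))$ carries any odd torsion for $*\le 12$, so the $2$-completion on both sides of \eqref{ABP} may be dropped and it suffices to assemble the right-hand side as an ordinary direct sum of abelian groups. Because \eqref{ABP} is induced by a genuine wedge decomposition of spectra, there are no extension problems in this assembly: one simply adds the summand groups degree by degree.

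Next I would split the range $0\le n\le 12$ into three regimes. For $n\le 7$ every summand on the right of \eqref{ABP} other than $ko_n(BSs(32))$ is a connective homology theory shifted upward by at least $8$, hence evaluated in non-positive degree on the connected space $BSs(32)$ and therefore zero; thus $\Omega^{Spin}_n(BSs(32))=ko_n(BSs(32))$ and the first eight columns of the table are copied verbatim from the ko-homology table. For $8\le n\le 12$ one adds the summand $ko_{n-8}(BSs(32))$, i.e.\ exactly the groups $ko_0(BSs(32)),\dots,ko_4(BSs(32))$, all of which are already tabulated. Finally, for $10\le n\le 12$ one must additionally incorporate the $ko_{n-10}\langle 2\rangle$-summand, and verify that the remaining summands abbreviated by the ellipsis in \eqref{ABP} contribute nothing new below degree $13$. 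The crucial simplification here is that only $ko_j\langle 2\rangle(BSs(32))$ for $j\in\{0,1,2\}$ is needed: since $BSs(32)$ is path-connected with $H_1(BSs(32);\mathbb{Z})=0$ and $H_2(BSs(32);\mathbb{Z})=\mathbb{Z}_2$, the Atiyah--Hirzebruch spectral sequence for $ko\langle 2\rangle$-homology is concentrated on a short antidiagonal in these degrees, so the three groups are read off immediately from $\pi_{\le 2}(ko\langle 2\rangle)$ together with the bottom homology of $BSs(32)$.

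I expect the only genuinely delicate point to be this last regime: correctly identifying which ABP summands are active below degree $13$ and evaluating the low-degree $ko\langle 2\rangle$-homology (and anything else hidden in the ellipsis) of $BSs(32)$, where an off-by-one in the number of $\mathbb{Z}$ or $\mathbb{Z}_2$ factors is easy to commit. A good consistency check that I would run in parallel is to perform the identical assembly with $BSs(32)$ replaced by a point: the same recipe must then reproduce the classical groups $\Omega^{Spin}_n(\mathrm{pt})$ for $n\le 12$, which pins down precisely how the $ko\langle 2\rangle$- and higher-degree pieces are meant to enter. Once this is settled, summing the direct summands in each degree produces Table~\ref{table: Finalspinbordism}.
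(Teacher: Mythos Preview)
Your overall strategy matches the paper's: invoke the odd-torsion lemma to drop the $2$-completion, then assemble the ABP summands degree by degree, with the only genuine work being the $\Sigma^{10}$-piece. The difference is in how that third summand is handled. The paper stays on the Adams side throughout: rather than passing to an AHSS for $ko\langle 2\rangle$-homology, it writes the relevant $E_2$-input as $\Sigma^{10}J\otimes H^*(BSs(32),\mathbb{Z}_2)$ (where $J$ is the Joker $\mathcal{A}_1$-module), observes that in the range $\le 12$ this reduces to $\Sigma^{10}J\otimes\Sigma^2 J$, and then cites the algebraic fact (Baker) that $J\otimes J$ contains a free $\mathcal{A}_1$ at the bottom. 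A free $\mathcal{A}_1$ contributes a single $\mathbb{Z}_2$ in filtration~$0$ at degree~$12$, and by Margolis neither supports a differential nor participates in hidden extensions, so one is done. This bypasses both differentials and extension questions in one stroke.

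Your AHSS route is workable but a bit more optimistic than ``read off immediately''. In total degree $2$ the spectral sequence has contributions at $(p,q)=(0,2)$ and $(2,0)$, and you would still need to exclude a possible $d_3$ from $H_3(BSs(32);\pi_0(ko\langle 2\rangle))$ into $(0,2)$ and to resolve the extension between those two filtrations. More importantly, be careful what you take for $\pi_{\le 2}(ko\langle 2\rangle)$: in the paper's convention this spectrum is defined by $H^*(ko\langle 2\rangle;\mathbb{Z}_2)\cong\mathcal{A}\otimes_{\mathcal{A}_1}J$, so its homotopy is read off from $\mathrm{Ext}_{\mathcal{A}_1}(J,\mathbb{Z}_2)$ and is \emph{not} the $2$-connective cover of $ko$. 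Your proposed sanity check against $\Omega^{Spin}_*(\mathrm{pt})$ is exactly the right way to pin this down and catch any off-by-one in the final bookkeeping.
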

\begin{proof}
    Let's write the ABP-splitting \eqref{ABP} in a more convenient form for us to see what we mean by $ko_{n-10}\langle 2 \rangle(X)_{(2)}$:
    \begin{equation}
    \label{ABP2}
        H^*(MSpin, \mathbb{Z}_2) \cong \mathcal{A} \otimes_{\mathcal{A}_1} (\mathbb{Z}_2 \oplus \Sigma^8 \mathbb{Z}_2 \oplus \Sigma^{10} J \oplus \dots)
    \end{equation}
    Here, $J$ denotes the so-called ``Joker"-module, which we encountered already, for example directly as the first module within the ``x-part". For a detailed account of the next terms we refer to appendix D.1 in~\cite{Freed:2016rqq}.

    From the expression above we can see that we get:
\begin{align}
    \Omega_n^{Spin}(BSs(32))_{\widehat{2}} =& \,\pi_{t-s}(ko\, \wedge\, BSs(32))_{\widehat{2}}\, \oplus \, \pi_{8+t-s}(ko\, \wedge\, BSs(32))_{\widehat{2}}\\
    &\oplus \, \pi_{10+t-s}(ko\, \wedge\, J \wedge\, BSs(32))_{\widehat{2}}\, \oplus\, \dots\,, \nonumber
\end{align}

    where $\pi_{t-s}(ko \wedge J \wedge BSs(32))_{\widehat{2}}$ is calculated by the Adams spectral sequence $Ext^{s,t}_{\mathcal{A}_1}(J \otimes H^*(BSs(32),\mathbb{Z}_2), \mathbb{Z}_2) \Rightarrow \pi_{t-s}(ko \wedge J \wedge BSs(32))_{\widehat{2}}$.
    
    We see that apart from the two ko-homology building blocks we just get $\Sigma^{10} J \otimes H^*(BSs(32)) = \Sigma^{10} J \otimes (\Sigma^2 J \oplus \dots)$ on top in the degree range we are interested in. This amounts to a full $\mathcal{A}_1$-module in degree 12~\cite{baker2020homotopy}. The rest is just utilizing our result on the connective ko-homology of $BSs(32)$.
    Finally, we can just apply our knowledge from the AHSS that $\Omega_{n \leq 12}^{Spin}(BSs(32))$ doesn't contain odd torsion, so we are done.
\end{proof}

\section[The non-vanishing cobordism groups and string theory]{\texorpdfstring{The non-vanishing cobordism groups and \\ string theory}{The non-vanishing cobordism groups and string theory}}
\label{sec:physics_interpretation}
Finally, we want to sort out the interpretation of the non-trivial cobordism groups in regards to the Cobordism Conjecture requiring some physical mechanism trivializing them.
Before we dive into the trivialization mechanism let us provide some explanation on the physics of the ko-homology building blocks under the ABP decomposition \eqref{ABP}.
\subsection{The ko-homology building blocks \texorpdfstring{$ko_{n}(BSs(32))$}{kon(BSs(32))}} 
We begin by noting that we can split $ko_{n}(BSs(32)) = ko_{n}(pt)\, \oplus \, \widetilde{ko}_{n}(BSs(32))$, where $\widetilde{ko}_{n}(X)$ are the so called reduced ko-homology groups of $X$. With $n \geq 0$ the first part under the splitting $ko_{n}(pt)$ can be nicely mapped to $KO^{-n}(pt)$ under Poincar\'e duality. These groups specifically were famously proposed to classify type I D-branes~\cite{Witten:1998cd}. 
This raises the question of how to properly understand the latter groups under the splitting.
The refinement we propose is that the $ko_n$-homology groups at hand detect the magnetic charges of the type I D$p$-branes measured in the $n$-dimensional space transverse to the D-brane worldvolume. 

Specifically, the $ko_{n}(pt)$ detects the set of gravitational magnetic charges, whereas $\widetilde{ko}_{n}(BSs(32))$ classifies the gauge-theoretic magnetic charges, which arise from open fundamental strings connecting the D-brane to the background D9-branes. 

To exemplify this let us look at the magnetic charge of the D5-brane in type I string theory. 
Let us consider its contribution to the Bianchi identity of $C_2$, which we integrate over the compact transverse space:
\begin{equation}
\label{tadpole_D5}
    N_{grav} = N_{gauge} + N_{D5} \,,
\end{equation}
where $N_{grav} = - \frac{p_1}{2} = \frac{1}{16 \pi^2} \int_{M_4} \mathrm{Tr}\, R \wedge R$ denotes the curvature contribution and $N_{gauge} = - p_1(E) = \int_{M_4} \mathrm{Tr}\, F \wedge F$ stands for the gauge instanton contribution.
This puts the gravitational and the gauge instanton on the same footing as the number of D5-branes making them really indistinguishable from one another~\cite{Douglas:1995bn}.
This behavior should not surprise us, if our refinement to the K-theory - D-brane relationship holds true.
And precisely, $\frac{1}{16 \pi^2} \int_{M_4} \mathrm{Tr}\, R \wedge R$ is just a multiple of $\hat{A}_4$, i.e.\ the index of the Dirac operator, and detects non-triviality of $\mathbb{Z} \cong ko_{4}(pt)$. Fittingly, $\int_{M_4} \mathrm{Tr}\, F \wedge F$ uplifts to the cobordism invariant detecting the other free Abelian piece in codimension 4: $\mathbb{Z} \subseteq \widetilde{ko}_{4}(BSs(32))$.
After carefully implementing electric-magnetic duality (Poincar\'e duality) on a fixed background space there is a further refinement for the electric charges. This involves a few more steps, which will be worked through in~\cite{Basile:2024}. 

Of course, we should ask for an interpretation of the same ko-homology building blocks for the $SemiSpin(32)$ heterotic string as well.
Actually, we can see this as the K-theoretic realization of Hull's proposal for the non-perturbative sector of the HO string~\cite{Hull:1998he}, which is based on the description of the $SemiSpin(32)$ heterotic string as a composite orientifold of type IIB~\cite{Hull:1997kt}
\begin{equation}
    \widetilde{\Omega} = S \, \Omega \, S^{-1}\,,
\end{equation}
consisting of the $S$, the S-duality $SL(2, \mathbb{Z})$-transformation of type IIB, and $\Omega$ the standard type IIB orientifold giving rise to type I string theory. Consistency of the theory requires now 32 NS9-branes giving rise to the background gauge field just like their S-dual twins. Still, why does string perturbation theory look so fundamentally different for the $SemiSpin(32)$ heterotic string as compared to the type I string?

The resolution lies in the fact that the fundamental open string on the type I side gets turned into a D-string on the heterotic side, whose tension scales like $\frac{1}{g_s}$. This entails that in the perturbative limit $g_s \to 0$ the D-string, becoming infinitely heavy, retracts into the object it is ending on. The massless sector of the D-string survives this limit and for example for D-strings attached to the fundamental closed HO string these massless modes provide the worldsheet structure we are familiar with.

Our ko-homology groups of course are not sensitive to the string coupling and a heterotic interpretation entails that the groups classify the charges of the heterotic NS$p$-branes associated to its open D-string sector. In particular $\widetilde{ko}_n(BSs(32))$ assorts the charges resultant from the background NS9-branes connected to the lower dimensional NS$p$-branes through the D-strings.
Unsurprisingly, we get for example a tadpole condition for NS5-branes on a compact transverse space mirroring \eqref{tadpole_D5}. 
So S-duality between type I and HO string theory manifests itself
as self-duality of the ko-homology groups:
\begin{equation}
    ko_n(BSs(32))\, \xleftrightarrow{\;S-\text{duality}\;\,}\, ko_n({}^{L}BSs(32)) \cong ko_n(BSs(32))\,.
\end{equation}
In the following we will usually take the type I perspective on the cobordism groups. Of course, there exists a S-dual heterotic perspective along the lines outlined above, which we sometimes highlight as well.

Next we want to take a look at the torsional pieces of our spin cobordism groups up until $n = 8$.
\subsection{The torsional spin cobordism subgroups \texorpdfstring{$\tilde{\Omega}^{Spin}_{n \leq 8}(BSs(32))$}{up to d = 8}}
As we have discussed before certain $\mathbb{Z}_{2^p}$ summands of $\Omega^{Spin}_2(BSs(32))$ are in the image of the map to $\Omega^{Spin}_n(B^2\mathbb{Z}_2)$.
In particular, we want to look at the image under the map between the reduced cobordism groups.
Their associated cobordism invariants are $\int_{M_{2k}} x_2^{k}$ with $k = 1, 3, 4$, where $x_2^{k}$ are the ``generalized" Stiefel-Whitney classes referenced in~\cite{Berkooz:1996iz, Witten:1997bs} and $\frac{1}{2} \mathcal{P}_2(x_2)$ in dimension $n=4$~\cite{Wan:2018bns}, where $\mathcal{P}_2(x)$ is the Pontryagin square of $x$.
Additionally, we encounter a $\mathbb{Z}_2$ in $\Omega^{Spin}_6(BSs(32))$ detected by $x_2 y_4$.

\begin{table}[h!]
    \begin{center}
    \resizebox{\textwidth}{!}{
    \begin{tabular}{ c | c c c c c c c c c }
    n & 0 & 1 & 2 &  3 & 4 & 5 & 6 & 7 & 8 \\
    \midrule
     $\widetilde{\Omega}^{Spin}_n (BSs(32))_{tors}$ &0 & 0  & $\mathbb{Z}_2$  & 0 & $\mathbb{Z}_2$ & 0& $\mathbb{Z}_2 \oplus \mathbb{Z}_2$  &0 & $\mathbb{Z}_8$ \\
    \midrule
    invariants & - & - & $\int_{M_2} x_2$ & - & $\frac{1}{2} \mathcal{P}_2(x_2)$ & - & $\int_{M_6} x_2^3$ , $\int_{M_6} x_2 y_4$ & - & $\int_{M_8} x_2^4$ \\
    \bottomrule
    \end{tabular}
    }
    \end{center}
    \caption{Reduced torsional cobordism groups and their invariants}
\end{table}

\begin{itemize}[leftmargin = 5.5mm]
    \item n = 2: Following~\cite{Freed:2016rqq, Debray:2023yrs} we know that the nontrivial $\mathbb{Z}_2$ is detected by the corresponding cohomology class, since the $\mathbb{Z}_2$ stems from filtration $s = 0$ in the Adams Spectral sequence. The convention in the physics literature on this topic is to call the cohomology class $\widetilde{w}_2$ in analogy to the Stiefel-Whitney-classes $w_i$. However, we will stick to our convention of calling it $x_2$. First, we note that $\widetilde{\Omega}^{Spin}_2(BSs(32)) \cong \widetilde{ko}_2(BSs(32))$
    and therefore expect a D-brane interpretation.
    The simplest example we can look at is a type I compactification on a $T^2$ with full $Ss(32)$-gauge group, i.e.\ without vector structure in the language of~\cite{Witten:1997bs}. As laid out in~\cite{Uranga:2000xp} the non-trivial $\mathbb{Z}_2$ ko-theory charge arising here should be understood as the charge of a non-BPS $\widehat{\text{D}7}$-brane.
    Now, why is that? The charge of the ``conventional" $\widehat{\text{D}7}$-brane in type I was identified with $KO^{-2}(pt)$ in~\cite{Witten:1998cd}, which is of course Poincar\'e dual to $ko_2(pt)$. 
    We can draw an analogy to the D5-brane case we saw before.
    While the $\widehat{\text{D}7}$-brane does not couple to a dynamical gauge field, the K-theoretic tadpole cancellation \eqref{tadpole_D5} still arises~\cite{Uranga:2000xp}. 
    The ingredients are again invariants of the ko-homology groups in question:
    \begin{equation}
    \label{tadpole_D7}
        \int_{M_2} x_2 + \hat{A}_2 + N_{\widehat{D7}} = 0 \mod 2\,,
    \end{equation}
    where we refer to the mod 2 index of the Dirac operator on the transverse compact space $M_2$ as $\hat{A}_2$~\cite{Atiyah_V:1971}.
    Similarly to the D5-brane case the charge of the $\widehat{\text{D}7}$-brane is really indistinguishable from the gauge and gravitational contribution.
    This should not surprise us, since both invariants have been identified with the non-BPS $\widehat{\text{D}7}$-brane. The gravitational invariant showed up as a ``Berry's phase" in the system of a probe non-BPS $\widehat{D0}$-brane and the $\widehat{\text{D}7}$-brane itself~\cite{Gukov:1999yn}. The complementary statement can be found in~\cite{Uranga:2000xp}: The $\widehat{D7}$ shows up as a toron gauge-field configuration in the background presence of D9-brane(s), matching precisely our expectation that this charge would not be realized without the background gauge field provided by the D9-branes. 
    The torsional tadpole constraint \eqref{tadpole_D7} can be fulfilled in a couple of different ways. The simplest case is, if every term vanishes separately (mod 2). In the language of~\cite{McNamara:2019rup}, the higher form symmetries are gauged, i.e.\ the physical system is in the trivial cobordism configuration.
    
    However, this makes the opposite configuration, i.e.\ both the gauge and gravitational charge contributions are odd and $N_{\widehat{D7}} = 0 \mod 2$, more subtle. 
    While cancellation against each other ensures string theoretic tadpole cancellation, none of the two nontrivial global symmetries are resolved by gauging.
    
    So the remaining pathway to quantum gravitational consistency is breaking both of those by codimension 3 defects.
    Remarkably, there seem to exist supergravity solutions describing these codimension 3 defects. 
    The defect 6-brane breaking $\widetilde{\Omega}^{Spin}_2(BSs(32))$ was identified in~\cite{Kaidi:2023tqo} as the extremal limit of the non-supersymmetric heterotic black 6-brane solution~\cite{Horowitz:1991cd}, which looks like a 4-dimensional magnetic black hole with 6 flat dimensions added:
    \begin{equation}
        ds^2 = dx^{\mu}dx_{\mu} + dy^2 + r_0 d\Omega^2_2, \quad e^{-2\phi} = g_s^{-2} e^{y/r_0}\,,
    \end{equation}
    additionally the $S^2$ horizon comes equipped with non trivial $\int_{S_2} x_2$ charge.
    
    Moreover, the authors of~\cite{Kaidi:2023tqo} propose a detailed worldsheet description of this 6-brane through a $(SU(16)/\mathbb{Z}_4)_1$ spin-CFT with $c_L = 15$ on the $S^2$-part, such that in total one gets:
    \begin{equation}
        \mathbb{R}^{(1,6)} \, \times \, \mathbb{R}_{\text{ linear dilaton}} \, \times \, \text{CFT}((SU(16)/\mathbb{Z}_4)_1)\,. 
    \end{equation}
    The 6-brane, which has to accompany the non-BPS $\widehat{\text{D}7}$-brane and breaks $\Omega^{Spin}_2(pt)$, is expected to have a supergravity solution as well. It should look somewhat similar to its twin, the ETW-7-brane, arising when studying the consistency of the backreacted geometry of a single non-BPS $\widehat{\text{D}8}$-brane~\cite{Blumenhagen:2022mqw}.
    There, one gets a spontaneously compactified dimension, in form of a $S^1$, in the space transversal to the $\widehat{\text{D}8}$-brane, which matches that the generator of $\widetilde{\Omega}^{Spin}_1(pt)$ is precisely the circle with periodic boundary condition for fermions. As expected one finds that the topology of the space transverse to the ETW-7-brane solution is that of a disk.
    The ETW-6-brane subsequently would feature a transverse topology $S^1 \times D^2$ bounding the 2-torus generating $\Omega^{Spin}_2(pt)$\footnote{We have not commented on the interval in the geometry created by the dynamical tadpole associated to the non-BPS $\widehat{D8}$/$\widehat{\text{D}7}$-brane. However, it turns out, when properly accounting for the precise topology, that only the boundary of the interval has a nontrivial cobordism group. Therefore, we can think of the ETW-brane as the defect necessary to cap off the $S^1$ boundaries of the cylinder with disks.}. 
    
    While the geometry of the respective solutions reflects nicely the expected properties from the cobordism viewpoint, the purely bosonic Lagrangian utilized to construct these solutions does not detect the periodic fermionic boundary conditions.
    One option would be to use the aforementioned $\widehat{D0}$-brane probe~\cite{Gukov:1999yn}. It should be emphasized however that this $\mathbb{Z}_2$ arises from the necessary spin-structure and not from the background D9-branes introducing the background gauge theory and as already discussed above has to be combined with $\widetilde{ko}_2(BSs(32)) \cong \mathbb{Z}_2$ to describe the full K-theoretic charges associated to the non-BPS $\widehat{\text{D}7}$-brane.  
    
    Finally, we can ask, if we can understand the transformation of these defects under duality. In particular the $Ss(32)$-heterotic string is T-dual to the $(E_8 \times E_8)\rtimes \mathbb{Z}_2$ heterotic string.
    In particular it is known that the $Ss(32)$-string on a $T^2$-compactification with nontrivial $x_2$ is T-dual to the $(E_8 \times E_8)\rtimes \mathbb{Z}_2$ heterotic string, where the two $E_8$\,s are exchanged, when going around one circle in the 2-torus~\cite{Witten:1997bs}.
    Precisely this exchange symmetry is detected by $\Omega^{Spin}_1(B\mathbb{Z}_2)$ stemming from the fibration $E_8 \times E_8 \to (E_8 \times E_8) \rtimes \mathbb{Z}_2 \to \mathbb{Z}_2$ as explained in~\cite{Debray:2023rlx}.
    While we work at a different level of structural refinement, i.e.\ not yet incorporating the twisted string structure of the $Ss(32)$ heterotic string, these classes would also be present, if one would work with $\Omega^{Spin}_n(B((E_8 \times E_8)\rtimes \mathbb{Z}_2))$ instead, matching our level of refinement.

    The T-duality between two NS5-branes and the non-supersymmetric heterotic 6-brane proposed in~\cite{DeFreitas:2024yzr} is not in reach at this level of refinement as the T-duality appears to map defects breaking nontrivial cobordism groups. The NS5-brane first shows up as a breaking defect at String-structure breaking $\Omega^{String}_3(pt) \cong \mathbb{Z}_{24}$ generated by a $S^3$ with H-flux~\cite{McNamara:2019rup}. Therefore, we should expect to find the $S^3$ as the generator of the cobordism with twisted string structure for the HO-string, probably even \\ $\Omega^{String-Ss(32)}_3(pt) \cong \Omega^{String}_3(pt)$.

    \item n = 4: The natural setup to look at here is type I/HO string compactifications without vector structure on a four dimensional manifold. In particular the orientifold of type IIB on the $T^4/\mathbb{Z}_2$ orbifold limit of K3 studied in~\cite{Gimon:1996rq} comes to mind. Tadpole cancellation requires us to introduce 8 dynamical D5-branes with its corresponding collective Chan-Paton index value being 32 due the orientifold and the orbifold.
    There are a multitude of different solutions based on the position of the D5-branes, i.e.\ whether they reside at one of the 16 fixed points or not.
    Now, does this orbifold construction impact our analysis as we are not working with $\mathbb{Z}_2$ equivariant ko-homology/spin cobordism necessary to properly take the orbifold into account?
    
    The answer turns out to be no. As demonstrated in~\cite{Berkooz:1996iz} the fixed points of this precise orbifold\footnote{This does not hold true for other orbifold limits of K3~\cite{Polchinski:1996ry}.} can actually all be blown up and the spectra fully agree with the smooth K3. More importantly for our discussion here, the fixed points were shown to each carry a hidden instanton and when blown up the associated gauge bundle supported on the $S^2$ replacing the singularity indeed stems from a Ss(32)-bundle. This can be seen as follows: The construction in~\cite{Gimon:1996rq} necessitates a nontrivial twist acting on the Chan-Paton label by a matrix M in $16 \times 16$ block form:
    \begin{equation}
        M = 
        \begin{pmatrix}
            0 & I\\
            -I & 0
        \end{pmatrix}\,.
    \end{equation}
    Since this does not square to 1, it would be causing an inconsistency, if the gauge group were $SO(32)$. But we can take $M^2 = w$, where $w \in Z(Spin(32))$ and $w = - 1$ in the vector and one spinor representation of $Spin(32)$, but $w = 1$ in the second spinor representation. So for the actual $Ss(32)$ gauge group we have exactly $M^2 = 1$ and topologically trivial paths around the fixed point are well defined.
    
    Now, upon blowing up the singularity the authors of~\cite{Berkooz:1996iz} showed that the resulting two-sphere $S$ with self-intersection $-2$ supports a gauge field obeying Dirac quantization for the adjoint or spinor, but not for the vector. Subsequently the first Chern number of the gauge bundle on $S$ has to be normalized as $\int_S \frac{F}{2 \pi} = \frac{1}{2}$.

    After blowing up the singularities the final instanton number on the Eguchi-Hanson space $X$\footnote{Here $X$ is used to approximate the space close to $S$ and can be treated as the total space of the line bundle $\mathcal{O}(−2)$, when we regard the two-sphere $S$ as the complex space $\mathbb{P}_1$.} turns out to be precisely the one matching the requirement that the 16 fixed points provide the missing \contour{black}{16} to the K3 tadpole cancellation condition $24 = n_{D5}$ + \contour{black}{16}. Therefore, the GP model after singularities have been blown up can be understood from a non-equivariant cobordism perspective.
    
    However, while this example is very instructive to construct an instanton exhibiting the key physical feature of absence of vector structure as expected for a $Ss(32)$-instanton, it is not yet what we should aim for. It turns out that the ``background" gauge group provided by the D9-branes is not $Ss(32)$ anymore, but broken to $U(16)/\mathbb{Z}_2$.
    Therefore, we will pivot to a setup, where the full $Ss(32)$-group is preserved. A detailed account of this setup as the general F-theory construction of type I/HO K3 compactifications with fully preserved 9-brane gauge group is provided by~\cite{Aspinwall:1996vc}.
    It turns out that the instanton construction of~\cite{Berkooz:1996iz} has to be adapted just slightly.
    In general we can write the integral over the curvature of the gauge bundle as
    \begin{equation}
    \label{curv_instanton}
        \int_{C_i} \frac{F}{2 \pi} = \frac{1}{2} (\widetilde{w}_2 \cdot C_i) + k\,,
    \end{equation}
    where we adapted our notation to the one in~\cite{Aspinwall:1996vc}, denoting with $C_i$ one of the 16 exceptional divisors associated to the 16 fixed points of the $\mathbb{Z}_2$ orbifold limit of K3 and $\widetilde{w}_2 \in H^2(C_i, \mathbb{Z}_2)$ arising from the classifying map $f: X \to BSs(32)$.
    
   \cite{Aspinwall:1996vc}~now argues that since the curvature of the instanton should arise from the local geometry the ``generalized" Stiefel-Whitney class $\widetilde{w}_2$ has to be proportional to $C_i$. 
    While the instanton of~\cite{Berkooz:1996iz} is defined by $\widetilde{w}_2 = \frac{1}{2} C_i$, the instanton in the case of unbroken $Ss(32)$ is constructed through $\widetilde{w}_2 = C_i$~\cite{Aspinwall:1996vc}.
    Now, the obstruction can not be detected by $C_i$ itself, as $C_i \cdot C_i = 0 \mod 2$. Nevertheless, we can consider a dual exceptional divisor $C'_i$, which can detect the obstruction, i.e.\ $C'_i \cdot \widetilde{w}_2 = C'_i \cdot C_i = 1$.
    Therefore, this results in a contribution of four per instanton to the instanton number on K3.
    Based on the dual description of the type I/HO string on K3 in terms of F-theory compactified on a Calabi-Yau threefold $X$ with elliptic fibration $f: X \to \mathbb{F}_n$, where $\mathbb{F}_n$ denotes a Hirzebruch surface,~\cite{Aspinwall:1996vc} further derived a precise correspondence between the number of these unconventional instantons present and the integer $n$ defining the Hirzebruch surfaces as $\mathbb{F}_n = \mathbb{P}(\mathcal{O}_{\mathbb{P}_1} \oplus \mathcal{O}_{\mathbb{P}_1}(n))$, which are $\mathbb{P}_1$-fibrations over $\mathbb{P}_1$~\cite{hirzebruch1951klasse}.
    The tadpole cancellation on the K3 for the $Ss(32)$-heterotic string then becomes 
    \begin{equation}
        \sum_i^{\mu} k_i + 4(4-n) = 24\,,
    \end{equation}
    where the first contribution comes from $\mu$ groups of conventional heterotic instantons and the second contributions from the special instantons of instanton number 4 associated to the $4-n$ collisions between the discriminant and the zero section of $\mathbb{F}_n$ in the F-theory description.
    Equally, we can phrase this as a contribution from 
    \begin{equation}
    \widetilde{w}_2 \cdot \widetilde{w}_2 = 2(n-4)   
    \end{equation}
    through the $\widetilde{w}_2$ correction to the integral over the curvature of the gauge bundle \eqref{curv_instanton}.
    This entails that there are three succinct equivalence classes\footnote{To emphasize the difference between models ``with" and ``without vector structure", the case $\widetilde{w}_2 = 0$ is conventionally split off from the $\widetilde{w}_2^2 = 0 \mod 4$ equivalence class. Although, just by looking at $\widetilde{w}_2^2$ we cannot detect this difference.} of the type I/HO string on K3 enumerated by elements in integral homology $H_4(B^2\mathbb{Z}_2, \mathbb{Z})$, namely 
    \begin{align}
        &\widetilde{w}_2 = 0\,, \\
        &\widetilde{w}_2 \neq 0 \text{ and } \widetilde{w}_2^2 = 0 \mod 4\,,\\
        &\widetilde{w}_2 \neq 0 \text{ and } \widetilde{w}_2^2 = 2 \mod 4\,.
    \end{align}
    Now we will argue that the charge of the instanton can be understood from gauging $\mathbb{Z}_2 \cong \widetilde{\Omega}^{Spin}_4(B^2\mathbb{Z}_2) \subset \widetilde{\Omega}^{Spin}_4(BSs(32))$, i.e.\ the only physically acceptable configurations are the ones with $\frac{1}{2}\mathcal{P}(x_2) = 0 \mod 2$, where $\frac{1}{2}\mathcal{P}(x_2)$ is the cobordism invariant associated to the nontrivial $\mathbb{Z}_2$.

    By employing the associated Adams spectral sequence 
    \begin{equation}
        E_{2}^{s,t} = \mathrm{Ext}^{s,t}_{\mathcal{A}_0}(H^*(B^2\mathbb{Z}_2, \mathbb{Z}_2), \mathbb{Z}_2) \Rightarrow H_{t-s}(B^2\mathbb{Z}_2, \mathbb{Z})_{\widehat{2}}\,.
    \end{equation}
    we can examine the close connection between $H_4(B^2\mathbb{Z}_2, \mathbb{Z}) = \mathbb{Z}_4$ and\\ $\widetilde{ko}_4(B^2\mathbb{Z}_2) = \mathbb{Z}_2$, that both arise from the same $d_2$ cutting down the initial $h_0$ tower going from the 2nd to the 3rd page in their respective Adams spectral sequence.
    \begin{sseqdata}[
        name = ko4,
        Adams grading, classes = fill,
        x range = {2}{6}, y range = {0}{8},
        x tick step = 1,
        run off differentials = {->},
        xscale = 0.75,
        yscale = 0.80,
        class pattern = linearnew
        ]
        \class[MPP_blue_light](2,0)
        \class[MPP_blue_light](4,1)
        \DoUntilOutOfBoundsThenNMore{2}{
        \class[MPP_blue_light](\lastx,\lasty+1)
        \structline[MPP_blue_light]
        }
        \class[MPP_blue_light](5,0)
        \d2
        \DoUntilOutOfBounds{
        \class[MPP_blue_light](\lastx,\lasty+1)
        \structline[MPP_blue_light]
        \d2
        }
    \end{sseqdata}
    \begin{sseqdata}[
        name = H4,
        Adams grading, classes = fill,
        x range = {2}{6}, y range = {0}{8},
        x tick step = 1,
        run off differentials = {->},
        xscale = 0.75,
        yscale = 0.80,
        class pattern = linearnew
        ]
        \class[MPP_green_dark](2,0)
        \class[MPP_green_dark](4,0)
        \DoUntilOutOfBoundsThenNMore{2}{
        \class[MPP_green_dark](\lastx,\lasty+1)
        \structline[MPP_green_dark]
        }
        \class[MPP_green_dark](5,0)
        \d2
        \DoUntilOutOfBounds{
        \class[MPP_green_dark](\lastx,\lasty+1)
        \structline[MPP_green_dark]
        \d2
        }
    \end{sseqdata}

    \begin{figure}[H]
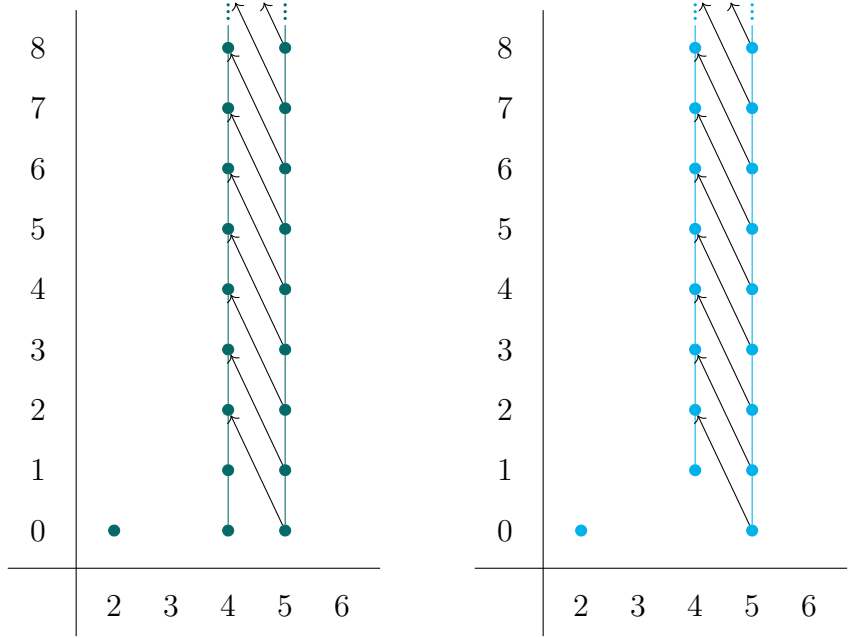

        \centering
        \printpage[ name = H4, page = 2]
        \hspace{1cm}
        \printpage[ name = ko4, page = 2]
        \caption{Second pages $E_2$ for $H_{4}(B^2 \mathbb{Z}_2, \mathbb{Z})_{\widehat{2}}$ and $ko_{4}(B^2 \mathbb{Z}_2)_{\widehat{2}}$}
        \label{fig:H4_ko4_comp}
    \end{figure}
    We lose the class in filtration zero, when going from $H_{4}(B^2 \mathbb{Z}_2, \mathbb{Z})_{\widehat{2}}$ to \\
    $\widetilde{ko}_{4}(B^2 \mathbb{Z}_2)_{\widehat{2}}$, because of a non-trivial second Steenrod Square $Sq^2$ enforcing the spin condition\footnote{This $Sq^2$ is the first non-trivial differential $d_2$ for the corresponding AHSS $H_{p}(B^2 \mathbb{Z}_2,ko_{q}(pt)) \Rightarrow ko_{p+q}(B^2 \mathbb{Z}_2)$ and can be understood as an instability due to D-instantons as discussed in~\cite{Maldacena:2001xj}.}. 
    At this point we can exploit the fact that the two $\mathbb{Z}_2$ in degree 4 and filtration 0 and 1, let's call them $a$ and $b$, are connected by a $h_0$ and therefore detect corresponding classes $\alpha, \beta \in H_{4}(B^2 \mathbb{Z}_2, \mathbb{Z})$ with $\beta = 2\, \alpha$. This means that $b$ exactly detects the equivalence classes $\widetilde{w}_2^2 = 0, 2 \mod 4$ we are interested in. Now, the map $\mathcal{A}_{1} \to \mathcal{A}_0$ induces a map 
    \begin{equation}
        \phi: \mathrm{Ext}^{1,3}_{\mathcal{A}_1}(H^*(B^2\mathbb{Z}_2, \mathbb{Z}_2), \mathbb{Z}_2) \to \mathrm{Ext}^{1,3}_{\mathcal{A}_0}(H^*(B^2\mathbb{Z}_2, \mathbb{Z}_2), \mathbb{Z}_2)\,,
    \end{equation}
    which is an isomorphism due to mapping a $\mathbb{Z}_2$ to another $\mathbb{Z}_2$(see appendix D. of~\cite{Debray:2023yrs} for a very similar map).
    Therefore, our cobordism invariant detects the same equivalence classes as $\widetilde{w}_2^2 = 0, 2 \mod 4$.
    Invoking gauging the nontrivial $\mathbb{Z}_2 \subset \Omega^{Spin}_4(BSs(32))$ we need even numbered magnetic charges $\frac{1}{2}\mathcal{P}(x_2) = 0 \mod 2$ carried by the NS5-branes/D5-branes in the respective descriptions of the theory. 
    Going backwards through the chain we can map this to $\widetilde{w}_2^2 = 0 \mod 4$ or expressed differently the contribution of the NS5-branes/D5-branes, i.e.\ the \\ $SemiSpin(32)$-instantons in the full F-theory framework, to the instanton number measured within $H_{4}(BSs(32), \mathbb{Z})$ has to be a multiple of 4, such that $\widetilde{w}_2^2 = 0 \mod 4$. As we reviewed above this is exactly the charge such an instanton contributes~\cite{Aspinwall:1996vc}.

    \item n = 6: Additionally to $\int_{M_6} x_2^3 \in \mathbb{Z}_2$ we also encounter $\int_{M_6} x_2 y_4 \in \mathbb{Z}_2$, which originates from the ``x-y-part" in the Adams spectral sequence. Both stem from a $\mathbb{Z}_2$ in filtration zero and are therefore detected by their cohomological counterparts.
    Curiously, we encounter a somewhat unexpected asymmetry between $\Omega^{Spin}_6(pt) \cong ko_6(pt) = 0$ and $\widetilde{\Omega}^{Spin}_6(BSs(32)) \cong \widetilde{ko}_6(BSs(32)) = 2\, \mathbb{Z}_2$.
    We have already seen generally and for $ko_2(BSs(32))$ in particular that the splitting principle $ko_n(BSs(32)) \cong ko_n(pt) \oplus \widetilde{ko}_n(BSs(32))$ assorts the abelian groups into different open string sectors.
    Consequently, this suggests that our nontrivial $\widetilde{ko}_6(BSs(32))$ groups capture the charges of a non-BPS $\widehat{\text{D}3}$-brane arising because of open fundamental strings connecting the $\widehat{\text{D}3}$-brane to the background stack of 32 D9-branes and an O9-plane. \\
    Thus, our expectation is that this non-BPS D3-brane would only exist as a boundary condition for DN strings supplying it with the necessary gauge degrees of freedom to stabilize it, while for the DD sector there is no topological obstruction for a decay. 
   \cite{Frau:1999qs}~provides a general construction of the type I non-BPS $\widehat{Dp}$-branes as type IIB D$p$-$\overline{\text{D}p}$-brane bound states, where the generic tachyon can be projected out by the orientifold.
    By analyzing the DD sector of the open string amplitude precisely this anticipated tachyonic instability was found. One gets the following condition for the absence of tachyons
    \begin{equation}
        \mu_p = 2 \sin(\frac{\pi}{4}(9-p))\, >\, 0\,,
    \end{equation}
    which is only true for all the non-BPS $\widehat{Dp}$-branes classified by $ko_n(pt)$, i.e.\ $p = -1, \, 0,\, 7,\,8$.
    However, things change once we look at the DN sector. \cite{Frau:1999qs}~also give a criterion for tachyonic instability in the DN sector, namely:
    \begin{equation}
        a_{NS} = \frac{1}{2} - \frac{\nu}{8}\, <\, 0\,.
    \end{equation}
    Here, $\nu$ is the number of Dirichlet-Neumann directions available to the open string. Concretely, we see that our non-BPS $\widehat{\text{D}3}$-brane would be free of tachyons in this sector. Interestingly, this is completely orthogonal to the non-BPS $\widehat{D7}$ and $\widehat{\text{D}8}$-branes, which are unstable in this sector while they are stable in the DD sector. The consequences of this DN sector instability was explored in~\cite{Loaiza-Brito:2001yer}.
    It would be very interesting to further explore the DD instability in the presence of the D9-branes/O9-plane background with $\int_{M_6} x_2 y_4$ or $\int_{M_6}x_2^3$ non-trivial, saving the configuration from completely decaying to the vacuum.
    In~\cite{Witten:1997bs} the case of $\int_{T^6} x_2^3 \neq 0$ realized as a type IIB orientifold on $T^6/\mathbb{Z}_2$ was briefly mentioned. Of course this raises the question if this charge is cancelled. This is a bit different from the $n=4$ case, where the cohomology class $x_2^2$ does not detect the spin cobordism class.
    In the aforementioned orientifold model there are 2 D3-brane pairs present on the type IIB side. It seems reasonable to expect that the charge $\int_{T^6} x_2^3$ can be associated to a non-BPS $\widehat{\text{D}3}$-brane on the type I side and triviality is achieved by a tadpole cancelling configuration of them.
    
    From the perspective of the cobordism conjecture one might reasonably expect that the non-triviality is resolved by uplifting from spin-cobordism to a twisted string structure by properly taking the Bianchi identity into account. It might happen that both $\int_{M_6} x_2^3$ and $\int_{M_6} x_2 y_4$ do not detect any cobordism classes after the uplift.
    While this scenario provides a satisfying answer to the non-triviality of the spin cobordism classes, we would naively violate K-theoretic charge cancellation in the DN-sector.
    In particular in accordance to the non-BPS $\widehat{\text{D}7}$-brane case~\cite{Uranga:2000xp} we would expect some form of non-perturbative inconsistency for the type I/HO dual of the aforementioned type IIB model. 
    Besides the gauging of the charge there is also the pathway of breaking the corresponding global symmetries by suitable codimension 7 defects left to be explored. 

    So far we have not commented on the close connection between discrete flux choices and the cohomology classes we have encountered up until this point. In particular $\int_{M_2} x_2 \neq 0$ and $\int_{M_2} x_2^2 \neq 0$ are dual to type IIB orientifold configurations with discrete values for $B_2$ flux (or $B_2^2$ respectively)~\cite{Sen:1997pm, Bianchi:1997rf, Bianchi:1991eu}.
    Consequently, one might be able to explore compactifications of type I with nontrivial $\int_{M_6} x_2^3$ or $\int_{M_6} x_2 y_4$ from a dual type IIB perspective corresponding to specific discrete fluxes turned on. In~\cite{Morrison:2001ct} the authors explored a possible discrete 6-form flux and remarked the absence of a degree 6 ``generalized" Stiefel-Whitney class in the cohomology of $BSs(32)$, i.e.\ there is no $x_6$, complicating the identification of the discrete flux with a cohomology class.

    \item n = 8: Alike the $n=6$ case we attain a torsional cobordism group detected by a cohomological invariant $\int_{M_8} x_2^4$, which can also be tracked from the ko-homological viewpoint.
    It would be very interesting to investigate whether such compactifications with nontrivial torsional K-theory charge arise from F-theory fivefold compactifications~\cite{Schafer-Nameki:2016cfr}, presumably alike the other type I/HO compactifications ``without vector structure" in its frozen phase~\cite{Witten:1998cd, Tachikawa:2015wka, Oehlmann:2024cyn}.
    
\end{itemize}    

\subsection[The remaining torsional subgroups \texorpdfstring{$\Omega^{Spin}_{n > 8}(BSs(32))$}{above d = 8}]{\texorpdfstring{The remaining torsional subgroups \texorpdfstring{$\Omega^{Spin}_{n > 8}(BSs(32))$}{above d = 8}}{The remaining torsional subgroups of \texorpdfstring{$\Omega^{Spin}_{n > 8}(BSs(32))$}{above d = 8}}}

As we have seen in table \ref{table: Finalspinbordism}, the groups $\Omega^{Spin}_{n>8}(BSs(32))$ are crowded with torsional subgroups, which makes it tough to decipher the physical meaning of each one of them. Therefore, we will just make some general remarks. First let us mention $\Omega^{Spin}_{11}(BSs(32))$. Based on its non-triviality one would expect global anomalies~\cite{Witten:1985bt}. While the same calculation for the $(E_8 \times E_8) \rtimes \mathbb{Z}_2$ heterotic string approximated to $\Omega^{Spin}_{11}(BE_8)$ lead to a vanishing group, this is not the end of the story as the full computation has to take the Bianchi identity of the heterotic string into account and results in a non-trivial group~\cite{Debray:2023rlx}.
Still, by relying on the Segal-Stolz-Teichner conjecture~\cite{Stolz_Teichner_2004} the authors of~\cite{Tachikawa:2021mby} provide a general proof for absence of global symmetries in both supersymmetric heterotic string theories. Also, the expected absence of global anomalies in type I string theory has been confirmed in~\cite{Freed:2000ta} by utilizing KO-theory. 

The ko-homology subsector of the cobordism groups of dimension 9 and 10 once again can be linked to D$p$-branes. In dimension 9 we expect a correspondence to particle-like defects, whereas dimension 10 should classify instanton effects. Specifically, we meet the gravitational and gauge-theoretic instanton of~\cite{Witten:1985xe} classified by $ko_{10}(pt) \subset \Omega^{Spin}_{10}(BSs(32))$ and $\pi_{10}(BSs(32)) \subset \Omega^{Spin}_{10}(BSs(32))$ respectively. Those groups are detected by the mod 2 index of the Dirac operator and $\int_{S^{10}} trF^5 \in \mathbb{Z}_2$. 
The first one is argued to be gauged in~\cite{Witten:1985xe} as this particular mod 2 index is always even in string theory.
The latter one would lead to a similar anomaly (in 9d) as the prime example $\pi_4(SU(2))$~\cite{Witten:1982fp}.
The corresponding heterotic instantons are instrumental to resolving the origin of Shenker's $1/g_s$-effects in heterotic string theory~\cite{HamburgLondonMunich}. 

\section{Conclusions}
The conjectured incompatibility of quantum gravity with global symmetries leaves distinct imprints in its topological sector. The Cobordism Conjecture is a recent formalization of this general principle.
In particular it relates non-trivial cobordism groups to higher-form global symmetries in an effective field theory coupled to gravity, whose physics gets encoded in both the tangential structure and the background space of the relevant cobordism groups.
In this work we specifically took a look at the consequences of the Cobordism Conjecture for type I and its S-dual formulation as the $Spin(32)/\mathbb{Z}_2$, i.e.\ $Ss(32)$ in an unambiguous language, heterotic string theory.

To this end we computed the mod 2 cohomology of the classifying space for $Ss(32)$ $H^n(BSs(32), \mathbb{Z}_2)$ via the Eilenberg-Moore spectral sequence in order to feed the Adams Spectral sequence to reach our final goal  of calculating $\Omega^{Spin}(BSs(32))$. 
The physics behind the nontrivial spin cobordism groups can be nicely tracked through its ko-homology building blocks. Here, we observe that the splitting principle $ko_n(BSs(32)) = ko_n(pt) \oplus \widetilde{ko}_n(BSs(32))$ splits the charges classified by the K-theory groups into the ones arising from the Dirichlet-Dirichlet (Neumann-Neumann for $ko_0(pt)$) sector and Dirichlet-Neumann sector respectively. It should be stressed that at this level of refinement we specifically observe the open string sector of both type I/HO string as we account for the objects the endpoint(s) of the type I fundamental open string and the HO D-string live on. This matches nicely with~\cite{Hull:1998he} and will be explored further in~\cite{HamburgLondonMunich}. 
The close relation to ko-homology also carries over to the observation that the interpretation connects nicely to type I/HO string compactifications known as compactifications without vector structure.
Furthermore, we have also seen that for string theory setups with multiple simultaneously non-trivial cobordism groups experience stronger constraints than just from (K-theoretic) tadpole cancellation as it can only account for cancellation of a diagonal component of the groups. 
In the future, clearly the twisted string cobordism groups encoding the Bianchi identity of type I/HO string theory directly in the tangential structure have to be calculated. Also,
extending the computations of~\cite{Basile:2023knk} to $\Omega_n^{String-((Spin(16) * Spin(16))\rtimes \mathbb{Z}_2)}$\footnote{$(Spin(16) * Spin(16))\rtimes \mathbb{Z}_2$ was brought forward in~\cite{McInnes:1999pt} as the refinement to $O(16) \times O(16)$ to account for one of the many subtleties involving the non-supersymmetric heterotic string. Concretely, it is a cover for both $(Ss(16) \times Ss(16))\rtimes \mathbb{Z}_2$, which embeds into $(E_8 \times E_8)\rtimes\mathbb{Z}_2$, and the other quotient $(\frac{Spin(16) \times Spin(16)}{\mathbb{Z}_2 \times \mathbb{Z}_2})\rtimes \mathbb{Z}_2$, which embeds into $Ss(32)$. Hopefully, the calculation we have detailed in this work can facilitate the computations for all three groups.} would be very interesting as it could shed some light on non-perturbative objects of the unique non-supersymmetric heterotic string theory.

\section*{Acknowledgments}
I would like to thank R.~\'Alvarez-Garc\'ia, I.~Basile, R.~Blumenhagen, N.~Cribiori, A.~Debray, J.~Leedom, J.~McNamara and N.~Righi for many insightful discussions. Finally, I want to thank A.~Makridou for very helpful discussions and comments on the draft.

\clearpage

\appendix

\section{The Steenrod algebra}
\label{app_steenrodsq}

In this appendix, we collect some useful facts about Steenrod squares and the Steenrod algebra. For a nice, pedagogical review with a particular emphasis on the Adams spectral sequence, we refer the reader to~\cite{Beaudry:2018ifm}.
A very useful, general discussion on the topic of cohomology operations and specifically the Steenrod algebra can be found in~\cite{mosher2008cohomology}.
Let's start with an axiomatic definition of the Steenrod squares.
A cohomology operation of degree $i$ is a map
\begin{equation}
H^{n}(X;\mathbb{Z}_2)\to H^{n+i}(X;\mathbb{Z}_2).
\end{equation}
It is said to be stable if it commutes with the suspension isomorphism. Steenrod squares, $Sq^i$, are stable cohomology operations of degree $i$ satisfying the following properties, for any $i\geq 0$:
\begin{itemize}
\item[\bf 0.] $Sq^i$ is a natural homomorphism $H^{n}(X;\mathbb{Z}_2)\to H^{n+i}(X;\mathbb{Z}_2)$;
\item[\bf 1.] For $i<j$ $Sq^i(x)=0$, for all $x \in H^j(X;\mathbb{Z}_2)$; 
\item[{\bf 2.}] $Sq^i(x)=x \cup x$, for all $x \in H^i(X;\mathbb{Z}_2)$;
\item[{\bf 3.}] $Sq^0={\rm Id}$;
\item[{\bf 4.}] $Sq^1= \beta$ is the Bockstein homomorphism associated to the short exact sequence $ 0 \to \mathbb{Z}_2 \to\mathbb{Z}_4 \to \mathbb{Z}_2 \to 0$;
\item[{\bf 5.}] Cartan formula: $Sq^i(x\cup y) = \displaystyle\sum_{m+n=i} Sq^m(x) \cup Sq^n(y)$.
\item[{\bf 6.}] Adem relation: $Sq^i \circ Sq^j = \displaystyle \sum_{k=0}^{\lfloor i/2 \rfloor} \left(\begin{array}{c}j-k-1\\i-2k\end{array}\right)_{\mod 2} Sq^{i+j-k}\circ Sq^k$, \\ for $0<i < 2j$\,.
\end{itemize}
Then the Steenrod algebra $\mathcal{A}$ is a $\mathbb{Z}_2$ tensor algebra, whose elements are polynomials in $Sq^{i}$ satisfying both $Sq^0 = {\rm Id}$ and the Adem relations.
Importantly, $\mathcal{A}$ is generated by just $Sq^{2^n}$ with $n \geq 0$.
Furthermore, even though the Steenrod algebra is infinitely generated, in each degree it is finitely generated. These subalgebras, denoted $\mathcal{A}_n$,
are then generated by $Sq^1$, \dots, $Sq^{2^n}$.

\bibliography{references}  
\bibliographystyle{utphys}

\end{document}